\DeclarePairedDelimiter\floor{\lfloor}{\rfloor}
\newtheorem{thm}{Theorem}
\newtheorem{lem}{Lemma}
\newtheorem{cor}{Corollary}
\newtheorem{prop}{Proposition}
\newtheorem{exmp}{Example}
\newtheorem{rem}{Remark}
\newtheorem{prob}{Problem}
\newtheorem{definition}{Definition}
\def\F{\mathbb F}
\def\Z{\mathbb Z}
\def\C{\mathbb C}
\def\P{\mathbb P}
\def\Q{\mathbb Q}
\def\X{\mathcal X}
\newcommand{\supp}{{\rm supp}}
\newcommand{\codeL}{{\mathcal L}}
\newcommand{\res}[2]{\mathrm{res}_{#1} \left( #2 \right)}
\DeclareMathOperator{\x}{\mbox{\bf x}}
\DeclareMathOperator{\y}{\mbox{\bf y}}
\DeclareMathOperator{\wt}{\mbox{wt}}
\begin{document}
\begin{frontmatter}          
%
\title{Weight distributions, zeta functions and Riemann hypothesis for linear and algebraic geometry codes}

\runningtitle{Zeta functions for linear codes}


%
\author[A]{\fnms{Artur} \snm{Elezi}}
\author[B]{\fnms{Tony} \snm{Shaska}}
\address[A]{Department of Mathematics, \\American  University, \\Washington DC, 20016}
\address[B]{Department of Mathematics, \\Oakland University, \\Rochester, MI, 48309.}

\maketitle

\begin{abstract}
This is a survey on weight enumerators, zeta functions and Riemann hypothesis for linear and algebraic-geometry codes.
\end{abstract}

\begin{keyword}
algebraic geometry  codes \sep superelliptic curves \sep weight enumerator \sep zeta functions
\end{keyword}

\end{frontmatter}



\section{Introduction}

For more than $150$ years, generations of mathematicians have been mesmerized by and hard at work to muster/solve the Riemann zeta function and Riemann Hypothesis-Weil Conjectures in various dimensions. The classic one, over $Spec(\Z)$ is still unsolved. A. Weil successfully completed the task for curves over finite fields $\F_q$ around mid $19$-th century. Amazingly, in the past 15-20 years or so yet another context has been provided for Riemann zeta function and Riemann hypothesis: linear codes! While the weight distributions/enumerators of linear codes are important in themselves, they give rise to analogous zeta functions and Riemann Hypothesis. The connections between these different settings are beautiful. The goal of this survey is to provide a short and gentle introduction to zeta functions and the Riemann hypothesis for linear codes.

This survey is organized as follows:  In section two we introduce the basics of linear codes, and their their weight enumerators. Next, we provide an elementary proof of MacWilliams' identity for dual codes. Finally we discuss general solutions of MacWilliams' equations, and as a special case obtain the weight enumerator of an MDS code.  

In section three, we introduce and provide some historical background and motivation for zeta functions of linear codes. Various functional identities for zeta polynomials and zeta functions have been provided. Next, Riemann Hypothesis is introduced for general virtual and formal weight enumerators - a straight forward generalization of weight enumerators. Last, a discussion on (virtual) codes that satisfy Riemann Hypothesis follows.

in section four, we introduce generalities of algebraic curves and algebraic geometry (AG) codes that arise from them. Following Duursma \cite{D6}, determining the weight distribution of AG codes has been reformulated and discussed as the problem of the effective divisors distributions over divisor classes where we can take advantage of the group structure. Complete results follow for rational and elliptic curves. 

We have tried to address an audience of beginners, especially graduate students. From this point of view, we have selected proofs that are accessible, and whenever possible rather elementary. To the extent possible, the survey is self-contained and a few open problems have been presented. 


\bigskip

\noindent \textbf{Notations:}  Throughout this paper $\F_q$ will denote a field of $q$ elements. By a curve $\X_g$ we denote an irreducible, projective, smooth algebraic curve of genus $g$ over $\F_q$. The set of rational points of $\X_g$ over $\F_{q^r}$ will be denoted by $\X_g(\F_{q^r})$. A linear code will be denoted by $C$.  The cardinality of a set $S$ will be denoted by $\# S$.


\section{Codes and their weight enumerators}

\subsection{Codes} 
Let $q$ be a prime power. A $q$-ary code $C$ of length $n$  is a subset of $\F_q^n$.  Elements of $C$ are called \textit{codewords}, those of $\F_q^n$ are called \textit{words}. The \textit{Hamming distance} between ${\x}=(x_1, \dots ,x_n)$ and $\y =(y_1, \dots ,y_n)$ is defined as  
\[ d( \x, \y):=\#\{i: x_i\neq y_i\}.\]
The smallest of the distances between distinct codewords is called \textit{the minimum distance} of the code $C$. The \textit{weight} of a word ${\x}=(x_1, \dots ,x_n)$ is defined as 
\[\wt( \x ):=\#\{i: x_i\neq 0\}.\]
A code $C$ is called \textit{linear} if it is a linear subspace of $\F_q^n$. For such a code, \textit{the minimum distance equals the smallest of the weights of nonzero codewords} of $C$. Fix a basis $\{ r_1, r_2, \dots r_k\}$ of $C$. The maximal rank $k\times n$ matrix $\bf G$ whose rows are $r_1, r_2, \dots r_k$ is called the \textit{generator matrix} of $C$. Each codeword might be identified with a vector 
\[ \x  = (x_1,x_2, \dots ,x_k)\in \F^k_q.\]
\textit{Encoding} $\x$ via the code $C$ means multiplying it to the right by $\bf G$, i.e $\x$ is encoded to 
\[ {\x}{\bf G}=x_1 r_1+ \dots +x_k r_k,\]
which is an element of $C$. Instead of the $k$-string $\x$, the encoded string ${\x}{\bf G}$ is transmitted. The quantity $k/n$ is called \textit{the rate} of the linear code $C$. 

Let ${\bf a} \cdot {\bf b} =a_1 b_1+ \dots +a_n b_n$ be the standard dot product in $\F_q^n$. The dual of $C$ is defined as 
\[ C^{\perp}:=\{ \x    \in \F_q^n~|~ {\x \y} =0~ \text{for all} ~\y\in C\}.\]
It is a linear code of length $n$ and dimension $n-k$. The generator matrix $\bf H$ of $C^{\perp}$ is called \textit{the parity check} matrix of $C$. It has dimensions $(n-k)\times n$ and satisfies 
\[ C= \{   \x \in \F_q^n~:~{\bf H}   \x =  0\}.\]
A linear code $C$ is called \textit{self-orthogonal} if and only if  $C\subset C^{\perp}$. It is called \textit{self-dual} if and only if  $C= C^{\perp}$. Self-dual codes that arise from algebraic geometry constructions, are of special importance for various reasons, one of them being their use in quantum computing (\cite{E}, \cite{ES1}, \cite{ES2}). 

If a codeword $\x=(x_1, \dots x_n)$ is sent and a word $\y=(y_1, \dots ,y_n)$ is received, the error made during transmission is the word $(e_1, \dots ,e_n)=e:=\y-\x$. Notice that %
\[ d( \x, \y)=\#\{i: x_i\neq y_i\}=\#\{i: e_i\neq 0\}=\wt(e).\]
Nearest neighbor \textit{decoding} of a received word $\y$ is the "closest" codeword, i.e. the codeword $\x$ such that the Hamming distance $d(\x,\y)$ (or alternatively the error weight $\wt(e)$) is minimum. Note that $\x$ may not be unique. If $d-1$ or fewer errors are made in transmitting a codeword $\x\in C$, the received word $\y$ is no longer in $C$ (otherwise $d(\x, \y)\leq d-1$). Hence the receiver knows that errors have occurred during this transmission. It is said that $C$ \textit{detects} up to $d-1$ errors. For each word $\y\in \F_q^n$, there is only one codeword $\x\in C$ of distance up to $\left\lfloor{(d-1)/2}\right\rfloor$. Indeed, if there were two, the distance between them would be less than $d$ by the triangle inequality. It follows that nearest neighbor decoding is successful if up to  $\left\lfloor{(d-1)/2}\right\rfloor$  errors have occurred. It is said that $C$ \textit{corrects} up to $\left\lfloor{(d-1)/2}\right\rfloor$  errors.

A $q$-ary linear code of length $n$, dimension $k$ and minimum distance $d$ is denoted by $[n,k,d]_q$. As established above, nearest neighbor decoding detects up to $d-1$ errors and corrects up to $\left\lfloor{(d-1)/2}\right\rfloor$ errors.

\subsection{Distance and weight enumerators} For a subset $S\subset \F_q^n$ and $0\leq i\leq n$, let 
\[ A_i:=\#\{{\bf c}\in S~:~\wt({\bf c})=i\},~B_i:=\displaystyle{\frac{1}{\#(S)}}\#\{({\bf c}_1,{\bf c}_2)\in S\times S~:~d({\bf c}_1,{\bf c}_2)=i\}.\]
The vectors $(A_0,\ldots,A_n)$ and $(B_0,\ldots,B_n)$ are called respectively the \textit{weight distribution} and \textit{the distance distribution} of $S$. 

\begin{definition}

(a) The \textit{Hamming weight enumerator} of $C$ is the generating function 
\[ W_S(z):=\sum_{i=0}^n A_iz^i \]
or its homogenization 
\[A_S(x,y):=x^nW_S(y/x)=\sum_{i=0}^n{A_ix^{n-i}y^i}=\sum_{c\in S}x^{n-\wt(c)}y^{\wt(c)}\in \Z[x,y] .\]

(b) The \textit{Hamming distance enumerator} of $C$ is the generating function  
\[ D_S(z)=\sum_{i=0}^n B_iz^i \]
or its homogenization
\[
\begin{split}
B_S(x,y) & :=x^nD_S(y/x)=\sum_{i=0}^n{B_ix^{n-i}y^i}\\
& =\sum_{c_1,c_2\in S}x^{n-d(c_1,c_2)}y^{d(c_1,c_2)}\in \Q [x,y]. 
\end{split}
\] 
\end{definition} 
\noindent For linear codes the two notions coincide.

\begin{prop}
If $C$ is a linear code then $W_C(z)=D_C(z)$.
\end{prop}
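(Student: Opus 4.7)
The plan is to show that $A_i = B_i$ for every $0 \le i \le n$, from which $W_C(z) = D_C(z)$ follows immediately by comparing coefficients in the generating function definitions.

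First I would exploit the defining property of a linear code, namely that $C$ is closed under addition (and subtraction). For any pair $(c_1, c_2) \in C \times C$, the difference $c_1 - c_2$ again lies in $C$, and I would use the identity
\[ d(c_1, c_2) = \#\{i : (c_1)_i \neq (c_2)_i\} = \#\{i : (c_1 - c_2)_i \neq 0\} = \wt(c_1 - c_2), \]
which was already noted in the excerpt when discussing the error vector $e = \y - \x$. This turns the distance count into a weight count on the single vector $c_1 - c_2 \in C$.

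Next I would set up the counting bijection. For each fixed $c \in C$ with $\wt(c) = i$, I would count the pairs $(c_1, c_2) \in C \times C$ satisfying $c_1 - c_2 = c$. Since $c_2$ can be any element of $C$ and then $c_1 = c_2 + c$ is forced (and lies in $C$ by linearity), there are exactly $\#C$ such pairs. Grouping pairs by their difference $c = c_1 - c_2$ partitions $\{(c_1,c_2) : d(c_1,c_2) = i\}$ into fibers of size $\#C$ indexed by weight-$i$ codewords, giving
\[ \#\{(c_1,c_2) \in C \times C : d(c_1,c_2) = i\} = \#C \cdot A_i. \]
Dividing by $\#C$ per the definition of $B_i$ yields $B_i = A_i$, and summing $A_i z^i = B_i z^i$ gives $W_C(z) = D_C(z)$.

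There is no real obstacle; the entire argument rests on linearity providing closure under subtraction, which converts the two-variable distance enumeration into a one-variable weight enumeration via a trivial fibration. The only thing to be careful about is keeping the normalization factor $1/\#C$ in the definition of $B_i$ straight, so that the fiber-size cancellation is clean.
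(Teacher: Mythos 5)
Your proof is correct and follows essentially the same route as the paper: both arguments use the identity $d(c_1,c_2)=\wt(c_1-c_2)$ together with closure of $C$ under subtraction to show that the pairs at distance $i$ fiber over the weight-$i$ codewords with fibers of size $\#C$, giving $B_i=A_i$. Your write-up just spells out the fibration a bit more explicitly than the paper does.
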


\begin{proof}
Notice that $d({\x},\y)=\wt(\bf{x}-\y).$ Hence, if $\wt({\bf a})=i$ then 
\[ \forall {\bf c}\in C,~d({\bf c},{\bf a+c})=i.\]
It follows that 
\[ \#\{({\x,y})\in C\times C~:~d({\bf c_1,c_2})=i\}=\#(C)A_i \] 
Now the proposition follows easily.
\end{proof}

\begin{exmp}
Consider the repetition code $i_2=\{00,11\}$. It is a binary self-dual code with weight enumerator 
\[ A_{i_2}(x,y)=x^2+y^2.\]
\end{exmp}

\begin{exmp}
The $[4,2,3]_3$ tetra code $t_4$ generated by $\{1110,0121\}$ has 
\[ A_{t_4}(x,y)=x^4+8xy^3.\]
\end{exmp}

 \noindent Note that the weight enumerator of an $[n,k,d]_q$-code 
 \[ A_C(x,y)=x^n+\sum_{i=d}^nA_ix^{n-i}y^i\]
depends only on the parameters $n,d$ and not on $q$. 
  
\begin{definition}
 Two codes are said to be \textit{formally equivalent} if they have the same weight distribution. 
\end{definition}
 
\noindent The following are natural problems in coding theory. 

\begin{prob} 
Given the weight enumerator $A(x, y)$ of a code, how many non-equivalent codes are there corresponding to  $A(x, y)$?  
\end{prob}

\begin{prob}
Given a homogeneous polynomial with nonnegative integer coefficients 
\[ F(x,y)=x^n+\sum_{i=1}^{n} f_ix^{n-i}y^i,\]
under what conditions does there exists a linear code $C$ such that $A_{C}(x,y)=F(x,y)$?
\end{prob}

For examples of non-equivalent codes corresponding to the same weight enumerator the reader can check \cite{s-shor},\cite{cod-1}, and \cite{cod-2}.

\subsection{Dual codes and their weight enumerators.} The weight distribution of the dual $C^{\perp}$ can be recovered from the weight distribution of $C$ by applying a linear transformation. 

\begin{thm} [MacWilliams' Identity] 
For an $[n,k,d]_q$-code $C$
\[W_{C^{\perp}}(z)=\frac{[1+(q-1)z]^n}{q^k}W_C\left(\frac{1-z}{1+(q-1)z}\right)=\frac{1}{q^k}\sum_{i=0}^nA_i[1+(q-1)z]^{n-i}(1-z)^i.\]
\noindent Equivalently
\[ A_{C^{\perp}} (x,y)=\frac{1} {q^k}   \, A_C \left(x+(q-1)y, x-y \right) \]
\end{thm}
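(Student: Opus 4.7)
The plan is to prove the homogeneous form $A_{C^{\perp}}(x,y) = \frac{1}{q^k} A_C(x+(q-1)y,\, x-y)$ and deduce the first displayed identity at the end by dehomogenizing. The approach I would take is discrete Fourier analysis on $\F_q^n$. Choose any nontrivial additive character $\chi \colon \F_q \to \C^{*}$ (for instance $a \mapsto \exp(2\pi i\, \mathrm{Tr}(a)/p)$ where $p$ is the characteristic), and for any function $f \colon \F_q^n \to \C[x,y]$ define its Fourier transform
\[
\hat{f}(\mathbf{v}) \;=\; \sum_{\mathbf{u} \in \F_q^n} \chi(\mathbf{u} \cdot \mathbf{v})\, f(\mathbf{u}).
\]
The central ingredient will be the Poisson summation identity for linear codes,
\[
\sum_{\mathbf{v} \in C} \hat{f}(\mathbf{v}) \;=\; \#(C) \sum_{\mathbf{u} \in C^{\perp}} f(\mathbf{u}),
\]
which in turn reduces to the orthogonality relation $\sum_{\mathbf{v} \in C} \chi(\mathbf{u} \cdot \mathbf{v}) = \#(C)$ when $\mathbf{u} \in C^{\perp}$ and $0$ otherwise. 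The ``otherwise'' case is where a little thought is needed: when $\mathbf{u} \notin C^{\perp}$ the map $\mathbf{v} \mapsto \chi(\mathbf{u} \cdot \mathbf{v})$ restricts to a nontrivial character of the finite abelian group $C$, so its sum vanishes.

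Next I would apply this to the specific weight-enumerating function $f(\mathbf{u}) = x^{n-\wt(\mathbf{u})} y^{\wt(\mathbf{u})}$, whose sum over $C^{\perp}$ is by definition $A_{C^{\perp}}(x,y)$. The key structural observation is that $f$ factors coordinatewise as $f(\mathbf{u}) = \prod_{i=1}^{n} g(u_i)$ with $g(0)=x$ and $g(a)=y$ for $a\neq 0$, so $\hat{f}$ also factors: $\hat{f}(\mathbf{v}) = \prod_{i=1}^{n} \hat{g}(v_i)$. A short single-variable calculation gives $\hat{g}(0) = x + (q-1)y$, while for $v \neq 0$,
\[
\hat{g}(v) \;=\; x + y \sum_{u \in \F_q \setminus \{0\}} \chi(uv) \;=\; x - y,
\]
using that a nontrivial additive character of $\F_q$ sums to zero over all of $\F_q$. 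Multiplying the $n$ factors gives $\hat{f}(\mathbf{v}) = (x+(q-1)y)^{n-\wt(\mathbf{v})}(x-y)^{\wt(\mathbf{v})}$.

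Feeding this back into Poisson summation yields
\[
q^k \cdot A_{C^{\perp}}(x,y) \;=\; \sum_{\mathbf{v} \in C} (x+(q-1)y)^{n-\wt(\mathbf{v})}(x-y)^{\wt(\mathbf{v})} \;=\; A_C\bigl(x+(q-1)y,\, x-y\bigr),
\]
which is the homogeneous identity; setting $x=1$ and $y=z$ recovers the first displayed form. The main obstacle is honestly the orthogonality/Poisson step, which is the one place something nontrivial is invoked about finite-dimensional $\F_q$-subspaces together with a nontrivial character of $\F_q$; once that is pinned down, everything else is a clean factorization of $\hat{f}$ plus the two-line single-variable calculation, and no hypothesis on $C$ beyond linearity and $\#(C) = q^k$ is actually used.
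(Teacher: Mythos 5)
Your proof is correct, but it takes a genuinely different route from the one in the paper. You use the classical character-theoretic argument: pick a nontrivial additive character $\chi$ of $\F_q$, prove Poisson summation for the pair $(C, C^{\perp})$ via the orthogonality relation $\sum_{\mathbf{v}\in C}\chi(\mathbf{u}\cdot\mathbf{v}) = \#(C)$ or $0$, and then exploit the coordinatewise factorization of $f(\mathbf{u}) = x^{n-\wt(\mathbf{u})}y^{\wt(\mathbf{u})}$ to compute $\hat f$ explicitly. All the steps check out, including the one subtle point you flag (that $\mathbf{v}\mapsto\chi(\mathbf{u}\cdot\mathbf{v})$ is a nontrivial character of $C$ when $\mathbf{u}\notin C^{\perp}$, because $\mathbf{u}\cdot C$ is then a nonzero subspace of $\F_q$, hence all of $\F_q$). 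The paper instead deliberately avoids characters and argues by induction on the length $n$: it introduces the MacWilliams transform $U_S(z)$, handles decomposable codes by multiplicativity, and for indecomposable codes splits $C$ as $C_0\oplus\F_q\cdot\mathbf{a}$ along the last coordinate, projects to length $n-1$, and tracks how $C^{\perp}$ decomposes into the two pieces $C_0'$ and $C_1'$. What your approach buys is brevity, conceptual transparency, and immediate generality (it works verbatim for complete or joint weight enumerators, and as you note uses nothing but linearity and $\#(C)=q^k$); what it costs is the prerequisite of additive characters of $\F_q$, including the trace map when $q$ is not prime. The paper's inductive proof is longer and requires a careful case analysis, but stays entirely within elementary linear algebra over $\F_q$, which fits the survey's stated aim of accessibility to beginners.
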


\begin{proof}  There are many proofs of this theorem, we present an elementary approach (see \cite{H}). For any $S\subset \F^n_q$, define 
\[ U_S(z):=[1+(q-1)z]^n W_S\left(\frac{1-z}{1+(q-1)z}\right).\]
It is often called the \textit{MacWilliams' transform} of $W_C(z)$. With this definition, MacWilliams' Identity reads: 
\[ W_{C^{\perp}}(z)=\frac{1}{q^k}U_C(z).\]
First, some preliminaries. Note that if $S,T$ are disjoint then 
\[ W_{S\cup T}(z)=W_S(z)+W_T(z)~\text{and}~U_{S\cup T}(z)=U_S(z)+U_T(z).\]
If a code $C$ is \textit{decomposable}, i.e. if up to a permutation of coordinates, $C$ is a direct product $C_1\times C_2$ of linear codes with positive length, 
then $C^{\perp}=C_!^{\perp}\times C_2^{\perp}$ and 
\begin{equation} \label{multiplicativity}
W_C(z)=W_{C_1}(z)W_{C_2}(z),~~U_C(z)=U_{C_1}(z)U_{C_2}(z).
\end{equation}
Back to the proof of MacWilliams' Identity.  We use induction on the length $n$ of the code $C$. For $n=1$ there are two cases: 

(a) $C=\{0\},~C^{\perp}=\F_q,~W_C(z)=1,~W_{C^{\perp}}(z)=1+(q-1)z$. 

(b) $C=\F_q,~C^{\perp}=\{0\},~W_C(z)=1+(q-1)z,~W_{C^{\perp}}(z)=1$.

In each of these cases, MacWilliams' Identity is easily verified directly. For example, in case (b): 
\[ \frac{1}{q^k}U_C(z)=\frac{1}{q}(1+(q-1)z)\left(1+(q-1)\frac{1-z}{1+(q-1)z}\right)=1=W_{C^{\perp}}(z).\]

Assume that MacWilliams' Identity hold for codes of length less than $n>1$ and let $C$ be a code of length $n$. If $C$ is decomposable, the assertion follows from the induction hypothesis and the multiplicativity Eq.~\eqref{multiplicativity} of $W_C$ and $U_C$. If $C$ is indecomposable, neither $C$ nor $C^{\perp}$ contains a word of weight $1$. Let \break $C_0=\{{\bf c}\in C~|~c_n=0\},~C_1=C-C_0$ and 
\begin{equation} \label{nonzeroC}
{\bf a}:=\{a_1, \dots a_{n-1},1)=({\bf a}',1)\in C
\end{equation}
such that $C=C_0\oplus \F_q\cdot {\bf a}$. The projection map 
\[ \pi:\F_q^n\rightarrow \F_q^{n-1},~\pi(x_1,\ldots,x_{n-1},x_n)=(x_1,\ldots,x_{n-1})\]
is injective on $C$, otherwise there will be two elements of $C$ whose Hamming distance is $1$. It follows that $\pi(C)$ is a disjoint union of $\pi(C_0$ and $\pi(C_1)$, hence 
\begin{equation}
W_{\pi(C)}(z)=W_{\pi(C_0)}(z)+W_{\pi(C_1)}(z)=W_{\pi(C_0)}(z)+\frac{W_{C_1}(z)}{z}
\end{equation}
\noindent and in turn
\begin{equation} \label{splittingU}
\begin{split}
U_{\pi(C)}(z) & = U_{\pi(C_0)} (z) + \frac{ \left[1+(q-1)z \right]^n} {1-z} \, W_{C_1}  \left(  \frac{1-z} {1+(q-1)z}  \right) \\
& = U_{\pi(C_0)}(z)+(1-z)^{-1}U_{\pi(C_1)}(z). \\
\end{split}
\end{equation}
\noindent Assume that $(b_1,b_2, \dots ,b_{n-1},b_n)\in C^{\perp}$. If $b_n=0$ then $(b_1, \dots ,b_{n-1})\in \pi(C)^{\perp}$. If $b_n\neq 0$ then 
\begin{itemize}
\item For any $(c_1, \dots c_{n-1},0)\in C_0$ we have $b_1c_1+ \dots +b_{n-1}c_{n-1}=0$. It follows that ${\bf b}':=(b_1,b_2, \dots ,b_{n-1})\in {\pi(C_0)}^{\perp}$.
\item $a_1b_1+ \dots +a_{n-1}b_{n-1}+b_n=0$. Recall from Eq.~\eqref{nonzeroC} that ${\bf a}'=(a_1,a_2, \dots ,a_{n-1})$ and let ${\bf a'b'}$ is the standard dot product in $\F_q^{n-1}$. Then $b_n=-{\bf a'b'}$. Since $b_n\neq 0$ and ${\bf a}'=\pi({\bf a})\in \pi(C)$. It follows that ${\bf b}'\notin \pi(C)^{\perp}$.
\end{itemize}

\noindent It follows that $C^{\perp}$ is a disjoint union of $C'_0:=\{({\bf b}',0)~|~{\bf b}'\in {\pi(C)}^{\perp}\}$ and 
\[ C'_1:=\{({\bf b}',-{\bf a'b'})~|~{\bf b}'\in\pi(C_0)^{\perp}-\pi(C)^{\perp}\}\]
and therefore 
\begin{equation}
\begin{split}
W_{C^\perp}(z) & =W_{\pi(C)^\perp}(z) + z \left(W_{\pi(C_0)^\perp}(z)-W_{\pi(C)^\perp}(z) \right)\\
& =(1-z)W_{\pi(C)^\perp}(z)+zW_{\pi(C_0)^\perp}(z)
\end{split}
\end{equation}
\noindent Notice that $\text{dim}~\pi(C)=k~\text{and}~\text{dim}~\pi(C_0)=k-1$. Applying the inductive hypothesis in the last identity we obtain
\[W_{C^\perp}(z)=\frac{1-z}{q^k}U_{\pi(C)}(z)+\frac{z}{q^{k-1}}U_{\pi(C_0)}(z).\]
We now use identity \eqref{splittingU} and get 
\[
\begin{split}
W_{C^\perp}(z) & = \frac{1}{q^k}  \left(   \left[1+(q-1)z\right]     U_{\pi(C_0)}(z)+U_{C_1}(z)\right) \\
& =\frac{1}{q^k}(U_{C_0}(z)+U_{C_1}(z))=\frac{1}{q^k}U_C(z).
\end{split}
\]
as desired. 
\end{proof}

\begin{exmp}
The binary repetition code $i_2=\{00,11\}$ is self-dual. Its weight enumerator $A(x,y)=x^2+y^2$ is left unchanged when $x,y$ are replaced by 
\[ \frac {x+y} {\sqrt 2}   \quad \textit{ and } \quad \frac {x-y} {\sqrt 2} .\]
\end{exmp}

\begin{exmp}
The repetition code $C$ over $\F_q$ has weight enumerator 
\[ A_C(x,y)=x^n+(q-1)y^n.\]
Its dual code has weight enumerator 
\[ A_{C^\perp}(x,y)=\frac{1}{q}[(x+(q-1)y)^n+(q-1)(x-y)^n].\]
Note that $A_C=A_{C^\perp}$ when $n=2$.
\end{exmp}

\begin{definition}
A linear code is said to be \textit{formally self-dual} if 
\[ A_C(x,y)=A_{C^{\perp}}(x,y).\]
\end{definition}

\begin{exmp}  Let $C$ be the binary code generated by
 \begin{equation}
G=\left(\begin{array}{cccccccccc}
1 & 1 & 1 & 1 & 0 & 0  & 0 & 0 & 0 & 0\\
0 & 0 & 0 & 0 & 1 & 1  & 1 & 1 & 1 & 1 \\
1 & 0 & 0 & 0 & 1 & 1  & 1 & 0 & 0 & 0\\
0 & 1 & 0 & 0 & 1 & 1  & 0 & 1 & 0 & 0\\
0 & 0 & 1 & 0 & 1 & 0  & 1 & 0 & 1 & 0\\

\end{array} \right).
\end{equation}
The weight distributions of $C$ and $C^\perp$ are the same: 
\[ (1,0,0,0,15,0,15,0,0,0,1),\]
yet, $C\neq C^\perp$. So, this code is formally self-dual but not self-dual.
\end{exmp}

\subsection{MDS codes and their weight enumerators} Let $C$ be an $[n,k,d]_q$ code. By the Singleton's bound, $d\leq n+1-k$. The dual code $C^{\perp}$ has parameters $[n,n-k,d^{\perp}]$ with $d^{\perp}\leq k+1$. 

\begin{definition} The genus of an $[n,k,d]_q$-code $C$ is defined by 
\[ \gamma(C)=n+1-k-d.\]
\end{definition}

\noindent Notice that for a self-dual code $C$, its length is even and its dimension is $n/2$, hence 
\[\gamma(C)=n/2+1-d.\]

\begin{definition}
An $[n,k,d]_q$ code $C$ is called MDS (maximum distance separating) if and only if  its genus is $0$, i.e. if and only if  it achieves its Singleton bound.
\end{definition}

\noindent In light of the above definition, the genus measures how far the code is from being MDS. It is well known that if there exists an MDS code with parameters $[n,k,n-k+1]_q$ then $n\leq q+k-1$. 

\begin{prop}
A code $C$ is MDS iff  $C^{\perp}$ is MDS, i.e. $\gamma(C)=0$ iff $\gamma(C^\perp)=0$.
\end{prop}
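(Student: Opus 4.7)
The plan is to avoid MacWilliams' identity and give a direct linear-algebra proof via the generator/parity-check matrices, which is cleaner than chasing coefficients through the MacWilliams transform.

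The first step is to establish the classical ``any $k$ columns'' characterization of MDS codes. Specifically, I would show that an $[n,k,d]_q$ code $C$ is MDS if and only if every $k$ columns of a generator matrix $\mathbf{G}$ are linearly independent over $\F_q$. The argument is short: by Singleton, $d \le n-k+1$, with equality if and only if no nonzero codeword has weight $\le n-k$. But a nonzero codeword of weight $\le n-k$ is precisely an element of the kernel of the projection of $C$ onto some set $I$ of $k$ coordinates; since $\dim C = k$, this projection is injective if and only if the corresponding $k \times k$ submatrix of $\mathbf{G}$ is invertible. So $C$ is MDS iff every $k \times k$ column submatrix of $\mathbf{G}$ is nonsingular.

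The second step is the parallel characterization using the parity-check matrix $\mathbf{H}$. Recall that $d(C)$ equals the smallest number of linearly dependent columns of $\mathbf{H}$, which has $n-k$ rows. Hence $d(C) \ge n-k+1$ (equivalently, $C$ is MDS) if and only if every $n-k$ columns of $\mathbf{H}$ are linearly independent, i.e., form an invertible $(n-k)\times(n-k)$ matrix.

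The third step just combines these observations. Since $\mathbf{H}$ is by definition a generator matrix of $C^{\perp}$, and $\dim C^{\perp} = n-k$, the characterization from the first step applied to $C^\perp$ says that $C^\perp$ is MDS if and only if every $n-k$ columns of $\mathbf{H}$ are linearly independent. Comparing with the second step, this is the same condition that characterizes $C$ being MDS, and the biconditional $\gamma(C)=0 \iff \gamma(C^\perp)=0$ follows immediately.

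The only subtle point, which I would flag but not belabor, is the equivalence between injectivity of a coordinate projection and invertibility of the corresponding column block of $\mathbf{G}$; the rest is bookkeeping. No genuine obstacle arises, and in particular one does not need to compute the explicit MDS weight distribution or invoke MacWilliams' identity to obtain the statement, although that route also works once one verifies that the MacWilliams transform of the generic MDS weight enumerator has vanishing coefficients in degrees $1, \dots, k$.
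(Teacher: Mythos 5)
Your proof is correct, and it takes a somewhat different route from the paper's. The paper proves only the forward direction explicitly (the converse being immediate from $C^{\perp\perp}=C$): assuming $C$ is MDS, it supposes a dual word of weight $\le k$ supported on $k$ coordinates, observes that the projection of $C$ onto those coordinates is an isomorphism onto $\F_q^k$ because $d=n-k+1$, and derives a contradiction from a nonzero linear form vanishing on all of $\F_q^k$. Your first step is essentially the same ingredient stated as a biconditional (injectivity of every $k$-coordinate projection, i.e.\ nonsingularity of every $k\times k$ column block of $\mathbf{G}$), but you then bring in a second characterization the paper does not use --- that $d(C)$ is the smallest number of linearly dependent columns of $\mathbf{H}$ --- and close the argument by noting that $\mathbf{H}$ is itself a generator matrix of $C^{\perp}$, so the two conditions coincide. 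What your packaging buys is symmetry: both implications fall out of a single identification of conditions, with no contradiction argument and no appeal to double duality. What the paper's version buys is self-containment: it needs only the projection observation and a one-line linear-algebra contradiction, without invoking the parity-check characterization of minimum distance (which, if you use it, you should at least state or prove, since the survey never records it). Your flagged subtle point --- injectivity of the coordinate projection versus invertibility of the column block --- is handled correctly, and neither proof needs MacWilliams' identity.
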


\begin{proof}
Let $C$ be an MDS code of dimension $k$ and minimum distance $d=n-k+1$. The dimension of its dual $C^{\perp}$ is $n-k$. Let $d^{\perp}$ denote the minimum distance of $C^{\perp}$. By the Singleton bound, $d^{\perp}\leq n-(n-k)+1=k+1$. We will show that $d^{\perp}\geq k+1$. Assume by way of contradiction that there is a word ${\bf c}\in C^{\perp}$ with weight at most $k$. Without loss of generality we assume that ${\bf c}=(c_1, \dots ,c_k,0,0, \dots ,0)$. It follows that for every word ${\bf b}\in C$ we have 
\begin{equation}\label{lineareq}
c_1b_1+ \dots +c_kb_k=0.
\end{equation}
\noindent Let $\pi:\F_q^n\rightarrow \F_q^k$ be the projection map that "forgets" the last $n-k$ coordinates. Since the minimum distance of $C$ is $n-k+1$, the map $\pi$ is an isomorphism of $C$ onto $\F_q^k$. But then, Eq.~\eqref{lineareq} represents a non degenerate linear form that vanishes on $\F_q^k$. This is a contradiction.
\end{proof}

\noindent Notice that 
\[\gamma(C)+\gamma(C^\perp)= n+2-d-d^\perp.\]
\noindent It follows from the proposition that $d+d^\perp=n+2$ iff both $C$ and $C^\perp$ are MDS, and $d+d^\perp\leq n$ iff none of them is. For a linear code $C$, $d+d^\perp\neq n+1$.

\begin{thm} \label{MacWilliams solutions} (Solutions of MacWilliams equations.)
Let $C$ be a linear code of length $n$ and minimum distance $d$. Let $d^\perp$ be the minimum distance of its dual $C^\perp$. If $C$ is MDS, then its weight distribution is
\[A_0=1,~\text{and}~A_i={n \choose i}(q-1)\sum_{m=0}^{i-d}{i-1 \choose m}(-1)^m q^{i-d-m}, ~d=n-k+1\leq i\leq n.\]
Otherwise, its weight distribution is determined by $A_d,A_{d+1},\ldots A_{n-d^\perp}$.
\end{thm}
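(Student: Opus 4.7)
\medskip
\noindent\textbf{Sketch of the proof.} For the MDS part, the crucial structural property is that every $k$ columns of a generator matrix of an MDS code are linearly independent. Consequently, for any subset $J \subseteq \{1,\ldots,n\}$ with $|J| = j$, the projection $\pi_{J^c}\colon C \to \F_q^{n-j}$ has rank $\min(k, n - j)$, so the number of codewords vanishing outside $J$ is
\[ M_j = q^{\max(0,\, j - d + 1)}, \]
which depends only on $j$. Möbius inversion on the Boolean lattice of subsets then expresses the number of codewords with support exactly $I$ (for $|I| = i$) as $\sum_{J \subseteq I}(-1)^{i - |J|}M_{|J|}$, and summing over all $I$ of size $i$, using the identity $\sum_{j=0}^i (-1)^{i-j}\binom{i}{j} = 0$ to absorb the $M_j = 1$ contributions, gives
\[ A_i = \binom{n}{i}\sum_{j=d}^i (-1)^{i-j}\binom{i}{j}\bigl(q^{j-d+1} - 1\bigr). \]
The stated closed form then follows by writing $q^{j-d+1} - 1 = (q-1)\sum_{l=0}^{j-d} q^l$, interchanging the order of summation, applying the identity $\sum_{m=0}^{k}(-1)^m\binom{i}{m} = (-1)^k\binom{i-1}{k}$, and substituting $m = i - d - l$.

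For the non-MDS case, I would invoke the MacWilliams identity in the expanded form
\[ q^k A_i^\perp = \sum_{j=0}^n P_i(j)\, A_j, \qquad P_i(j) = \sum_{s=0}^i (-1)^s (q-1)^{i-s}\binom{j}{s}\binom{n-j}{i-s}, \]
obtained by reading off the coefficient of $x^{n-i}y^i$ from $A_C(x + (q-1)y,\, x - y)$. A direct computation shows that $P_i(j)$ is a polynomial of degree exactly $i$ in $j$, with leading coefficient $(-q)^i/i!$. Given $A_0 = 1$, $A_1 = \cdots = A_{d-1} = 0$, and the prescribed values $A_d, \ldots, A_{n - d^\perp}$, the unknown coefficients $A_{n - d^\perp + 1}, \ldots, A_n$ number exactly $d^\perp$. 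Against these, the constraints $A_0^\perp = 1$ and $A_i^\perp = 0$ for $1 \leq i \leq d^\perp - 1$ supply a square $d^\perp \times d^\perp$ linear system whose coefficient matrix $\bigl(P_i(j)\bigr)_{0 \leq i \leq d^\perp - 1,\; n - d^\perp + 1 \leq j \leq n}$ is non-singular: since $\{P_0, P_1, \ldots, P_{d^\perp - 1}\}$ is a basis for polynomials of degree less than $d^\perp$ and the evaluation points are distinct, the matrix factors as a product of a lower-triangular change-of-basis matrix with nonzero diagonal and a Vandermonde matrix on the chosen points. This pins down the remaining $A_i$ uniquely.

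The main difficulty in the MDS case is purely clerical — collapsing alternating binomial sums into the stated $\binom{i-1}{m}$ form via the two algebraic moves above. In the non-MDS case, the decisive observation is that the MacWilliams coefficients are polynomials in $j$ of the correct degree, after which unisolvence on distinct evaluation points reduces the uniqueness claim to a standard Vandermonde argument.
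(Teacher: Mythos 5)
Your proof is correct, but it takes a genuinely different route from the paper's. The paper derives both cases from a single device: substituting $z=1/(1+t)$ into MacWilliams' identity produces the binomial-moment system $\sum_{i=d}^{n-l}\binom{n-i}{l}A_i=\binom{n}{l}\left(q^{k-l}-1\right)$ for $l=0,\ldots,d^\perp-1$, which is triangular (the $l$-th equation involves no $A_i$ with $i>n-l$); in the MDS case, where $d+d^\perp=n+2$ makes the system square, back-substitution starting from $A_d=\binom{n}{d}(q-1)$ yields the closed form, and in the non-MDS case the same back-substitution expresses $A_{n-d^\perp+1},\ldots,A_n$ through $A_d,\ldots,A_{n-d^\perp}$. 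Your MDS argument instead bypasses MacWilliams altogether: counting codewords with support inside a fixed $j$-set via $M_j=q^{\max(0,\,j-d+1)}$ (a consequence of any $k$ columns of a generator matrix being independent) and applying inclusion--exclusion gives $A_i$ directly, and your two binomial identities correctly convert the resulting alternating sum $\binom{n}{i}\sum_{s}(-1)^s\binom{i}{s}\left(q^{i-d+1-s}-1\right)$ into the paper's $\binom{i-1}{m}$ form. For the non-MDS case you do invoke MacWilliams, but in the Krawtchouk-coefficient basis $P_i(j)$ rather than as moments; since that system is not triangular, you need the extra observation that $P_i$ has degree exactly $i$ in $j$ with leading coefficient $(-q)^i/i!$, so the $d^\perp\times d^\perp$ coefficient matrix factors as a triangular change of basis times a Vandermonde matrix on distinct evaluation points, hence is nonsingular --- that claim is accurate and the factorization argument is sound. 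The trade-off: your MDS proof is structural and self-contained, explaining the formula from the geometry of MDS codes without presupposing $d^\perp=n+2-d$, while the paper's treatment is uniform (one linear system covers both cases) and its triangularity turns the dependence on $A_d,\ldots,A_{n-d^\perp}$ into an explicit recursion, with no determinant or unisolvence argument needed.
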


\begin{proof} Let $(A_0^{\perp}, \dots ,A_n^{\perp})$ be the weight distribution of the dual code $C^{\perp}$. Using MacWilliams' Identity we may write 
\[ \sum_{i=0}^nA_i^{\perp}z^i=\frac{1}{q^k}\sum_{i=0}^nA_i[1+(q-1)z]^{n-i}(1-z)^i.\]
Multiplying both sides by $z^{-n}$ and substituting $\displaystyle{z=\frac{1}{1+t}}$ yields 
\[\sum_{i=0}^nA_i^{\perp}(1+t)^{n-i}=\frac{1}{q^k}\sum_{i=0}^nA_i(q+t)^{n-i}t^i.\]
Reverse the roles of $C$ and $C^{\perp}$ and compare the coefficients of powers of $t$ in both sides. We obtain

\begin{equation}\label{eq-8}
\sum_{i=0}^{n-l} {n-i \choose l}A_i=q^{k-l}\sum_{i=0}^l {n-i \choose n-l}A_i^{\perp},~0\leq l\leq n.
\end{equation}

\noindent Notice that $A_0=A^{\perp}_0=1$ and $A_1=\ldots=A_{d-1}=A^{\perp}_1=\dots =A^{\perp}_{d^\perp -1}=0$. Therefore, we get
\[\sum_{i=d}^{n-l}{n-i \choose l} A_i={n \choose l}(q^{k-l}-1),~l=0,\ldots d^{\perp}-1.\] 
\noindent This is a linear system with $d^\perp$ equations and $n+1-d$ unknowns $A_d,A_{d+1},\ldots A_n$. 

Case 1: $n+1-d^\perp =d-1$ or $d+d^\perp =n+2$. Both $C$ and $C^{\perp}$ are MDS codes of length $n$. The $l=d^\perp -1$ equation is a trivial identity. The remaining $d^\perp-1$ equations form a linear system in $n+1-d=d^\perp-1$ unknowns which can be solved iteratively. For $l=d^\perp -2$ we get 
\[ A_d={n \choose d}(q-1).\]
Substitute this into the $l=d^\perp-3$ equation to find $A_{d+1}$, and so on. We obtain
\begin{equation} \label{A_i}
A_i={n \choose i}(q-1)\sum_{m=0}^{i-d}{i-1 \choose m}(-1)^m q^{i-d-m}, ~d\leq i\leq n.
\end{equation}

Case 2: $n+1-d^\perp>d+1$ or $d+d^\perp <n+2$. Neither $C$ nor $C^\perp$ is MDS, therefore $d+d^\perp \leq n$. The linear system can be solved iteratively as follows. The last equation with $l=d^\perp -1$ is
\[\sum_{i=d}^{n+1-d^\perp}{n-i \choose d^\perp-1} A_i={n \choose d^\perp -1}(q^{k+1-d^\perp}-1),~l=0,\ldots d^{\perp}-1.\]
We get $A_{n+1-d^\perp}$ in terms of $A_d,\ldots,A_{n-d^\perp}$. Substituting in the next to last equation we obtain $A_{n+2-d^\perp}$, and so on. The weight distribution is determined in terms of $A_d,\ldots,A_{n-d^\perp}$.
\end{proof}


\section{Zeta Functions for Codes}

In this section we study  zeta function of a linear code. First, we discuss some history and motivation.

\subsection{Classic Riemann Zeta Function} In the middle of 19th century, Bernhard Riemann formulated the much important and yet unsolved \textit{Riemann Hypothesis} regarding the distribution of the zeros  of the Riemann zeta-function $\zeta(s)$. This function is defined via a series 
\[ \zeta(s)=\sum_{n=1}^{\infty}\frac{1}{n^s}\]
which is convergent for $\{s\in \C~:~Re(s)>1\}$. By analytic continuation, Riemann showed that $\zeta(s)$ extends to a meromorphic function on $\C$ with a simple pole at $s=1$ of residue one. It satisfies the functional equation 
\[ \pi^{-s/2}\Gamma(\frac{s}{2})\zeta(s)=\pi^{(s-1)/2}\Gamma(\frac{1-s}{2})\zeta(1-s)\]
With
\[ \xi(s):=\pi^{-s/2}\Gamma(\frac{s}{2})\zeta(s),\]
the functional equation may be rewritten as 
\begin{equation} \label{xi}
\xi(1-s)=\xi(s).
\end{equation}
The Riemann zeta-function has zeros at even negative integers $\{-2,-4, \dots \}$. These are referred to as \textit{trivial zeros}. \textit{The classic Riemann Hypothesis} states that \textit{ the nontrivial zeros of $\zeta(s)$ lie on the critical line} $\text{Re}(s)=1/2$.

\subsection{Zeta functions of curves over finite fields.} Let $\X=\X_g$ be a smooth, projective curve of genus $g$ over $\F_q$. Consider the generating function of the numbers of points $N_k:=\#\X(\F_{q^k})$:
\[ G_{\X}(T):=\sum_{k=1}^{\infty}\frac{N_k}{k}T^k.\]
The zeta function of $\X$ is defined as 
\[ \zeta_{\X}(T):=\text{exp}\left(G_{\X}(T)\right).\]
For example, let $\X=\P^1$. Then $N_k=q^k+1$, therefore for $|qT|<1$ and $|T|<1$ we get
\[G_{\P^1}(T)=\sum_{k=1}^{\infty}\frac{q^k+1}{k}T^k=\sum_{k=1}^{\infty}\frac{q^k}{k}T^k+\sum_{k=1}^{\infty}\frac{1}{k}T^k=-\log(1-qT)-\log(1-T).\] 
It follows that 
\[\zeta_{\P^1}(T)=\text{exp}\left(G_{\P^1}(T)\right)=\frac{1}{(1-T)(1-qT).}\]
It is known that the zeta function of $\X$ may be written as 
\[\zeta_{\X}(T)=\frac{L_{\X}(T)}{(1-T)(1-qT)}\]
where $L_{\X}(T)$ is a monic polynomial of degree $2g$ with integer coefficients. $L_{\X}(T)$ is called \textit{the L-polynomial of $\X$}. It satisfies the functional equation 
\[L_{\X}(T)=q^qT^{2g}L_{\X}(1/qT).\]
It follows that $L_{\X}(T/\sqrt q)$ is a degree $2g$, self-reciprocal polynomial, i.e. its coefficients satisfy $a_i=a_{2g-i}$ for $i=1,2,\ldots,g$. Let 
\[\xi_{\X}(s):=q^{sg}L_{\X}(q^{-s}).\]
The functional equation of $L_{\X}(T)$ yields 
\[\xi_{\X}(s)=\xi_{\X}(1-s).\]
\textit{The Riemann Hypothesis for finite fields}, proven by Weil in the 1940's, states that the roots of $L_{\X}(T)$ lie on the circle $|T|=1/\sqrt q$. Alternatively, the zeros of $\xi_{\X}(s)$ lie on the critical line $\text{Re}(s)=1/2$. The L-polynomial of $\X$ has a factorization
\[L_{\X}(T)=\prod_{i=1}^{2g}(1-\alpha_iT)\]
where $|\alpha_i|=\sqrt q$ for $i=1,2,\ldots,2g$.

\subsection{Zeta Function for Linear Codes} 
Motivated by analogies with local class field theory, Duursma introduced the zeta function of a linear code over a finite field. For $d\leq n$, denote the weight enumerator of an MDS code $C$ of length $n$ and minimum distance $d$ by $M_{n,d}(x,y)$. The dual $C^{\perp}$ is also an MDS code of length $n$ and minimum distance $d^{\perp}=n+2-d$.  Therefore, for $d\geq 2$, the weight enumerator of $C^{\perp}$ is $M_{n,n+2-d}(x,y)$. Let $M_{n,n+1}=x^n$. The MDS code with weight enumerator $M_{n,1}$ has dimension $n-d+1=n-1+1=n$, hence $C=\F^n_q$. It is easy to see that $M_{n,n+1}$ is the MacWilliams transform of $M_{n,1}$. We may think of $M_{n,1}$ as the weight enumerator of the zero code. The following proposition follows easily.
\begin{prop} The set $\{M_{n,1}, M_{n,2},\ldots,M_{n,n-1}, M_{n,n+1}\}$ is a basis for the vector space of homogeneous polynomials of degree n in $x,y$. Furthermore, this set is closed under MacWilliams transformations.
\end{prop}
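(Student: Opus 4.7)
The plan is to prove two assertions: (i) that the $n+1$ polynomials $M_{n,d}$ for $1 \le d \le n+1$ are linearly independent (matching the dimension of the space of homogeneous polynomials of degree $n$ in $x,y$, hence yielding a basis); and (ii) that the MacWilliams transform sends each $M_{n,d}$ to a scalar multiple of another element of the set, so that their linear span is preserved.

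For linear independence, I would read off the explicit form of $M_{n,d}$ directly from Theorem \ref{MacWilliams solutions}. Writing $M_{n,d}(x,y) = \sum_{i=0}^n A_i^{(d)} x^{n-i} y^i$, one has $A_0^{(d)} = 1$, $A_i^{(d)} = 0$ for $1 \le i < d$, and $A_d^{(d)} = \binom{n}{d}(q-1) \ne 0$ for $1 \le d \le n$; and $M_{n,n+1} = x^n$ by convention. I would then arrange these polynomials as rows of an $(n+1) \times (n+1)$ coefficient matrix with columns indexed by $i = 0, 1, \ldots, n$, in the row order $M_{n,n+1}, M_{n,1}, M_{n,2}, \ldots, M_{n,n}$. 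The first row is $(1,0,\ldots,0)$; after subtracting it from each remaining row, the resulting $n \times n$ lower-right block is lower triangular with nonzero diagonal entries $\binom{n}{d}(q-1)$ for $d = 1, \ldots, n$. Consequently the full matrix has determinant $\prod_{d=1}^{n} \binom{n}{d}(q-1) \ne 0$, giving linear independence.

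For closure under the MacWilliams transform $T \colon f(x,y) \mapsto f(x+(q-1)y,\, x-y)$, I would combine two results already established: MacWilliams' Identity, which gives $T(A_C) = q^k A_{C^\perp}$ for any $[n,k,d]_q$-code $C$, and the earlier proposition that the dual of an MDS code is MDS. Since an MDS code with parameters $[n,k,d]_q$ has $k = n+1-d$ and dual minimum distance $d^\perp = k+1 = n+2-d$, we obtain
\[
T(M_{n,d}) \;=\; q^{\,n+1-d}\, M_{n,\, n+2-d},
\]
which is a scalar multiple of an element of the set and in particular lies in its span. The edge cases $d=1$ and $d=n+1$ (exchanged by $T$ up to the factor $q^n$) provide a quick sanity check: $T(M_{n,n+1}) = (x+(q-1)y)^n = M_{n,1}$ and $T(M_{n,1}) = q^n x^n = q^n M_{n,n+1}$.

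The only real bookkeeping is choosing a row/column ordering that makes the coefficient matrix block-triangular; once that ordering is in place, the determinant is immediate and the closure assertion reduces to a direct appeal to MDS duality. I therefore anticipate no serious obstacle beyond this initial setup.
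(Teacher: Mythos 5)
Your proposal is correct, and for the closure assertion it follows exactly the route the paper sketches in the paragraph preceding the proposition (the dual of an MDS code with minimum distance $d$ is MDS with minimum distance $n+2-d$, giving $T(M_{n,d})=q^{n+1-d}M_{n,n+2-d}$, with $M_{n,1}\leftrightarrow M_{n,n+1}$ as the boundary pair); the paper itself offers no further proof, stating only that the proposition ``follows easily,'' so your triangular coefficient-matrix argument supplies the linear-independence half that the paper omits, and it is the standard way to do it. One remark: the set displayed in the proposition has only $n$ elements (it skips $M_{n,n}$), which cannot be a basis of the $(n+1)$-dimensional space; your proof silently and correctly works with all of $M_{n,1},\ldots,M_{n,n},M_{n,n+1}$, consistent with the paper's later expansion $A_C=\sum_{i=d}^{n+1}a_{i-d}M_{n,i}$, so the displayed set should be read as a typo.
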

If $C$ is an $[n,k,d]_q$-code, then one can easily see that
\[A_C(x,y)=\sum_{i=d}^{n+1}a_{i-d}M_{n,i}=a_0M_{n,d}+\ldots+a_{n+1-d}M_{n,n+1}.\]
\begin{definition} \label{M_n basis} The zeta polynomial of $C$ is defined as $P(T):=a_0+a_1T+ \dots +a_{n-d+1}T^{n+1-d}$. The quotient 
\[ Z(t)=\frac{P(T)}{(1-T)(1-qT)}\]
is called \textit{the zeta function} of the linear code $C$
\end{definition}
The zeta polynomial $P(T)$ of an $[n,k,d]_q$-code $C$ determines uniquely the weight enumerator of $C$. The degree of $P(T)$ is at most $n-d+1$; the following theorem establishes the precise value of the degree.
\begin{thm}[Duursma \cite{D6}]
Let $[n,k,d]$ and $[n,k^{\perp},d^{\perp}]$  be the parameters of dual codes $C$ and $C^{\perp}$. Denote by $P(T), Z(T), P^{\perp}(T),Z^{\perp}(T)$ their zeta polynomials and zeta functions. Let $g=\gamma(C)=n-k-d+1,~g^{\perp}=\gamma (C^{\perp})=n-k^{\perp}-d^{\perp}+1$. Then 
 
 (a) $\text{deg}~P(T)=\text{deg}~P^{\perp}(T)=g+g^{\perp}=n+2-d-d^{\perp},$
 
 (b) $P^{\perp}(T)=P(1/qT)q^gT^{g+g^{\perp}},$
 
 (c) $Z^{\perp}(T)=Z(1/qT)q^{g-1}T^{{g+g^{\perp}-2}},$
 
 (d) $P(1)=1$.
 
 (e) The zeta polynomial of any MDS code is $P(T)=1$.
 \end{thm}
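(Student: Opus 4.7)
The plan is to work in the basis $\{M_{n,i}\}$ and push everything through MacWilliams' identity term by term. The starting point is the behavior of a single basis element: since $M_{n,i}$ is the weight enumerator of an MDS code of parameters $[n, n+1-i, i]_q$, whose dual is MDS with parameters $[n, i-1, n+2-i]_q$, the MacWilliams identity forces
\[
M_{n,i}\bigl(x+(q-1)y,\ x-y\bigr) \;=\; q^{\,n+1-i}\,M_{n,n+2-i}(x,y).
\]
Applying this to $A_C = \sum_{j=0}^{n+1-d} a_j\,M_{n,d+j}$ and dividing by $q^k$ yields
\[
A_{C^\perp}(x,y) \;=\; \sum_{j} a_j\,q^{\,n+1-d-j-k}\,M_{n,n+2-d-j}(x,y).
\]
Re-indexing with $l = (n+2-d-j)-d^\perp$, using $g+g^\perp = n+2-d-d^\perp$ and $g^\perp = k-d^\perp+1$, linear independence of the $M_{n,i}$ yields the key coefficient identity
\[
a^\perp_l \;=\; q^{\,l-g^\perp}\,a_{g+g^\perp-l}.
\]

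From this single identity, part (b) is a direct rewriting: summing over $l$ and substituting $m = g+g^\perp-l$ gives $P^\perp(T) = q^g T^{g+g^\perp} P\bigl(1/(qT)\bigr)$. For (a), the same coefficient comparison shows $a_j = 0$ whenever $j > g+g^\perp$ (these indices correspond to $i' < d^\perp$, which do not occur on the dual side), so $\deg P \leq g+g^\perp$. Equality then follows because $a^\perp_0$ is a nonzero multiple of $A_{d^\perp}(C^\perp) > 0$ (extract the $y^{d^\perp}$-coefficient from $M_{n,d^\perp}$ via the MDS formula of Theorem~\ref{MacWilliams solutions}), and the identity above then forces $a_{g+g^\perp} \neq 0$. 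The symmetric argument bounds $\deg P^\perp$ in both directions.

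Parts (d) and (e) are immediate. Evaluating at $(x,y)=(1,0)$ picks up only the all-zero codeword, while $M_{n,d+j}(1,0) = 1$ for every $j$; thus $1 = A_C(1,0) = \sum_j a_j = P(1)$. If $C$ is MDS, Theorem~\ref{MacWilliams solutions} pins $A_C = M_{n,d}$, so $a_0 = 1$ and every other $a_j$ vanishes, giving $P(T) = 1$. Finally, (c) is pure algebra from (b): under the substitution $T \mapsto 1/(qT)$, the denominator $(1-T)(1-qT)$ picks up a factor of $1/(qT^2)$, so dividing (b) by $(1-T)(1-qT)$ produces precisely the announced factor $q^{g-1}T^{g+g^\perp-2}$.

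I expect the main obstacle to be the exponent bookkeeping that leads to the identity $a^\perp_l = q^{\,l-g^\perp}a_{g+g^\perp-l}$; the re-indexing and the careful combination of the three relations $g = n-k-d+1$, $g^\perp = k-d^\perp+1$, and $g+g^\perp = n+2-d-d^\perp$ must be done cleanly, but once it is, every other assertion in the theorem follows essentially mechanically.
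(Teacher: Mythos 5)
Your proposal is correct and follows essentially the same route as the paper: expand $A_C$ in the basis $\{M_{n,i}\}$, apply the MacWilliams transform termwise using $M_{n,i}(x+(q-1)y,x-y)=q^{n+1-i}M_{n,n+2-i}(x,y)$, and match the resulting expansion against that of $A_{C^\perp}$ starting at $M_{n,d^\perp}$ to get the degree and the functional equation, with (c), (d), (e) as routine consequences. Your explicit coefficient identity $a^\perp_l=q^{\,l-g^\perp}a_{g+g^\perp-l}$ is just a cleaner packaging of the same computation the paper performs in-line.
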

\begin{proof}  
Assume that $P(T)$ is of degree $r$, hence 
\[ A_C(x,y)=a_0M_{n,d}(x,y)+a_1M_{n,d+1}(x,y)+ \dots +a_rM_{n,d+r}(x,y).\]
Recall that if the weight enumerator of an MDS code is $M_{n,i}(x,y)$, the weight enumerator of its dual is $M_{n,n+2-i}(x,y)$. Formulated in light of MacWilliams Identity, 
\[ M_{n,n+2-i}(x,y)=q^{i-n-1}M_{n,i}(x-(q-1)y,x-y).\]
This leads to
\[ A_{C^{\perp}}=q^{-k}A_C(x+(q-1)y,x-y)=a_rq^{g-r}M_{n,n+2-d-r}+ \dots +a_0q^gM_{n,n+2-d}.\]
The minimum distance of $C^{\perp}$ is $d^{\perp}$, hence the basis expansion of $A_{C^{\perp}}$ starts with $M_{n,d^{\perp}}$. It follows that $n+2-d-r=d^{\perp}$ or $r=n+2-d-d^{\perp}$, therefore 
\[ P^{\perp}(T)=a_rq^{g-r}+ \dots +a_0q^gT^r.\]
So, both $P(T)$ and $P^{\perp}(T)$ are of degree $r=g+g^{\perp}=n+2-d-d^{\perp}$. Now 
\[ 
\begin{split}
P^{\perp}(T) & =a_rq^{g-r}+ \dots +a_0q^gT^r \\
& =a_rq^{-g^{\perp}}+a_{r-1}q^{-g^{\perp}+1}T+ \dots +a_gT^{g^{\perp}}+ \dots +a_0q^gT^r \\
& =q^gT^r(a_0+ \dots +a_gq^{-g}T^{-g}+ \dots +a_rq^{-r}T^{-r}) \\
& =q^gT^rP(1/qT)=q^gT^{g+g^{\perp}}P(1/qT).
\end{split}
\]
The equation for zeta functions follows easily. By comparing the coefficients of $x^n$ on both sides we obtain $\displaystyle{\sum_{i=d}^{n+1}a_{i-d}=1}$, i.e. $P(1)=1$. Part (e) of the theorem is clear.
\end{proof}
\begin{cor}
The zeta function of an MDS code 
\[ \frac{1}{(1-T)(1-qT)}=\sum_{j=0}^{\infty}\frac{q^{j+1}-1}{q-1}T^j\]
is the rational zeta function over $\F_q$.
\end{cor}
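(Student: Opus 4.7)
The plan is to combine part (e) of the preceding theorem with the definition of the zeta function, then verify the series expansion by partial fractions, and finally identify the resulting rational function with the zeta function of $\P^1$ over $\F_q$ computed in Section 3.2. First, part (e) states that the zeta polynomial of any MDS code equals $P(T) = 1$. Substituting this into the definition $Z(T) = P(T)/((1-T)(1-qT))$ immediately yields the closed form $Z(T) = 1/((1-T)(1-qT))$.

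Second, to establish the series expansion, I would apply partial fractions: writing
\[ \frac{1}{(1-T)(1-qT)} = \frac{A}{1-T} + \frac{B}{1-qT}, \]
and clearing denominators yields $1 = A(1-qT) + B(1-T)$. Evaluating at $T = 1$ gives $A = -1/(q-1)$, while $T = 1/q$ gives $B = q/(q-1)$. Expanding each factor as a geometric series and collecting coefficients of $T^j$ produces $(q^{j+1} - 1)/(q-1)$, which is the claimed series.

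Finally, to identify this with the rational zeta function over $\F_q$, I would recall from Section 3.2 that the explicit computation for $\X = \P^1$ gave $\zeta_{\P^1}(T) = 1/((1-T)(1-qT))$, which matches exactly. As a sanity check, the coefficient $(q^{j+1}-1)/(q-1) = 1 + q + \cdots + q^j$ equals the number of effective divisors of degree $j$ on $\P^1$ over $\F_q$ (equivalently $\#\P^j(\F_q)$), consistent with the Euler-product interpretation of the Hasse-Weil zeta function. There is no real obstacle here; the corollary is essentially a direct readout of part (e) combined with the $\P^1$ calculation from the preceding subsection, with the partial fractions step being the only piece of computation required.
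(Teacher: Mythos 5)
Your proposal is correct and follows exactly the route the paper intends: the corollary is an immediate consequence of part (e) of the preceding theorem ($P(T)=1$ for MDS codes) combined with the definition $Z(T)=P(T)/((1-T)(1-qT))$ and the computation of $\zeta_{\P^1}(T)$ from Section 3.2. Your partial-fractions verification of the series expansion and the sanity check via counting effective divisors of degree $j$ on $\P^1$ are both accurate and simply flesh out what the paper leaves implicit.
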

\begin{cor} \label{zeta dual}
The zeta polynomial and the zeta function of a self-dual code $C$ satisfies the following functional equation 
\[P(T)=q^gT^{2g}P(1/qT),~Z(T)=q^{g-1}T^{2g-2}Z(1/qT).\]
\end{cor}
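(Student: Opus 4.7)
The plan is to derive this corollary as a direct specialization of the previous theorem to the self-dual case. Since $C = C^{\perp}$ by assumption, the dual code shares all invariants with $C$: the parameters $[n,k,d]$ agree with $[n,k^{\perp},d^{\perp}]$, so in particular $d = d^{\perp}$ and the two genera coincide, $g = g^{\perp} = \gamma(C)$. Likewise, the zeta polynomials and zeta functions of $C$ and $C^{\perp}$ coincide: $P(T) = P^{\perp}(T)$ and $Z(T) = Z^{\perp}(T)$.

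With these identifications in hand, I would simply substitute into part (b) of the preceding theorem, which reads $P^{\perp}(T) = P(1/qT) \, q^g T^{g+g^{\perp}}$. Replacing $P^{\perp}(T)$ by $P(T)$ and $g^{\perp}$ by $g$ yields
\[
P(T) = q^g T^{2g} P(1/qT),
\]
the first claimed functional equation. Similarly, substituting into part (c) — namely $Z^{\perp}(T) = Z(1/qT)\, q^{g-1} T^{g+g^{\perp}-2}$ — and replacing $Z^{\perp}(T)$ by $Z(T)$ and $g^{\perp}$ by $g$ gives
\[
Z(T) = q^{g-1} T^{2g-2} Z(1/qT),
\]
which is the second claimed functional equation. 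Alternatively, one can obtain the identity for $Z(T)$ directly from that for $P(T)$ by dividing by $(1-T)(1-qT)$ and checking that the denominator satisfies $(1-T)(1-qT) = qT^2 \cdot (1-1/T)(1-1/(qT)) \cdot (-1)(-1)$, i.e. $(1-T)(1-qT) = qT^2 (1 - 1/T)(1 - 1/(qT))$, so the exponent of $T$ drops from $2g$ to $2g-2$ and the power of $q$ drops from $q^g$ to $q^{g-1}$.

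Since every step is a straightforward substitution, there is no genuine obstacle here; the only point that merits care is bookkeeping of the exponents of $q$ and $T$ when passing from the polynomial identity to the identity for the rational function $Z(T)$, to confirm that the denominator $(1-T)(1-qT)$ is compatible with the transformation $T \mapsto 1/(qT)$ in the manner just indicated.
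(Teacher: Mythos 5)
Your proposal is correct and is exactly how the paper intends the corollary to be read: it is an immediate specialization of parts (b) and (c) of Duursma's theorem to the case $C=C^{\perp}$, where $g=g^{\perp}$, $P=P^{\perp}$, and $Z=Z^{\perp}$. Your bookkeeping check that $(1-T)(1-qT)=qT^{2}\,(1-1/T)\,(1-1/(qT))$, which converts the exponents $q^{g}T^{2g}$ into $q^{g-1}T^{2g-2}$, is also accurate.
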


\noindent Notice that the zeta polynomial of a linear code and the L-polynomial of a genus $g$ curve over $\F_q$ satisfy the same functional equation. 

Here is another characterization of the zeta polynomial of a linear code.
\begin{prop}
Let $C$ be a linear code $C$ of length $n$ and minimum distance $d$. Assume that the minimum distance $d^{\perp}$ of its dual $C^{\perp}$ satisfies $d^{\perp}\geq 2$. Then, the zeta polynomial of $C$ is the only polynomial $P(T)$ of degree $n+2-d-d^{\perp}$ such that the generating function 
\[ \frac { [y(1-T)+xT]^n}{(1-T)(1-qT)}P(T)\] has $T$-expansion  
\[ \ldots +      \frac{ A_C(x,y) - x^n}   {q-1}T^{n-d} +\ldots.\]
\end{prop}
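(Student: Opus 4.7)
The plan is to reduce the statement to a single coefficient identity between the kernel $\frac{[y(1-T)+xT]^n}{(1-T)(1-qT)}$ and the MDS weight enumerators $M_{n,i}(x,y)$. First I would expand $[y(1-T)+xT]^n = \sum_{j=0}^n \binom{n}{j} y^{n-j}(x-y)^j T^j$ and $\frac{1}{(1-T)(1-qT)} = \sum_{l\geq 0} \frac{q^{l+1}-1}{q-1} T^l$, multiply, and read off the coefficient of $T^{n-i}$. The heart of the argument is then to prove the identity
$$[T^{n-i}]\, \frac{[y(1-T)+xT]^n}{(1-T)(1-qT)} \;=\; \frac{M_{n,i}(x,y) - x^n}{q-1}, \qquad 1 \le i \le n+1,$$
with the convention $M_{n,n+1}(x,y) := x^n$. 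This is a binomial manipulation against the explicit MDS weight formula of Theorem \ref{MacWilliams solutions}; the special cases $i=n$ (both sides reduce to $y^n$, matching the repetition code) and $i=n+1$ (both sides vanish) serve as useful sanity checks.

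Given the key identity, existence follows by linearity. Writing the zeta polynomial as $P(T) = \sum_{j=0}^{n+1-d} a_j T^j$ and using the basis expansion $A_C(x,y) = \sum_{j=0}^{n+1-d} a_j M_{n,d+j}(x,y)$ recalled just before Definition \ref{M_n basis}, one has
$$[T^{n-d}]\, \frac{[y(1-T)+xT]^n\, P(T)}{(1-T)(1-qT)} \;=\; \sum_{j=0}^{n+1-d} a_j \cdot \frac{M_{n,d+j}(x,y) - x^n}{q-1} \;=\; \frac{A_C(x,y) - x^n\, P(1)}{q-1},$$
which equals $\frac{A_C(x,y)-x^n}{q-1}$ since $P(1)=1$ by Duursma's theorem stated above.

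For uniqueness, suppose $Q(T)$ is any other polynomial of degree at most $n+2-d-d^\perp$ with the same property, and set $R(T) := P(T) - Q(T) = \sum_j c_j T^j$. Vanishing of the $T^{n-d}$ coefficient of $\frac{[y(1-T)+xT]^n R(T)}{(1-T)(1-qT)}$ combined with the key identity yields $\sum_j c_j\bigl(M_{n,d+j}(x,y) - x^n\bigr) = 0$ identically in $x,y$. The hypothesis $d^\perp \ge 2$ forces $\deg R \le n-d$, so every index $d+j$ stays in $\{d,d+1,\ldots,n\}$; linear independence of $M_{n,d},\ldots,M_{n,n},M_{n,n+1}$ in the space of degree-$n$ homogeneous forms (the proposition preceding Definition \ref{M_n basis}) then forces every $c_j$ to vanish, giving $P=Q$. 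The principal obstacle in this plan is establishing the key coefficient identity cleanly from the double-sum formula of Theorem \ref{MacWilliams solutions}; once that identity is in hand, both existence and uniqueness fall out in a few lines.
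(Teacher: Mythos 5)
Your proposal is correct in substance but follows a genuinely different route from the paper's proof, which is Chinen's argument: there one expands the kernel as $\sum_k c_k(x,y)T^k$ with $c_k(x,y)=\sum_{i+j=k}\frac{q^{j+1}-1}{q-1}\binom{n}{i}y^{n-i}(x-y)^i$, writes the required coefficient condition in the monomial basis $x^ly^{n-l}$, and observes that the resulting linear system $B\mathbf{a}=\mathbf{A}$ in the unknowns $a_0,\dots,a_{n-d}$ has a triangular matrix $B$ with nonzero diagonal entries $\binom{n}{i}$, whence existence and uniqueness follow at once. You instead work in the MDS basis $\{M_{n,i}\}$, reducing everything to the single identity $c_{n-i}(x,y)=\frac{M_{n,i}(x,y)-x^n}{q-1}$ and then invoking the basis expansion $A_C=\sum_j a_j M_{n,d+j}$, the normalization $P(1)=1$, and linear independence of the $M_{n,i}$. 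Your route buys something the paper's proof quietly omits: it exhibits explicitly that the unique polynomial produced by the generating-function condition coincides with the zeta polynomial as originally defined via the $M_{n,i}$ expansion in Definition~\ref{M_n basis}, rather than merely showing that some unique polynomial of degree at most $n-d$ exists. The cost is that your key identity is exactly where the work lives, and you must prove it independently (e.g., from Eq.~\eqref{explicitc_k} against the explicit MDS weight formula of Theorem~\ref{MacWilliams solutions}); deducing it from ``the zeta polynomial of an MDS code is $1$'' would be circular, since that fact is itself read off from the definition the proposition is meant to characterize. The identity is true --- your sanity checks at $i=n,n+1$ extend, e.g.\ $c_1=nxy^{n-1}+(q+1-n)y^n=\frac{M_{n,n-1}-x^n}{q-1}$ --- and your uniqueness step is sound: with $d^{\perp}\ge 2$ all indices $d+j$ stay at most $n$, so the relation $\sum_j c_j M_{n,d+j}=\bigl(\sum_j c_j\bigr)M_{n,n+1}$ forces all $c_j=0$ by linear independence. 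Once the identity is written out, your argument is complete and, if anything, slightly more informative than the one in the paper.
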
 

\begin{proof} 
The proof presented here is due to Chinen \cite{Ch1}. Define $c_k(x,y)$ by 
\begin{equation} \label{c_k}
\frac{(y(1-T)+xT)^n}{(1-T)(1-qT)}=\sum_{k=0}^\infty c_k(x,y)T^k
\end{equation}
First note that
\[\frac{1}{(1-T)(1-qT)}=\sum_{j=0}^{\infty}{\frac{q^{j+1}-1}{q-1}T^j}\]
\noindent and
\[ (y(1-T)+xT)^n=\sum_{i=0}^{n}{ \begin{binom} {n}{i} \end{binom}y^{n-i}(x-y)^iT^i}.\]
\noindent Therefore,
\begin{equation} \label{explicitc_k}
c_k(x,y)=\sum_{i+j=k}\frac{q^{j+1}-1}{q-1}{n \choose i}y^{n-i}(x-y)^i.
\end{equation}

Note that  $n+2-d-d^{\perp}\leq n-d$ since $d^{\perp}\geq 2$. A polynomial $P(T)=\sum_{i=0}^{n-d}a_iT^i$ satisfies the identity 
\[ \frac{[y(1-T)+xT]^n} {(1-T)(1-qT) } \sum_{i=0}^{n-d}a_iT^i= \sum_{k=0}^{\infty}c_kT^k\sum_{i=0}^{n-d}a_iT^i=\ldots +  \frac{ A_C(x,y) - x^n}{q-1}T^{n-d} +\ldots.\]
 if and only if  
\begin{equation} \label{c_k basis}
\sum_{i=0}^{n-d}a_ic_{n-d-i}(x,y)=\frac{1}{q-1}\sum_{i=d}^n A_ix^{n-i}y^i.
\end{equation}

\noindent Expansion of $c_0(x,y), c_1(x,y), \dots ,c_{n-d}(x,y)$ as homogeneous polynomials of $x,y$ yields: 
\begin{equation} \label{c_{n-d}}
\begin{split}
c_0(x,y) & =b_{0,0}y^n, \\
c_1(x,y) & =b_{1,1}xy^{n-1}+b_{1,0}y^n, \\
 \dots  & \dots  \dots  \dots  \dots \\
c_{n-d}(x,y) &=b_{n-d,n-d}x^{n-d}y^d+b_{n-d,n-d-1}x^{n-d-1}y^{d+1}+ \dots +b_{n-d,0}y^n. 
\end{split}
\end{equation} 

\noindent The coefficients are obtained by comparison with Eq.~\eqref{explicitc_k}:

\begin{equation} \label{b_{k,l}}
b_{k,l}=\sum_{i=l}^k \frac{q^{k-i+1}-1}{q-1}(-1)^{i-l}{n \choose i}{i \choose l}~\text{for}~0\leq l\leq k\leq n-d,
\end{equation}

\noindent and $b_{k,l}=0$ otherwise. Consider the following matrices: 
\[ B:=(b_{k,l})^{t},  \, \,       ~{\bf a} :=(a_{n-d},a_{n-d-1},\ldots,a_0)^t,\]
and 
\[ {\bf A}:=\displaystyle{\frac{1}{q-1}}(A_n,A_{n-1},\ldots,A_d)^t.\]

\noindent We write the equations Eq.~\eqref{c_{n-d}} in the form
\begin{equation}
c_k(x,y)=\sum_{l=0}^k b_{k,l}x^ly^{n-l},~k=0,1, \dots ,n-d,
\end{equation}
\noindent and substitute them in Eq.~\eqref{c_k basis}. Comparing the coefficients of monomials on both sides of the equation shows that Eq.~\eqref{c_k basis} is equivalent to the system of $n-d+1$ linear equations in the $n-d+1$ variables $a_0,a_1, \dots a_{n-d}$: 
\[ B\bf a=\bf A.\]
The diagonal entries of $B$ are binomial coefficients $\displaystyle{b_{i,i}={n \choose i}}$, which are nonzero. It follows that $B$ is nonsingular. Therefore $\bf a$, hence $P(T)$ exist and is unique.
\end{proof}

\begin{cor}\label{AD} 
If the minimum distance $d^{\perp}$ of the dual code $C^{\perp}$ satisfies $d^{\perp}\geq 2$, then 
\[ \displaystyle{P(0)=(q-1)^{-1}{n \choose d}^{-1}A_d},~ \text{and}~\frac{A_{d+1}}{q-1}={n \choose {d+1}} ( P(0)(q-d)+P'(0)).\]
\end{cor}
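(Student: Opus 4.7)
The plan is to extract two specific coefficients from the defining identity
\[ \sum_{i=0}^{n-d} a_i\, c_{n-d-i}(x,y) \;=\; \frac{1}{q-1}\sum_{i=d}^n A_i x^{n-i} y^i \]
established in the preceding proposition, using the explicit expansion $c_k(x,y)=\sum_{l=0}^{k} b_{k,l} x^l y^{n-l}$ with the coefficients $b_{k,l}$ given by Eq.~\eqref{b_{k,l}}. Since $P(T)=a_0+a_1T+\cdots+a_{n-d}T^{n-d}$ one has $P(0)=a_0$ and $P'(0)=a_1$, so the corollary amounts to two coefficient comparisons.

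First I would extract the coefficient of $x^{n-d}y^d$. On the right-hand side this is $A_d/(q-1)$. On the left-hand side, a term $x^{n-d}y^d$ inside $c_{k}(x,y)$ requires $l=n-d\leq k$, and the only index with $k\leq n-d$ that qualifies is $k=n-d$, contributed by $i=0$. From \eqref{b_{k,l}} one reads off $b_{n-d,n-d}=\binom{n}{n-d}=\binom{n}{d}$, so the comparison gives $a_0\binom{n}{d}=A_d/(q-1)$, i.e.\ the first claimed formula.

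Next I would extract the coefficient of $x^{n-d-1}y^{d+1}$. The right-hand side contributes $A_{d+1}/(q-1)$. On the left two indices contribute: $i=0$ (through $c_{n-d}$) with weight $b_{n-d,n-d-1}$, and $i=1$ (through $c_{n-d-1}$) with weight $b_{n-d-1,n-d-1}=\binom{n}{d+1}$. The routine part is evaluating $b_{n-d,n-d-1}$ from the defining sum: keeping only the two nonzero terms $i=n-d-1$ and $i=n-d$ in \eqref{b_{k,l}} yields
\[ b_{n-d,n-d-1} \;=\; (q+1)\binom{n}{d+1} - (n-d)\binom{n}{d}. \]
Applying the standard identity $(n-d)\binom{n}{d}=(d+1)\binom{n}{d+1}$ collapses this to $(q-d)\binom{n}{d+1}$. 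Substituting and comparing gives
\[ \frac{A_{d+1}}{q-1} \;=\; \binom{n}{d+1}\bigl(a_1+a_0(q-d)\bigr) \;=\; \binom{n}{d+1}\bigl(P'(0)+(q-d)P(0)\bigr), \]
which is the second formula.

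The only mildly tricky step is the evaluation of $b_{n-d,n-d-1}$ and its simplification via the $\binom{n}{d}\leftrightarrow\binom{n}{d+1}$ identity; everything else is direct coefficient matching in the generating identity of the previous proposition, relying on the triangular structure of the expansion \eqref{c_{n-d}} which forces only finitely many $c_k$ to contribute to each low-degree coefficient.
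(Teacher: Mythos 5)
Your proof is correct and is essentially the paper's own argument: the paper simply asserts that the corollary ``follows easily from the linear system $B\mathbf{a}=\mathbf{A}$,'' and what you have written is exactly the first two rows of that system worked out explicitly (the computations of $b_{n-d,n-d}=\binom{n}{d}$ and $b_{n-d,n-d-1}=(q-d)\binom{n}{d+1}$ both check out). No discrepancy with the paper's approach.
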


\begin{proof} 
See also Corollary 97 in \cite{JK}. The proof follows easily from the above linear system $B\bf a=\bf A.$
\end{proof}
\begin{rem}
Let $C$ be a linear code such that $d^\perp=1$. Let $e_1,e_2,\ldots,e_r$ be all codeword in $C^{\perp}$ of weight one. For $j=1,2,\ldots,r$, let $i_j$ be the only position where $e_j$ has a nonzero coordinate.  Then, every codeword of $C$ has $0$ in the these $i_j$-th positions. We say that the code $C$ is \textit{degenerate}.  If we puncture/delete the coordinates in positions $i_j,~j=1,2,\ldots,r$, we get a new code $C'$ of length $n-r$ and weight distribution $(1,0,0,\ldots,A_d,\ldots,A_{n-r})$. Note that $x^rA_{C'}(x,y)=A_C(x,y)$. This new code is non degenerate, hence $d^\perp\geq 2$. The last theorem may be used as a definition of zeta polynomials for non-degenerate codes.
\end{rem}
\begin{definition}
A degree $m$ polynomial $f(x)=a_0+a_1x+ \dots +a_mx^m$ is called \textit{self-reciprocal} if 
\[ f(x)=x^m f(1/x),\] 
i.e. if and only if  the following equality of $(m+1)$-tuples holds 
\[ (a_0,a_1, \dots ,a_m)=(a_m, \dots ,a_1,a_0).\]
\end{definition}

\noindent Formally self orthogonal codes lead to self-reciprocal polynomials.

\begin{prop} \label{self reciprocal} 
If $P(T)$ is the zeta polynomial of a formally self-orthogonal code, then $P(T/{\sqrt q})$ is a self-reciprocal polynomial.
\end{prop}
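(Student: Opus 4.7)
The plan is to reduce the claim to a direct consequence of the functional equation for zeta polynomials of dual codes, namely $P^\perp(T) = q^g T^{g+g^\perp} P(1/(qT))$ from the Duursma theorem. The idea is that ``formally self-orthogonal'' means $A_C(x,y) = A_{C^\perp}(x,y)$, so $C$ and $C^\perp$ share all the invariants that feed into the zeta polynomial. In particular, reading the length off the degree of the weight enumerator, the minimum distance off the smallest nontrivial weight, and the dimension from $A_C(1,1) = q^k$, I would conclude that $n = n^\perp$, $d = d^\perp$, and $k = k^\perp$, and therefore $g = g^\perp$. Uniqueness of the zeta polynomial (from the characterization proved just above) then forces $P(T) = P^\perp(T)$.

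Feeding these two equalities into the functional equation collapses it to
\[
P(T) = q^g T^{2g} P\!\left(\frac{1}{qT}\right).
\]
Next I would rescale by setting $f(T) := P(T/\sqrt q)$, a polynomial of degree $m = 2g$. The factor $q^g$ is absorbed into $(T/\sqrt q)^{2g}$, and the argument $1/(qT)$ becomes $1/(T\sqrt q)$, so a one-line substitution converts the displayed identity into $f(T) = T^{2g} f(1/T)$. This is exactly the self-reciprocal condition in the definition immediately preceding the proposition, which completes the proof.

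No serious obstacle is expected. The only point needing some care is the step that formal self-orthogonality yields both $g = g^\perp$ and $P = P^\perp$; this amounts to observing that coincident weight enumerators force coincident $(n,k,d)$-parameters and that the zeta polynomial is a function of the weight enumerator alone, both of which are built into the material already developed in the excerpt. After that, the proof is really just a substitution in the Duursma functional equation.
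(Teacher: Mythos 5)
Your proof is correct and follows essentially the same route as the paper: deduce $P = P^\perp$ and $g = g^\perp$ from the equality of weight enumerators, collapse Duursma's functional equation to $P(T) = q^g T^{2g} P(1/qT)$, and rescale $T \mapsto T/\sqrt{q}$. The only difference is that you spell out the intermediate step (matching $(n,k,d)$-parameters plus uniqueness of the zeta polynomial) that the paper simply asserts.
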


\begin{proof}
Let $C$ be a formally self-orthogonal. Recall that this means 
\[ A_C(x,y)=A_{C^\perp}(x,y).\]
It follows that $C$ and $C^\perp$ have the same zeta polynomial $P(T)$. It has degree $2g$ and satisfies 
\[ P(T)=q^gT^{2g}P(1/qT).\]
Define the degree $2g$ polynomial $P^s(T):=P(T/{\sqrt{q}}).$ Then, 
\[ P^s(T)=q^g(T/{\sqrt q})^{2g}P(1/T{\sqrt q})=T^{2g}P^s(1/T).\]
\end{proof}

\subsection{Riemann zeta function versus zeta function  for self-dual codes}

We saw in Corollary \ref{zeta dual} that for a self-dual code $C$,  
\[ Z(T)=q^{g-1}T^{2g-2}Z(1/qT),\] which for 
\[ z(T):=T^{1-g}Z(T), \]
may be written as 
\[ z(T)=z(1/qT).\]
Now let 
\[ \zeta_C(s):=Z(q^{-s}),~\text{and}~\xi_C(s):=z(q^{-s}).\]
We obtain 
\[ \xi_C(s)=\xi_C(1-s),\]
which is the same symmetry equation as Eq.~\eqref{xi}. We note that $\zeta(s)$ and $\xi(s)$ have the same zeros.

The zeroes of the zeta function of a linear code $C$ are useful in understanding possible values of its minimum distance $d$.

\begin{prop}
Let $C$ be a linear code with weight distribution vector $(A_0,A_1,\ldots,A_n).$ Let $\alpha_1,\ldots, \alpha_r$ be the zeros of the zeta polynomial $P(T)$ of $C$ Then
\[ d=q-\sum_{i}{\alpha_i^{-1}}-\frac{A_{d+1}}{A_d}\frac{d+1}{n-d}. \]
In particular,
\[ d\leq q- \sum_ {i}{\alpha_i^{-1}}. \]
\end{prop}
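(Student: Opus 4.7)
The plan is to derive the identity by combining Corollary \ref{AD} with the elementary relation between the logarithmic derivative of a polynomial at $0$ and the reciprocals of its roots. Throughout I assume the code is non-degenerate, i.e.\ $d^\perp \geq 2$, so that Corollary \ref{AD} applies; $A_d \neq 0$ by the definition of the minimum distance, and $d < n$ so that $n-d \neq 0$.

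First I would write $P(T) = c \prod_{i=1}^{r}(T-\alpha_i)$ with leading coefficient $c$. A direct computation gives
\[ P(0) = c(-1)^r \prod_i \alpha_i \quad \text{and} \quad P'(0) = c(-1)^{r-1} \sum_i \prod_{j \neq i}\alpha_j, \]
so that
\[ \frac{P'(0)}{P(0)} = -\sum_{i=1}^{r}\alpha_i^{-1}. \]
(Here all $\alpha_i$ are nonzero since $P(0) \neq 0$: this follows from Corollary \ref{AD} because $A_d > 0$.)

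Next I would plug the two identities of Corollary \ref{AD} together. From the second identity,
\[ \frac{A_{d+1}}{q-1} = \binom{n}{d+1}\left( P(0)(q-d) + P'(0) \right) = \binom{n}{d+1}P(0)\left( q-d-\sum_i \alpha_i^{-1} \right), \]
and using the first identity $P(0) = A_d / \bigl((q-1)\binom{n}{d}\bigr)$ together with $\binom{n}{d+1}/\binom{n}{d} = (n-d)/(d+1)$, one obtains
\[ A_{d+1} = \frac{n-d}{d+1}\, A_d \left( q - d - \sum_i \alpha_i^{-1} \right). \]
Dividing by $A_d$ and rearranging yields the claimed equality
\[ d = q - \sum_i \alpha_i^{-1} - \frac{A_{d+1}}{A_d}\cdot\frac{d+1}{n-d}. \]
The inequality $d \leq q - \sum_i \alpha_i^{-1}$ is then immediate, since $A_{d+1} \geq 0$, $A_d > 0$, and $(d+1)/(n-d) > 0$.

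I do not foresee a genuine obstacle here: the entire argument is a two-line symbolic manipulation built on Corollary \ref{AD}, with the only mild subtlety being the need to justify $P(0) \neq 0$ (so that $\sum \alpha_i^{-1}$ is even defined) and to record the standing hypothesis $d^\perp \geq 2$ under which the corollary holds.
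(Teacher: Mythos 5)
Your proof is correct and follows exactly the route the paper intends: the paper's own proof is the one-line remark that the identity "follows from Corollary \ref{AD}," and your argument simply supplies the omitted computation (namely $P'(0)/P(0)=-\sum_i\alpha_i^{-1}$ combined with the two identities of that corollary). The extra care you take about $P(0)\neq 0$, $d^\perp\geq 2$, and $d<n$ is a welcome tightening of hypotheses the paper leaves implicit.
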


\begin{proof}
The first statement follows from Corollary \ref{AD}. The second statement is an easy consequence of the first. 
\end{proof}

\begin{definition}
A self-dual code $C$ is said to satisfy Riemann hypothesis if the real part of any zero of $\zeta_C(s)$ is $1/2$, or equivalently, the zeros of the zeta polynomial $P_C(T)$ lie on the circle $|T|=1/\sqrt q$, or equivalently, the roots of the self-reciprocal polynomial (see Proposition \ref{self reciprocal} above) $P_C(T/\sqrt q)$ lie on the unit circle.
\end{definition}

\begin{exmp}
Consider the binary code generated by the following matrix:
\begin{equation}
\left(\begin{array}{cccccccc}
1 & 0 & 0 & 0 & 0 & 1 & 1 & 1  \\
0 & 1 & 0 & 0 & 1 & 0 & 1 & 1  \\
0 & 0 & 1 & 0 & 1 & 1 & 0 & 1 \\
0 & 0 & 0 & 1 & 1 & 1 & 1 & 0  \\
\end{array} \right).
\end{equation}
The code above is in fact the $[8,4,4]$ extended Hamming code with
$|C|=16$ codewords. Then the Zeta function of this code is
\[Z(T)=\frac{2T^2+2T+1}{5(1-2T)(1-T)}.\] 
The roots of $2T^2+2T+1=0$ are $(-1/2)\pm(1/2)i$, and they both lie on the circle $|T|=1/\sqrt 2$.
\end{exmp}

While Riemann hypothesis is satisfied for curves over finite fields, in general it does not hold for linear codes. A result that generates many counterexamples may be found in \cite{JK}. There is a family of self-dual codes that satisfy the Riemann hypothesis which we are about to discuss. The theory involved in this description holds in more generality than linear codes and their weight enumerators, Namely, it applies to the so called \textit{virtual weight enumerators}. 

\subsection{Virtual Weight Enumerators}  

There is a straightforward generalization of the weight enumerator $A_C(x,y)$ of a linear code $C$.

\begin{definition}  
A homogeneous polynomial
\[ F(x,y)=x^n+\sum_{i=1}^{n}{f_ix^{n-i}y^i}\]
with complex coefficients is called a \textit{virtual weight enumerator}. The set 
\[ \{0\}\cup \{i:f_i\neq 0\}\]
 is called its \textit{support}. If 
 \[ F(x,y)=x^n+\sum_{i=d}^n f_ix^{n-i}y^i, \]
with $f_d\neq 0,$ then $n$ is called the \textit{length} and $d$ is called the \textit{minimum distance} of $F(x,y)$. 
\end{definition}

Let $C$ be a self-dual linear $[n,k,d]$-code. Recall that $n$ is even, $k=n/2$ and its weight enumerator satisfies MacWilliams' Identity.  A virtual generalization of $A_C(x,y)$ is straightforward. A virtual weight enumerator $F(x,y)$ of even degree that is a solution to MacWilliams' Identity
\[F(x,y)=F \left(\frac{x+(q-1)y}{\sqrt{q}},\frac{x-y}{\sqrt{q}} \right),\]
is called \textit{virtually self dual} over $\F_q$ with \textit{genus} $\gamma(F)=n/2+1-d$. Although a virtual weight enumerator in general does not depend on a prime power $q$, a virtually self-dual weight enumerator does. 

\begin{prob}
Find the conditions under which a (self-dual) virtual weight enumerator with positive integer coefficients arises from a (self-dual) linear code. 
\end{prob}

The zeta polynomial and the zeta function of a virtual weight enumerator are defined as in the case of codes.

\begin{prop} [\cite{Ch}]
Let $F(x,y)$ be a virtual weight enumerator of length $n$ and minimum distance $d$. Then, there exists a unique function $P_F(T)$ of degree at most $n-d$ which satisfies the following
\[\frac{(y(1-T)+xT)^n}{(1-T)(1-qT)}P_F(T)=\ldots +\frac{F(x,y)-x^n}{q-1}T^{n-d} +\ldots\]
\end{prop}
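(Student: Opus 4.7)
The plan is to mimic the proof already given for the analogous proposition about linear codes, observing that that argument never used the combinatorial meaning of the $A_i$ — only the fact that the weight enumerator had the form $x^n + \sum_{i=d}^n A_i x^{n-i} y^i$. Since virtual weight enumerators are defined by exactly this shape, the same argument transfers.

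First I would expand the kernel factor as before. Using
\[
\frac{1}{(1-T)(1-qT)} = \sum_{j=0}^{\infty} \frac{q^{j+1}-1}{q-1} T^j,
\qquad
(y(1-T)+xT)^n = \sum_{i=0}^n \binom{n}{i} y^{n-i}(x-y)^i T^i,
\]
define $c_k(x,y)$ by
\[
\frac{(y(1-T)+xT)^n}{(1-T)(1-qT)} = \sum_{k=0}^{\infty} c_k(x,y)\, T^k,
\qquad
c_k(x,y) = \sum_{i+j=k} \frac{q^{j+1}-1}{q-1}\binom{n}{i} y^{n-i}(x-y)^i.
\]
Each $c_k$ is homogeneous of degree $n$ in $x,y$, and expanding $(x-y)^i$ shows that $c_k(x,y) = \sum_{l=0}^{k} b_{k,l}\, x^l y^{n-l}$ with
\[
b_{k,l} = \sum_{i=l}^{k} \frac{q^{k-i+1}-1}{q-1}(-1)^{i-l}\binom{n}{i}\binom{i}{l},
\qquad 0 \le l \le k.
\]

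Next I would write an unknown candidate $P_F(T) = \sum_{i=0}^{n-d} a_i T^i$ and extract the coefficient of $T^{n-d}$ in the product $\bigl(\sum c_k T^k\bigr) P_F(T)$. The prescribed condition that this coefficient equals $(F(x,y) - x^n)/(q-1) = \frac{1}{q-1}\sum_{i=d}^{n} f_i x^{n-i} y^i$ translates into the identity
\[
\sum_{i=0}^{n-d} a_i\, c_{n-d-i}(x,y) \;=\; \frac{1}{q-1}\sum_{i=d}^{n} f_i\, x^{n-i} y^i.
\]
Comparing coefficients of the monomials $x^l y^{n-l}$ for $l = 0, 1, \dots, n-d$ gives a linear system $B\mathbf{a} = \mathbf{f}$, where $B = (b_{k,l})^{t}$, $\mathbf{a} = (a_{n-d}, \dots, a_0)^t$, and $\mathbf{f}$ is built from $f_d, \dots, f_n$ divided by $q-1$. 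Crucially, the support of the right-hand side only involves monomials $x^l y^{n-l}$ with $l \le n-d$, matching exactly the degree budget of the $c_k$'s available.

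Finally I would observe that $B$ is lower triangular with diagonal entries $b_{i,i} = \binom{n}{i}$, all nonzero, so $\det B = \prod_{i=0}^{n-d}\binom{n}{i} \ne 0$. Therefore $\mathbf{a}$ is uniquely determined, which proves both existence and uniqueness of $P_F(T)$. The only subtlety — and the one place I would be most careful — is checking that a virtual weight enumerator really does have $F(x,y)-x^n$ supported in monomials $y^{n-i}$ with $i \ge d$, so that the coefficient being matched on the right lies in the span of $\{c_0, c_1, \dots, c_{n-d}\}$; this is just the definition of minimum distance $d$ for a virtual weight enumerator and causes no real difficulty. No properties beyond those shared with a genuine code's weight enumerator are used, so the proof carries over verbatim from the code case.
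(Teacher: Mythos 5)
Your proof is correct and is essentially the paper's own argument: the paper states this virtual version without proof (citing Chinen), but its proof of the analogous proposition for linear codes is exactly the computation you reproduce, and you correctly observe that it only uses the shape $x^n+\sum_{i\ge d}f_ix^{n-i}y^i$, never the combinatorial meaning of the coefficients. The triangularity of $B=(b_{k,l})^t$ with diagonal entries $\binom{n}{i}$ is the same key step the paper uses, so nothing further is needed.
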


\noindent The polynomial $P_F(T)$ and the function 
\[ Z_F(T):=\frac{P(T)}{(1-T)(1-qT)},\]
 are called respectively \textit{the zeta polynomial and the zeta function of the virtual weight enumerator} $F(x,y)$.

\begin{definition}
A virtual self-dual weight enumerator satisfies the Riemann hypothesis if the zeroes of its zeta polynomial $P_F(T)$ lie on the circle $|T|=1/\sqrt{T}$.
\end{definition}

There is a family of virtual self-dual weight enumerators that satisfy Riemann hypothesis. It consists of enumerators that have certain divisibility properties.

\begin{definition}
Let $b>1$ be an integer. If supp$(F)\subset b \Z$, then $F$ is called $b$-\textit{divisible}.
\end{definition}

\begin{thm}[Gleason-Pierce]
Let 
\[ \displaystyle{F(x,y)=x^n+\sum_{i=d}^n f_ix^{n-i}y^i}\]
 be a $b$-divisible, virtually self-dual weight enumerator over $\F_q$. Then either

\begin{description}
\item [I] $q=b=2$ or
\item [II] $q=2, b=4$ or
\item [III] $q=b=3$ or
\item [IV] $q=4, b=2$ or
\item [V] $q$ is arbitrary, $b=2$, and $F(x,y)=(x^2+(q-1)y^2)^{n/2}$.

\end{description}
\end{thm}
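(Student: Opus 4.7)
The plan is to recast both hypotheses as invariance under a single subgroup of $GL_2(\mathbb{C})$ and then classify which pairs $(q,b)$ can admit a polynomial invariant whose leading term is $x^n$. Virtual self-duality is invariance of $F$ under the involution $M_q$ represented by the matrix with rows $(1, q-1)/\sqrt{q}$ and $(1, -1)/\sqrt{q}$; a direct $2\times 2$ matrix product confirms $M_q^2 = I$. The $b$-divisibility condition $\mathrm{supp}(F) \subset b\mathbb{Z}$ is equivalent to invariance under the diagonal matrix $D_b = \mathrm{diag}(1, \zeta_b)$ with $\zeta_b = e^{2\pi i/b}$, an element of order exactly $b$. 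Consequently $F$ lies in the invariant ring $\mathbb{C}[x,y]^{G}$ of the group $G := \langle M_q, D_b \rangle \subset GL_2(\mathbb{C})$.

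Next I would analyze the product $M := M_q D_b$, whose determinant is $-\zeta_b$ and whose trace is $(1-\zeta_b)/\sqrt{q}$. For $G$ to be a finite group (and therefore to admit nontrivial polynomial invariants of arbitrarily large degree), $M$ must have finite order, so both eigenvalues $\lambda_1, \lambda_2$ are roots of unity. Writing $\lambda_j = e^{i\phi_j}$, the constraint $\lambda_1\lambda_2 = -\zeta_b$ fixes $\phi_1 + \phi_2$ modulo $2\pi$, while the trace identity becomes $2\cos\bigl(\tfrac{\phi_1-\phi_2}{2}\bigr) e^{i(\phi_1+\phi_2)/2} = (1-\zeta_b)/\sqrt{q}$. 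Taking absolute values yields $|\cos\bigl(\tfrac{\phi_1-\phi_2}{2}\bigr)| = \sin(\pi/b)/\sqrt{q}$, and requiring $(\phi_1-\phi_2)/2$ to be a rational multiple of $\pi$ (so that the eigenvalues are roots of unity) lets one invoke Niven's theorem on algebraic values of $\cos(r\pi)$. A short case analysis on small $b$ then leaves exactly the four pairs $(q,b) \in \{(2,2),(2,4),(3,3),(4,2)\}$, which are cases I--IV.

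Case V is the complementary situation, where $M$ has infinite order so that $G$ is infinite yet $F$ must remain a $G$-invariant polynomial. A direct calculation (essentially the identity already used implicitly in verifying MacWilliams for the repetition code) shows the quadratic $Q(x,y) := x^2 + (q-1)y^2$ is always $M_q$-invariant, and $Q$ is $D_b$-invariant iff $b = 2$, since otherwise $y \mapsto \zeta_b y$ alters the coefficient of $y^2$. I would then argue that when $M$ has infinite order the only $G$-invariant homogeneous polynomial with $x^n$-coefficient equal to $1$ is $Q^{n/2}$: the cyclic subgroup $\langle M \rangle$ is Zariski-dense in a one-dimensional algebraic torus in $GL_2(\mathbb{C})$, whose invariants cut the degree-$n$ piece down to monomials in suitable eigencoordinates, and intersecting with $D_b$-invariance forces $b = 2$ and $F = Q^{n/2}$.

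The main obstacle is step two: one must verify carefully that the Niven-type restrictions rule out \emph{every} pair $(q,b)$ outside the listed four, including those with large $b$ where the algebraic degree of $\zeta_b$ grows and more roots of unity become potential solutions of the trace equation. A secondary but real difficulty is the infinite-order analysis of step three, where rigidity of polynomial invariants under a non-semisimple two-generator subgroup of $GL_2(\mathbb{C})$ must be established cleanly enough to force $F = Q^{n/2}$ without computing the full invariant ring.
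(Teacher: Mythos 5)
Your setup is sound and in fact mirrors the paper's: the paper also encodes the two hypotheses as invariance under the group generated by $E=\mathrm{diag}(1,\epsilon)$ and the MacWilliams matrix $M$ (working in $PGL(2,\C)$ rather than $GL_2(\C)$), and it also analyzes the product $ME$ — your $M_qD_b$ — through its characteristic polynomial. Indeed your key quantity is literally the paper's: $4\sin^2(\pi/b)/q = -(\epsilon-1)^2/(\epsilon q)$, so your trace identity is a trigonometric rewriting of the paper's statement that $(\lambda_1+\lambda_2)^2/(\lambda_1\lambda_2)$ is an algebraic integer. The problem is that your proposal stops exactly where the theorem's real content begins, and you say so yourself. ``Niven's theorem plus a short case analysis on small $b$'' cannot close the finite-order branch: Niven's theorem governs \emph{rational} values of $\cos(r\pi)$, while your values $\sin(\pi/b)/\sqrt{q}$ are typically irrational (already in case I the value is $1/\sqrt{2}$), and their algebraic degree grows with $b$, which is precisely why a case analysis on small $b$ proves nothing about large $b$. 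The uniform-in-$b$ argument the paper uses is algebraic number theory: since $\lambda_1/\lambda_2$ is a root of unity, $-(\epsilon-1)^2/(\epsilon q)$ is an algebraic integer; if $b$ is not a prime power then $1-\epsilon$ is a unit in $\Z[\epsilon]$, an immediate contradiction; if $b=p^m$ then the ideal identity $(1-\epsilon)^{\phi(b)}=(p)$ forces $\phi(b)\le 2$, hence $b\in\{2,3,4\}$, after which $q$ is pinned down in each case. Without this (or an equivalent cyclotomic-integrality argument), cases I--IV are not established.

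Your infinite-order branch has a second, smaller gap: the claim that $\langle M\rangle$ is Zariski-dense in a one-dimensional torus needs $M$ to be semisimple (true here, since the discriminant condition $\sin^2(\pi/b)=q$ is impossible, but you must say so), needs a multiplicative relation between the eigenvalues (if they are multiplicatively independent the closure is a two-dimensional torus with only constant invariants, which is how that subcase dies), and needs an actual computation identifying the degree-$n$ invariants and excluding $b\ge 3$ — none of which is carried out. Note the paper sidesteps all of this with an elementary alternative: it dichotomizes on $\#Z(F)$, the number of projective zeros of $F$, rather than on the order of $M$. If $\#Z(F)\ge 3$ then $G$ is automatically finite (a projective transformation is determined by three points), and the only remaining case is $b=2$ with exactly two roots, where the $E$- and $M$-symmetries force the roots to be $(\pm i\sqrt{q-1},1)$ and hence $F=\bigl(x^2+(q-1)y^2\bigr)^{n/2}$ by a direct computation. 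If you want to salvage your outline, the cheapest repair is to adopt the paper's dichotomy for case V and replace your Niven step with the algebraic-integer/cyclotomic-unit argument; as written, both decisive steps are acknowledged obstacles rather than proofs.
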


\begin{proof}
We follow \cite{amt}. Let $\epsilon$ be a primitive $b$-th root of unity. Then,
\[ F(x,\epsilon y)=F(x,y).\]
Let $G\lhd PGL(2,\C)$ the subgroup generated by the following matrices 

\[ 
E:=\left( \begin{array}{cc}
1 & 0  \\
0 & \epsilon 
\end{array} \right),~
M=\left( \begin{array}{cc}
1 & q-1  \\
1 & -1 
\end{array} \right).
\]

\noindent The linear action of $G$ on the projective space $\P^1(\C)$ descends into an action on the zero locus 
\[ Z(F):=\{(x,y)\in \P^1(\C)~:~F(x,y)=0\}.\]
We notice that $(1,0)\notin Z(F)$. There are no fixed points for the action of $G$, therefore 
\[ \#(Z(F))>1.\]
Recall that a linear action on $\P^1(\C)$ is determined by the image of three points. It follows that if $\#(Z(F))\geq 3$ then $G$ is finite.

\textbf{Case 1}: $b=2$ and $F(x,y)$ has only two roots. Notice that when $b=2$ both $(0,1),~(1,0)\notin Z(F)$. Let $(\alpha,1)$ and $(-\alpha,1)$ be the roots of $F(x,y)$, $\alpha\neq 0$. Since $\epsilon=-1$, the matrix $E$ permutes these two roots. On the other hand 
\[ M\cdot (\alpha,1)=(\frac{\alpha+q-1}{\alpha-1},1)\]
must be either $(\alpha,1)$ or $(-\alpha,1)$. 

If 
\[ \frac{\alpha+q-1}{\alpha-1}=\alpha\]
then one can easily see that $(\alpha,1),~(-\alpha,1)$ and $M\cdot (-\alpha,1)$ are three roots of $F(x,y)$, violating the assumption that $F$ has two roots. Hence $M\cdot (\alpha,1)=(-\alpha,1)$, i.e. 
\[ \frac{\alpha+q-1}{\alpha-1}=-\alpha.\]
Hence $\alpha=\pm i \sqrt{q-1}$, therefore $(i \sqrt{q-1},1),~(-i \sqrt{q-1},1)$ are the only roots of $F(x,y)$. Since $b=2$,  $F(x,y)$ is a polynomial of $x^2,y^2$. It follows that 
\[ F(x,y)=[(x+i \sqrt{q-1}y)(x-i \sqrt{q-1}y)]^{n/2}=[(x^2+(q-1)y^2]^{n/2}.\]
This is case {\bf V} in the theorem.

\textbf{Case 2}: $b=2$ and $F(x,y)$ has more than two roots, or $b\geq 3$. Notice that if $b\geq 3$ then $\#(Z(F))\geq 3.$ Indeed, if $\alpha\neq 0$ and $(\alpha,1)\in Z(F)$, then $(\alpha \epsilon^i,1)\in Z(F)$ for $i=0,1, \dots ,b-1$. If $(0,1)\in Z(F)$, then $M\cdot (0,1)=(1-q,1)\in Z(F)$ hence $((1-q)\epsilon^i,1)\in Z(F)$ for $i=0,1, \dots ,b-1$. 

It follows that in this case $G$ is finite. Therefore every element of $G$ has finite order. Let $k$ be the order of the matrix 
\[ 
ME=\left( \begin{array}{cc}
1 & \epsilon(q-1)  \\
1 & -\epsilon 
\end{array} \right).
\]
The eigenvalues $\lambda_1, \lambda_2$ of $ME$ satisfy the characteristic equation 
\[ \lambda^2+(\epsilon-1)\lambda-\epsilon q=0.\]
 The equation $(ME)^k=cI$ implies that $({\lambda_1}/{\lambda_2})^k=1.$ Hence $\epsilon, \lambda_1/\lambda_2, \lambda_2/\lambda_1$ are algebraic integers, therefore 
 \[ \left(2+\frac{\lambda_1}{\lambda_2}+\frac{\lambda_2}{\lambda_1}\right)=\frac{(\lambda_1+\lambda_2)^2}{\lambda_1\lambda_2}=-\frac{(\epsilon-1)^2}{\epsilon q}\]
is also an algebraic integer. It follows that 
\[ -\frac{(\epsilon-1)^2}{\epsilon q},~\frac{(\epsilon-1)^2}{q}\in \Z[\epsilon].\]
If $b$ is not a prime power then $\epsilon-1$ is a unit in $\Z[\epsilon]$, therefore $\displaystyle{\frac{(\epsilon-1)^2}{q}\notin \Z[\epsilon].}$ If $b$ is a power of a prime number $p$, then in $\Z[\epsilon]$ we have an equality of ideals 
\[ (1-\epsilon)^{\phi(b)}=(p),\]
 where $\phi$ denotes the Euler function. It follows that $\phi(b)=1$ or $\phi(b)=2$. If $\phi(b)=1$ then $b=2$. Therefore $\epsilon=-1$, hence $4/q$ is an integer. In this case $q=2$ or $q=4$. If $\phi(b)=2$ then $b=3,4,6$. But $b\neq 6$, otherwise $-(\epsilon-1)/q\epsilon=1/q$, would be an algebraic integer! If $b=3$, then $\displaystyle{\frac{-(\epsilon-1)^2}{\epsilon}}=3$. Therefore $q=3$. If $b=4$ then $\epsilon=i$ and $2/q$ must be an integer. It follows that $q=2$.
\end{proof}

\begin{definition} 
A $b$-divisible virtually self-dual weight enumerator $F(x,y)$ over $\F_q$ is called
\begin{description}
\item [Type I] if $q=b=2,~2|n$.
\item [Type II] if $q=2, b=4, 8|n$.  
\item [Type III] if $q=b=3, 4|n$.
\item [Type IV] if $q=4, b=2, 2|n$.
\end{description}
\end{definition}

\begin{thm}[Mallows-Sloane-Duursma]  \label{bound}
If $F(x,y)$ is a $b$-divisible self-dual virtual enumerator with length $n$ and minimum distance $d$, then
\[
d \leq 
\left\{ 
\begin{split}
& 2 \left[  \frac{n}{8}\right] +2,  \quad & \text{ if  F is Type I},\\
& 4 \left[ \frac{n}{24}\right]  +4, \quad & \text{ if F is Type II}, \\
& 3  \left[ \frac{n}{12}\right] +3, \quad &  \text{ if F is Type III}, \\
& 2  \left[ \frac{n}{6}\right] +2, \quad &  \text{ if F is Type IV}.\\
\end{split}
\right.
\]
\end{thm}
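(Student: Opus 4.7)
The plan is to invoke Gleason's theorem on invariants of finite linear groups acting on $\mathbb{C}[x,y]$, combined with a dimension count that pins down the minimum distance. MacWilliams' identity together with $b$-divisibility makes $F(x,y)$ invariant under the finite group $G \subset GL(2,\mathbb{C})$ generated by the (rescaled) MacWilliams matrix $M$ from the proof of Gleason--Pierce and the diagonal matrix $E = \operatorname{diag}(1,\epsilon)$ with $\epsilon$ a primitive $b$-th root of unity; finiteness of $G$ in each of the four cases is essentially what the Gleason--Pierce theorem already establishes. Granted finiteness, Gleason's theorem asserts that the invariant ring $\mathbb{C}[x,y]^G$ is a polynomial algebra on two algebraically independent homogeneous generators $\phi_1,\phi_2$ whose degrees $(d_1,d_2)$ are $(2,8)$, $(8,24)$, $(4,12)$, $(2,6)$ for Types I, II, III, IV respectively.

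\textbf{First,} I would exhibit explicit invariant generators and verify $G$-invariance directly; for Type I one may take $\phi_1 = x^2+y^2$ (the weight enumerator of $\{00,11\}$) and $\phi_2 = x^2y^2(x^2-y^2)^2$, with canonical choices available for the other types (e.g.\ the weight enumerators of the extended Hamming $[8,4,4]$ and the extended Golay codes). \textbf{Next,} once the polynomial structure of $\mathbb{C}[x,y]^G$ is in hand, every degree-$n$ invariant has a unique expansion
\[
F(x,y) \;=\; \sum_{j=0}^{\lfloor n/d_2 \rfloor} c_j\, \phi_1^{(n - d_2 j)/d_1}\, \phi_2^{j},
\]
which has $\lfloor n/d_2 \rfloor + 1$ free coefficients. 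Normalizing the coefficient of $x^n$ to $1$ fixes $c_0$, leaving $\lfloor n/d_2 \rfloor$ coefficients that can be tuned so as to annihilate the coefficients of the lowest possible nonzero weights. Because all supported weights of a $b$-divisible enumerator lie in $b\mathbb{Z}$, the weights that can be forced to vanish are precisely $b, 2b, \dots, b\lfloor n/d_2 \rfloor$, yielding
\[
d \;\leq\; b\bigl(\lfloor n/d_2 \rfloor + 1\bigr).
\]
Substituting the four pairs $(b,d_2) = (2,8),\,(4,24),\,(3,12),\,(2,6)$ produces exactly the four stated bounds.

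\textbf{The main obstacle} is establishing the polynomial ring structure of $\mathbb{C}[x,y]^G$ with generators of the prescribed degrees, i.e.\ Gleason's theorem itself. The standard route is via Molien's formula
\[
H_G(t) \;=\; \frac{1}{|G|}\sum_{g \in G}\frac{1}{\det(1 - t\, g)},
\]
computed as an explicit finite sum over $G$; one verifies in each of the four cases that $H_G(t) = 1/\bigl((1-t^{d_1})(1-t^{d_2})\bigr)$, which is the Hilbert series of the claimed free graded algebra. Combined with the explicit invariants of degrees $d_1, d_2$ produced above and a Jacobian rank check for algebraic independence, this forces $\mathbb{C}[x,y]^G = \mathbb{C}[\phi_1,\phi_2]$. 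A secondary technical point worth flagging is that the congruence conditions $2\mid n$, $8\mid n$, $4\mid n$, $2\mid n$ on the length built into the four types are precisely what is required for $n$ to admit a valid expansion $n = d_1 a + d_2 j$ with $a,j \geq 0$ in the Step 2 decomposition; these congruences are themselves imposed by a scalar matrix sitting inside $G$ that acts nontrivially on monomials of the wrong degree.
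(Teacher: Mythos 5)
Your invariant-theoretic setup is the standard (and correct) route: the paper itself gives no proof of this theorem but defers to \cite{JK}, where the argument runs exactly through Gleason's theorem and the expansion $F = \sum_j c_j\,\phi_1^{(n-d_2 j)/d_1}\phi_2^{\,j}$. The generator degrees $(d_1,d_2)$ you list for the four types are right, and the Molien-series plan for establishing $\C[x,y]^G = \C[\phi_1,\phi_2]$ is sound.

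The genuine gap is in your final step. From the count of $m+1 = \lfloor n/d_2\rfloor + 1$ coefficients you conclude that ``the weights that can be forced to vanish are precisely $b, 2b, \dots, bm$, yielding $d \leq b(m+1)$.'' What the parameter count actually proves is only the existence direction: there is an enumerator (the extremal one) with $A_b = \cdots = A_{bm} = 0$. That is a lower-bound-type statement about how large $d$ \emph{can} be made; the theorem is an upper bound valid for \emph{every} $F$, and for that you must rule out any enumerator with $A_b = \cdots = A_{b(m+1)} = 0$. Two ingredients are needed. First, the linear map $(c_0,\dots,c_m) \mapsto (A_0, A_b, \dots, A_{bm})$ must be shown invertible; it is triangular with nonzero diagonal precisely because each of your $\phi_2$'s has its lowest-weight term at weight exactly $b$ (worth verifying explicitly), and invertibility shows that any $F$ annihilating the first $m$ coefficients is the \emph{unique} extremal enumerator $F^*$. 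Second --- and this is the heart of the Mallows--Sloane proof and of Duursma's Type IV case --- one must prove $A_{b(m+1)}(F^*) \neq 0$. Nothing in a dimension count forces this: an overdetermined linear system can still be consistent, so a priori $F^*$ could have further accidental zeros and $d$ could exceed the bound. Mallows and Sloane establish the nonvanishing by deriving a closed formula for this coefficient of the extremal enumerator via B\"urmann's theorem (Lagrange inversion) and exhibiting it as a product of strictly positive factors. Without that positivity argument, your proposal proves the existence of extremal enumerators but not the stated bound on $d$.
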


\noindent See  \cite{JK} for details of the proof.

\begin{definition}
A virtually self-dual weight enumerator $F(x,y)$ is called \textit{extremal} if the bound in Theorem \ref{bound} holds with equality. 
\end{definition}

\begin{definition}
A linear code $C$ is called $b$-divisible, extremal, Type I, II, II, IV if and only if  its weight enumerator has the corresponding property.
\end{definition}

\noindent The zeta functions of all extremal virtually self-dual weight enumerators are known; see \cite{D3}. 
The following result can be found in \cite{D3}. 

\begin{prop} 
All extremal type IV virtual weight enumerators satisfy the Riemann hypothesis.
\end{prop}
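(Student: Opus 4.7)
The plan is to exploit the explicit invariant-theoretic description of Type IV self-dual virtual weight enumerators and pair it with a classical location-of-zeros result for self-reciprocal polynomials coming from orthogonal polynomial families.

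First I would use the Gleason--Pierce theorem in Case IV ($q=4$, $b=2$) to observe that every Type IV virtually self-dual weight enumerator of even length $n$ lies in a ring of polynomial invariants generated by two fundamental invariants of the Clifford--Weil action associated to $(q,b)=(4,2)$. The extremality condition $d = 2\lfloor n/6 \rfloor + 2$ imposes the maximal possible number of vanishing constraints $f_1 = f_2 = \cdots = f_{d-1} = 0$ on $F(x,y)$. A dimension count in the invariant ring against the number of such constraints pins down $F(x,y)$ uniquely (up to scalar) for each admissible length $n$; this rigidity is precisely what makes the extremal family so tractable.

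Next I would feed this unique $F(x,y)$ into the construction of $P_F(T)$ via its $M_{n,i}$-basis expansion, as in the virtual analogue of Definition \ref{M_n basis}. Because $F$ is determined by a small number of invariants, the coefficients of $P_F(T)$ admit a closed form, and one should be able to recognize the resulting polynomial as a terminating Gauss hypergeometric series in $T$, with parameters depending only on $n \bmod 6$. This is the computational heart of \cite{D3}.

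Finally, to locate the zeros on $|T| = 1/\sqrt{q} = 1/2$, I would invoke Proposition \ref{self reciprocal} to conclude that $P_F^s(T) := P_F(T/2)$ is self-reciprocal of degree $2g$, so its zeros pair as $\alpha, 1/\alpha$. Writing $P_F^s(T) = T^{g} Q(T + T^{-1})$ for a degree-$g$ polynomial $Q$, the Riemann hypothesis becomes equivalent to $Q$ having all $g$ zeros real and contained in $[-2,2]$. Identifying $Q$, via the hypergeometric formula from Step 2, with a rescaled classical Jacobi (or Gegenbauer) polynomial then yields the conclusion, since the zeros of such polynomials are known to be real, simple, and confined to the required interval.

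The main obstacle will be the explicit identification in the transition from Step 2 to Step 3: matching the hypergeometric parameters coming out of the extremal Type IV construction with a standard orthogonal polynomial family for which a real-zeros theorem is available. Once this identification is carried out, all zeros of $P_F(T)$ lie on $|T| = 1/\sqrt{q}$, and the Riemann hypothesis follows uniformly across the extremal Type IV family.
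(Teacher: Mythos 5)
The paper does not actually prove this proposition: it is stated as imported from \cite{D3}, so the only ``proof'' in the paper is that citation. Your plan is, in substance, a reconstruction of the strategy of the cited source. Duursma's argument does proceed by (i) using the Gleason-type description of the Type IV invariant ring (generated in degrees $2$ and $6$) together with the extremality constraints to single out the unique extremal enumerator of each admissible length, (ii) computing the resulting zeta polynomial in closed form, and (iii) passing to the self-reciprocal polynomial $P_F(T/\sqrt{q})$ and identifying the associated polynomial $Q$ in the variable $T+T^{-1}$ with an ultraspherical (Gegenbauer/Jacobi) polynomial, whose zeros are real and confined to the required interval. Your final reduction is also stated correctly: for a palindromic $P_F^s$ of degree $2g$ one can write $P_F^s(T)=T^{g}Q\left(T+T^{-1}\right)$, and the zeros of $P_F^s$ all lie on the unit circle if and only if all $g$ zeros of $Q$ are real and lie in $[-2,2]$, since $\alpha+\alpha^{-1}$ maps the unit circle onto $[-2,2]$ and a real value $c$ with $|c|\leq 2$ forces $\alpha$ to be a root of $z^2-cz+1$, hence of modulus one.

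Be clear-eyed, however, that what you label ``the main obstacle'' is not a technical detail but the entire mathematical content of the theorem. The steps you do carry out are the routine ones: the uniqueness of the extremal enumerator (which, note, needs more than a dimension count --- matching the number of constraints to the dimension of the space of degree-$n$ invariants only bounds the solution set, and one must verify nonsingularity of the linear system, as in the Mallows--Sloane argument behind Theorem \ref{bound}), and the general equivalence of the Riemann hypothesis with real-rootedness of $Q$ on $[-2,2]$, which holds for any virtually self-dual enumerator and has nothing to do with Type IV specifically. Until the closed form of $P_F(T)$ is exhibited and the orthogonal-polynomial identification is actually verified --- this is precisely Duursma's computation in \cite{D3} --- no zeros have been located, and nothing distinguishes Type IV from Types I--III, for which the same statement is only conjectural. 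So as written your proposal is a correct roadmap to the known proof rather than a proof.
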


\noindent For all other extremal enumerators, Duursma has suggested the following conjecture in \cite{D4}.

\begin{prob} 
Prove that any extremal virtual self-dual weight enumerators of type I-III satisfies the Riemann hypothesis.
\end{prob}

\subsection{Formal weight enumerators}  

Fomal weight enumerators are introduced by Chinen in \cite{Ch1}. They are similar to virtual weight enumerators of type II. In this section we discuss the zeta polynomials and its fundtional equation, as well as Riemann hypothesis for extremal formal weight enumerators. All the definitions and the results may be found in \cite{Ch1}\cite{Ch}.
\begin{definition}(Chinen (\cite{Ch1})
A homogeneous polynomial $\displaystyle{W(x,y)=\sum_{i=1}^nW_ix^{n-i}y^i}$ is called a \textbf{formal weight enumerator} if the following two conditions are satisfied: 

\begin{description}

\item [(a)] If $W_i\neq 0$ then $4|i$, and 

\item [(b)] $\displaystyle{W\left(\frac{x+y}{\sqrt 2},\frac{x-y}{\sqrt 2}\right)=-W(x,y)}$.

\end{description}
\end{definition}

Let $\C [x, y]$ be the polynomial ring in two variables and $PGL_2 (\C)$ acting on $\C [x, y]$ by a linear change of coordinates, i.e.,  
for a matrix $M=\begin{pmatrix} a & b \\ c & d \end{pmatrix}$, we have 
\[ f^M (x, y) = f(ax+by, cx+dy). \]
Let $G_8$ be the subgroup of $PGL_2 (\C)$ generated as follows:
\[ G_8 = \left\langle \sigma_1=\frac {1-i} 2 \begin{pmatrix} 1 & -1 \\ 1 & 1 \end{pmatrix}, \quad \sigma_2=\begin{pmatrix} -i & 0 \\ 0 & 1 \end{pmatrix} \right\rangle \] 
with $i^2=-1$. Weight enumerators of type II curves and formal weight enumerators lie in the invariant polynomial ring $\C[x,y]^{G_8}$. For the later ones, the invariance under the action of 
\[\sigma_2(\sigma_1^2\sigma_2^3)^2=\begin{pmatrix} 1 & 0 \\ 0 & i \end{pmatrix}\]
explains condition (a).

\begin{lem} The following statements hold true:

i)  The invariant ring $\C [x, y]^{G_8}$ is generated by the polynomials 
\[ W_8(x,y)= x^8 + 14x^4y^4 + y^8, \textit{  and } W_{12}(x,y)= x^{12} - 33x^8y^4 - 33x^4y^8 + y^{12}\]

ii) A formal weight enumerator is a symmetric polynomial, i.e., $W(x, y) = W(y, x)$. 

iii) A formal weight enumerator $W(x, y)$ can be written as $W(x, y) = g (\bar x, \bar y)$, where $\bar x = x^4$ and $\bar y = y^4$ and $g \in \C [\bar x, \bar y]$.
\end{lem}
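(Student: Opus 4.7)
The plan is to handle (ii) and (iii) first, since they follow quickly from conditions (a), (b), and the involutive nature of the MacWilliams map, and then tackle (i), which is the main obstacle and requires invariant theory.

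For (ii), I start from condition (b), $W(\tfrac{x+y}{\sqrt 2}, \tfrac{x-y}{\sqrt 2}) = -W(x,y)$, and apply it with $x$ and $y$ interchanged to get $W(\tfrac{x+y}{\sqrt 2}, -\tfrac{x-y}{\sqrt 2}) = -W(y,x)$. Condition (a) forces every nonzero monomial of $W$ to be divisible by $y^4$, so $W(u,-v) = W(u,v)$. Substituting on the left-hand side identifies $W(\tfrac{x+y}{\sqrt 2}, \tfrac{x-y}{\sqrt 2})$ with $-W(y,x)$, and comparing with (b) yields $W(x,y) = W(y,x)$. For (iii), condition (a) guarantees $4\mid i$ for every nonzero term $W_i x^{n-i}y^i$. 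The symmetry relation $W_i = W_{n-i}$ from (ii) then forces $4\mid(n-i)$ as well, so $4\mid n$. Each monomial factors as $x^{n-i}y^i = (x^4)^{(n-i)/4}(y^4)^{i/4}$, so $W$ is a polynomial in $\bar x = x^4$ and $\bar y = y^4$.

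For (i), the plan is to apply the Chevalley--Shephard--Todd theorem. First I would check that the generators act as pseudoreflections: $\sigma_2 = \operatorname{diag}(-i,1)$ has eigenvalues $-i,1$ and order $4$; and the matrix $\begin{pmatrix} 1 & -1 \\ 1 & 1\end{pmatrix}$ has eigenvalues $1 \pm i$, so $\sigma_1$ has eigenvalues $\tfrac{1-i}{2}(1+i)=1$ and $\tfrac{1-i}{2}(1-i)=-i$, again a pseudoreflection of order $4$. Provided $G_8$ is finite, it is then a complex reflection group of rank two, and Chevalley--Shephard--Todd yields $\mathbb{C}[x,y]^{G_8} = \mathbb{C}[f_1,f_2]$ with $\deg f_1 \cdot \deg f_2 = |G_8|$. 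Next I would verify that $W_8$ and $W_{12}$ are $G_8$-invariant: $\sigma_2$-invariance is immediate from $4$-divisibility, and $\sigma_1$-invariance reduces via the substitutions $x\mapsto\tfrac{1-i}{2}(x-y)$, $y\mapsto\tfrac{1-i}{2}(x+y)$ to standard binomial identities using $(x-y)^4(x+y)^4 = (x^2-y^2)^4$. Algebraic independence of $W_8, W_{12}$ is checked by a short Jacobian computation, which along the line $(x,y)=(1,t)$ reduces to a nonzero multiple of $t^3(1-t^4)^3$. The main obstacle is determining $|G_8|$: the cleanest route is to identify our group with the Shephard--Todd reflection group $G_8$, which has order $96 = 8\cdot 12$, forcing the invariant degrees to be precisely $8$ and $12$ and completing the argument. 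A self-contained alternative is to compute the Molien series $\tfrac{1}{|G_8|}\sum_{g\in G_8}\det(I-tg)^{-1}$ directly and verify it equals $1/[(1-t^8)(1-t^{12})]$, bypassing any classification.
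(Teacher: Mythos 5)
Your proof is correct, but its architecture is genuinely different from the paper's. The paper proves (i) first --- checking that $W_8, W_{12}$ are fixed by the generators and leaving the key fact, that the extension $\C[x,y]/\C[W_8,W_{12}]$ has degree $|G_8|$, as an exercise --- and then deduces (ii) and (iii) as corollaries of (i), using only that $W_8$ and $W_{12}$ are symmetric and are polynomials in $x^4,y^4$. You go the other way: (ii) and (iii) are proved directly from conditions (a) and (b) with no invariant theory at all, and (i) is handled by Chevalley--Shephard--Todd. Your direct route for (ii)/(iii) buys something real: the paper's deduction rests on the earlier unproved assertion that every formal weight enumerator lies in $\C[x,y]^{G_8}$, which actually requires knowing $n \equiv 4 \pmod 8$ (for degree divisible by $8$, conditions (a)+(b) give $W^{\sigma_1} = -W$, not invariance), a fact the paper only extracts \emph{after} the lemma; your elementary argument sidesteps this circularity entirely, and as a byproduct (iii) even yields $4 \mid n$. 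For (i), both routes ultimately rest on the same numerology $8\cdot 12 = 96 = |G_8|$ and both defer the computational core: the paper calls the degree count an exercise, while you defer finiteness and the order of the group to the identification with the Shephard--Todd reflection group $G_8$ (or, alternatively, to a Molien series computation, which is the more self-contained of your two options since it avoids invoking the classification). Your intermediate verifications are all correct as stated: both generators are pseudoreflections with eigenvalues $\{1,-i\}$, $\sigma_2$-invariance follows from $4$-divisibility of the exponents, $\sigma_1$-invariance of $W_8$ and $W_{12}$ checks out, and the Jacobian of $(W_8,W_{12})$ along $(x,y)=(1,t)$ is indeed $-1728\,t^3(1-t^4)^3 \neq 0$, so the two invariants are algebraically independent.
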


\proof
It is easy to check that $W_8$ and $W_{12}$ are fixed by the generators of $G_8$. To show that $\C [x, y]^{G_8} = \C [W_8, W_{12}]$ we have to show that the extension 
$\C [x, y]/ \C [W_8, W_{12}]$ has degree $| G_8 |$.  We leave this as an exercise. 

Part ii) is an immediate consequence of Part i), since any formal weight enumerator is generated by $W_8$ and $W_{12}$ which are both symmetric in $x$ and $y$. The same can be said for Part iii). 
\qed

Notice that $W_8$ is the weight enumerator of the extended Hamming code, which is a type II code. The generator $W_{12}$ is formal weight enumerator. In general, a formal weight enumerator is ${W_8}^sW_{12}^{2t+1}$ for positive integers $s,t$, and linear combinations of such. It follows that a formal weight enumerator has degree $4(\text{mod~}8)$ and consists of an even number of terms.

If $W(x,y)=x^n+\sum_{i=d}^nW_ix^{n-i}y^i$ with $W_d\neq 0$, then $n$ is called \textit{the length} and $d$ \textit{the minimum distance} of $W(x,y)$. Set $q=2$ and define 
\[W^{\perp}=\displaystyle{W\left(\frac{x+y}{\sqrt 2},\frac{x-y}{\sqrt 2}\right)}\]
Just as with virtual weight enumerators, there exists a zeta polynomial $P^{\perp}(T)$ for $W^{\perp}(T)$ which satisfies 
\[P^{\perp}(T)=P(1/2T)2^gT^{2g},\]
where $g=n/2+1-d$. From the definition, $P^{\perp}(T)$ must coincide with the zeta polynomial of $-W(x,y)$. We obtain the following 

\begin{prop}
The zeta polynomial of a formal weight enumerator $W(x,y)$ satisfies
\[P(T)=-P(1/2T)2^gT^{2g}\]
\end{prop}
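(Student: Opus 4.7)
The plan is to apply the functional equation for dual zeta polynomials, established earlier in Theorem~3 for dual linear codes, to the MacWilliams transform $W^{\perp}=-W$ of a formal weight enumerator.

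First, I would observe that the duality relation $P^{\perp}(T)=P(1/qT)\,q^{g}T^{g+g^{\perp}}$ proved in Theorem~3 for dual codes actually holds for any homogeneous polynomial $F(x,y)=x^{n}+\sum_{i\geq d}f_{i}x^{n-i}y^{i}$ paired with its MacWilliams transform $F^{\perp}(x,y)=q^{-n/2}F(x+(q-1)y,x-y)$. The derivation given in the excerpt relies only on (i) the expansion of $F$ in the basis $\{M_{n,i}\}$ of MDS weight enumerators, which is available for any degree-$n$ homogeneous polynomial of the required form, and (ii) the transformation rule $M_{n,n+2-i}=q^{i-n-1}M_{n,i}(x+(q-1)y,x-y)$, which is purely algebraic. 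Neither ingredient uses the fact that $F$ comes from an actual code, so the functional equation extends to virtual and formal weight enumerators. Moreover, for $q=2$ and $F$ homogeneous of degree $n$, the MacWilliams transform above coincides by homogeneity with $F\bigl((x+y)/\sqrt{2},(x-y)/\sqrt{2}\bigr)$, the convention used in the definition of a formal weight enumerator.

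Second, I would determine how the zeta polynomial transforms under $W\mapsto -W$. In Chinen's characterization of the zeta polynomial, the coefficients of $P_{W}$ arise as the solution of a linear system $B\mathbf{a}=\mathbf{A}$, where $\mathbf{A}=\tfrac{1}{q-1}(A_{n},A_{n-1},\ldots,A_{d})^{t}$ records the nonleading coefficients of $W$ and $B$ is an invertible matrix depending only on $n$, $d$ and $q$. Replacing $W$ by $-W$ negates every entry of $\mathbf{A}$ while leaving $B$ unchanged, so the unique solution flips sign and $P_{-W}(T)=-P_{W}(T)$. Equivalently, since $-W$ has leading term $-x^{n}$, its defining generating-function identity reads
\[
\frac{(y(1-T)+xT)^{n}}{(1-T)(1-qT)}\,P_{-W}(T)=\cdots+\frac{(-W)-(-x^{n})}{q-1}T^{n-d}+\cdots,
\]
whose right-hand side is $-1$ times the one for $W$, forcing the sign flip.

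Third, combining the two steps with the defining property $W^{\perp}=-W$, and noting that $W$ and $W^{\perp}$ share the same length $n$ and minimum distance $d$ so that $g=g^{\perp}=n/2+1-d$, I would conclude
\[
-P(T)=P^{\perp}(T)=P(1/qT)\,q^{g}T^{2g},
\]
which with $q=2$ rearranges to $P(T)=-P(1/2T)\,2^{g}T^{2g}$, the asserted identity. The main obstacle I anticipate is Step~2: the zeta polynomial was originally defined for polynomials with leading term $+x^{n}$, so one must carefully justify extending the definition (or reinterpreting the defining linear system) to the polynomial $-W$ whose leading term is $-x^{n}$. Once this sign-flip rule is pinned down, Steps~1 and~3 are routine algebraic manipulations.
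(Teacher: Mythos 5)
Your proof is correct and follows essentially the same route as the paper: apply the dual functional equation $P^{\perp}(T)=P(1/qT)\,q^{g}T^{g+g^{\perp}}$ to the MacWilliams transform $W^{\perp}$, observe that condition (b) of the definition gives $W^{\perp}=-W$ and hence $P^{\perp}=-P$, and combine these with $g=g^{\perp}$ at $q=2$. The paper merely asserts the two points you take care to justify (that the duality relation extends beyond actual codes to formal enumerators, and the sign flip $P_{-W}=-P_{W}$), so your write-up is simply a more detailed version of the same argument.
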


Recall that weight enumerators of type II curves also lie in $\C [x, y]^{G_8}$. In contrast to formal weight enumerator, the zeta polynomial of a type II curve satisfies 
\[P(T)=P(1/2T)2^gT^{2g}\]
The last proposition can be used to find the roots of the zeta polynomial for a formal weight enumerator. They are $\alpha_1,1/2\alpha_1,\ldots \alpha_s,1/2\alpha_s$ for some $s$ and $\alpha_j\neq \pm1/{\sqrt 2}$, as well as $\pm1/{\sqrt 2}$ which occur in odd multiplicity. The proof is similar to \cite[Thm V.1.15]{st}.

\begin{thm}
For any formal weight enumerator of length $n$ and minimum distance $d$, we have
\[d\leq 4\left[\frac{n-12}{24}\right]+4.\]
\end{thm}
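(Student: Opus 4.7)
The plan is to reduce this bound to a dimension count together with a linear independence result in the invariant ring $\C[W_8, W_{12}]$, following the blueprint of the Mallows-Sloane-Duursma argument for Theorem~\ref{bound}. I would start by putting $W$ into a canonical form: by the preceding lemma, $W \in \C[W_8, W_{12}]$, and since $\sigma_1 W_8 = W_8$ while $\sigma_1 W_{12} = -W_{12}$, the defining anti-invariance $W \circ \sigma_1 = -W$ of a formal weight enumerator forces $W_{12}$ to appear to an odd power. Matching total degree $8s + 12(2t+1) = n$ gives $s + 3t = m$ with $m := (n-12)/8$, hence $n \equiv 4 \pmod 8$ and $n \geq 12$ (otherwise the theorem is vacuous). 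Thus
\[
W(x,y) = \sum_{t=0}^{\lfloor m/3 \rfloor} c_t\, W_8(x,y)^{m-3t}\, W_{12}(x,y)^{2t+1},
\]
depending on $M := \lfloor m/3 \rfloor + 1 = \lfloor (n-12)/24 \rfloor + 1$ complex parameters $c_t$.

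Next I would translate the hypotheses into linear conditions on $(c_0, \dots, c_{M-1})$. Since $W_8$ and $W_{12}$ are monic in $x$, the normalization $W_0 = 1$ reads $\sum_t c_t = 1$; by $4$-divisibility the minimum distance condition $W_1 = \cdots = W_{d-1} = 0$ reduces to the $D-1$ nontrivial vanishings $[Y^j]W(1,y) = 0$ for $j = 1, \dots, D-1$, where $Y := y^4$ and $D := d/4$. Writing $u := W_8(1,y) = 1 + 14Y + Y^2$, $v := W_{12}(1,y) = 1 - 33Y - 33Y^2 + Y^3$, and $\phi_t(Y) := u^{m-3t} v^{2t+1}$, the bound $d \leq 4M$ becomes the claim that no vector $c \in \C^M$ can satisfy $\sum_t c_t\, \phi_t \equiv 1 \pmod{Y^{M+1}}$.

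The technical heart of the argument, and the main obstacle, is a linear independence result for $\{\phi_t\}$ modulo $Y^M$. I would establish this by expanding $\log \phi_t = (m-3t)\log u + (2t+1)\log v$: the coefficient $[Y]\log \phi_t = (14m - 33) - 108\,t$ is linear in $t$ with nonzero $t$-coefficient, and an easy induction using $\phi_t = \exp(\log \phi_t)$ shows that $[Y^j]\phi_t$ is a polynomial in $t$ of degree exactly $j$, with leading coefficient $(-108)^j/j!$. Consequently, the $M \times M$ matrix $A_{j,t} := [Y^j]\phi_t$ (for $0 \leq j, t \leq M-1$) factors as an upper-triangular change-of-basis with nonzero diagonal times the Vandermonde matrix on $\{0, 1, \dots, M-1\}$, and is therefore invertible. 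Hence the $M$ conditions $[Y^j]W = \delta_{j,0}$ for $0 \leq j \leq M-1$ uniquely determine a candidate $W^* = \sum c_t^* \phi_t$.

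To close the argument, one must verify that $[Y^M]W^* \neq 0$, which is the second and most computational piece of the obstacle: were $[Y^M]W^*$ to vanish, the extremal $W^*$ would have minimum distance exceeding $4M$, violating the theorem. This non-vanishing can be checked either by an explicit power-series expansion of $u^{m-3t}v^{2t+1}$ combined with a Schur-complement manipulation of the Vandermonde factorization above, or by invoking the zeta-polynomial functional equation $P(T) = -P(1/2T)\,2^g T^{2g}$ with $g = n/2 + 1 - d$, which over-constrains $P(T)$ for $d > 4M$. Once both ingredients are in place, we conclude $D \leq M$, i.e., $d \leq 4\lfloor(n-12)/24\rfloor + 4$, as claimed.
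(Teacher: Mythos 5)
Your reduction is set up correctly, and the first half of your argument is sound. The anti-invariance under the MacWilliams matrix does force $W=\sum_{t=0}^{M-1}c_t\,W_8^{\,m-3t}W_{12}^{\,2t+1}$ with $m=(n-12)/8$ and $M=\lfloor (n-12)/24\rfloor+1$ (this is exactly the normal form the paper records right after the Lemma), and your proof that the $M\times M$ matrix $A_{j,t}=[Y^j]\phi_t$ is invertible --- showing $[Y^j]\phi_t$ is a polynomial in $t$ of exact degree $j$ with leading coefficient $(-108)^j/j!$, hence $A$ factors as a triangular matrix with nonzero diagonal times a Vandermonde matrix on $\{0,1,\dots,M-1\}$ --- is correct. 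This establishes existence and uniqueness of the extremal candidate $W^*$, i.e.\ of coefficients $c^*$ with $[Y^j]W^*=\delta_{j,0}$ for $0\le j\le M-1$. (For what it is worth, the survey itself offers no proof of this theorem; it defers to Chinen's papers, which follow the same blueprint you chose.)

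However, the theorem is not proved at that point: the step you defer, $[Y^M]W^*\neq 0$, \emph{is} the content of the theorem, and neither of the patches you sketch closes it. For the first, a Schur-complement or Cramer manipulation only rewrites $[Y^M]W^*$ as a ratio of determinants; the new row $([Y^M]\phi_t)_t$ is a degree-$M$ polynomial evaluated at only $M$ points $t=0,\dots,M-1$, so the triangular-times-Vandermonde factorization that handled rows $0,\dots,M-1$ does not extend to it, and the nonvanishing of that larger determinant is exactly the original problem restated. For the second, the functional equation $P(T)=-2^gT^{2g}P(1/2T)$ holds for \emph{every} formal weight enumerator, extremal or not; all it yields is $P(1/\sqrt{2})=0$ and $\deg P=2g=n+2-2d\ge 1$, i.e.\ roughly $d\le n/2+1$, which is far weaker than the claimed bound $d\approx n/6$, so it cannot ``over-constrain'' anything when $d>4M$. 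The missing idea --- the one used in the Mallows--Sloane argument for Theorem \ref{bound} and adapted by Chinen to formal weight enumerators --- is an explicit formula for the critical coefficient of the extremal enumerator, obtained via B\"urmann--Lagrange inversion, from which one reads off that it is nonzero. Without that computation, your argument proves only that an extremal $W^*$ exists and is unique, not the stated bound on $d$.
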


A formal weight enumerator is called \textit{extremal} if the above holds with equality.

\begin{prob}
Prove that any extremal formal weight enumerator satisfies the Rieman hypothesis, i.e. all roots of the zeta polynomial have absolute value $1/{\sqrt 2}$.
\end{prob}

\section{Algebraic Geometry Codes and their weight distributions}

\subsection{Divisors on algebraic curves}

Let $\X$ be an algebraic curve defined over $\F_q$, $\F=\F_q (\X)$ its function field of rational functions, and $\mathcal P_{\F}$ the set of places of $\X$. An integral linear combination $G:=\sum_{i} m_iQ_i, Q_i\in {\mathcal P}_{\F}$ is called a \textit{divisor}. The set $\supp(G):=\{Q_i~|~m_i\neq 0\}$ is called  \textit{the support} of $G$. If all $a_i\geq 0$, we call $G$  \textit{effective} and write $G\geq 0$. The sum of all integer coefficients $\sum_{i}m_i$ of the divisor $G$ is called {\it the degree} of the divisor $G$. Denote by $D(\F)$ the abelian group of divisors and by $E(\F)$ the semigroup of effective divisors. The set of \textit{principal divisors} $(f)$ for $0\neq f\in \F$ forms a subgroup of $D(\F)$. The quotient $D(\F)/P(\F)$ is \textit{the divisor class group} $C(\F)$, it is finitely generated of the form $C(\F)=\Gamma\times \Z$. The finite torsion subgroup $\Gamma$ consists of degree zero divisor classes. Let $E$ be a degree one divisor. A divisor class $[G]$ may be represented as $([G]-\deg G \cdot [E],\deg G)$. For a divisor $G$, denote 
\[\codeL (G):=\{f\in \F^*~:~(f)+D\geq 0\}\cup \{0\}.\]
the vector space of rational functions with pole divisor bound by $G$. It is well known (Riemann) that 
\[i(G):=\deg G-\dim \codeL(G)+1\geq 0,\]
for any divisor $G$. The number $i(G)$ is called {\it the index of speciality} of $G$. The maximum of these indexes for all divisors is called {\it the genus} $g$ of the curve $\X$. 

\subsection{Algebraic Geometry Codes}

Let $P_1, \ldots, P_n$ be pairwise different rational places, and $D = P_1 + \cdots + P_n$. Let $G=\sum_{i}m_iQ_i$ be a divisor such that $\supp(G) \cap \supp(D) = \emptyset$.  

The following algebraic geometry codes have been introduced by Goppa in the eighties:

\begin{enumerate}

\item $C_\codeL(D,G):=\{ (f(P_1), ... , f(P_n))~|~ f \in \codeL(G) \} \subset \F_q^n$.

\item  $C_{\Omega}(D,G):=\{(\res{P_1}{\omega},...,\res{P_n}{\omega})~|~ \omega \in \Omega_F(G-D)\}\subset \F_q^n.$

\item If $P\notin \supp(D$) and $m$ is an integer, $C_\codeL(D, mP)$ is called {\it one point code of level $m$}.

\end{enumerate}

\noindent Consider the evaluation map 
\[\varphi: \codeL(G)  \to  \F_q^n,~f \mapsto  (f(P_1), ..., f(P_n)).\]
The function $f\in \codeL(G)$ has poles only on the support of the divisor $G$. But $\supp(G) \cap \supp(D) = \emptyset$, therefore $f(P_i), i=1,2, \dots ,n$ belong to some extension of $\F_q$.  This extension has degree one since $P_1, \ldots, P_n$ are rational places. It follows that $f(P_i)\in \F_q,~\varphi$ is a well-defined map and 
\[C_\codeL (D,G) = \varphi (\codeL (G)) .\] 
One can easily see that  
\[\ker~\varphi=\{ f \in \codeL(G):~ f(P_i)=0,~i=1,2, \dots ,n\}=\codeL(G-D),\]
therefore $$\text{dim}~\codeL(D,G)=\text{dim}~\codeL(G)-\text{dim}~\codeL(G-D).$$ It follows that $C_\codeL(D,G)$ is a linear $[n,k,d]$ code of dimension
\[ k  =  \dim ~\codeL(G) - \dim ~\codeL(G-D).\]

\begin{prop}
$C_\codeL(D,G)^{\perp} = C_{\Omega}(D,G)$ under the standard Euclidean pairing in $\F_q^n$.  
There exists a Weil differential $\eta$ such that $C_{\Omega}(D,G)=C_\codeL(D,D-G+(\eta))$.
\end{prop}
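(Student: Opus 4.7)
The plan splits into two parts: the duality $C_\codeL(D,G)^\perp = C_\Omega(D,G)$, and the identification $C_\Omega(D,G) = C_\codeL(D, D - G + (\eta))$ for a suitable Weil differential $\eta$.

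For the duality, I would first establish the inclusion $C_\Omega(D,G) \subseteq C_\codeL(D,G)^\perp$ via the residue theorem. Given $f \in \codeL(G)$ and $\omega \in \Omega_F(G-D)$, I form the product $f\omega$ and observe
\[
(f\omega) = (f) + (\omega) \geq -G + (G - D) = -D,
\]
so $f\omega$ has at worst simple poles at $P_1, \dots, P_n$ and is regular elsewhere. Since each $P_i$ is rational and $f$ is regular there, a local computation gives $\res{P_i}{f\omega} = f(P_i)\,\res{P_i}{\omega}$. The global residue theorem $\sum_{P} \res{P}{f\omega} = 0$ then becomes exactly the Euclidean orthogonality relation $\sum_{i=1}^n f(P_i)\,\res{P_i}{\omega} = 0$, yielding the inclusion. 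To upgrade to equality, I would compare dimensions: the kernel description already gives $\dim C_\codeL(D,G) = \dim \codeL(G) - \dim \codeL(G - D)$, and an analogous argument (the residue map on $\Omega_F(G - D)$ has kernel $\Omega_F(G)$) gives $\dim C_\Omega(D, G) = \dim \Omega_F(G-D) - \dim \Omega_F(G)$. Translating via $\dim \Omega_F(A) = \dim \codeL(W - A)$ and Riemann--Roch with $\deg D = n$, one obtains $\dim C_\codeL(D,G) + \dim C_\Omega(D,G) = n$, forcing equality.

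For the second statement, the plan is to produce a Weil differential $\eta$ with $\nu_{P_i}(\eta) = -1$ and $\res{P_i}{\eta} = 1$ for every $i = 1, \dots, n$. Given such an $\eta$, I would exploit the fact that $\Omega_F$ is a one-dimensional $F$-module: every $\omega \in \Omega_F$ can be uniquely written $\omega = f\eta$ with $f \in F$. The condition $\omega \in \Omega_F(G - D)$ becomes $(f) + (\eta) \geq G - D$, i.e., $f \in \codeL(D - G + (\eta))$. Moreover, since $\nu_{P_i}(\eta) = -1$ and $\res{P_i}{\eta} = 1$, the local residue formula gives $\res{P_i}{\omega} = f(P_i)\,\res{P_i}{\eta} = f(P_i)$. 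This shows that the residue map on $\Omega_F(G-D)$ is identified with the evaluation map on $\codeL(D - G + (\eta))$, so the two codes coincide.

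The main obstacle is the existence of the differential $\eta$. Since $\sum_{P} \res{P}{\eta} = 0$ forces $\eta$ to acquire additional poles outside $\supp(D)$ to balance the $n$ prescribed residues equal to $1$, one cannot simply look inside $\Omega_F(-D)$. The plan is to invoke a strong-approximation argument for differentials: fix any nonzero $\omega_0 \in \Omega_F$, consider the local expansions of $\omega_0$ at $P_1, \dots, P_n$, and produce $f \in F$ (via Riemann--Roch applied to a sufficiently high multiple of an auxiliary divisor disjoint from $D$) so that $f\omega_0$ realizes the prescribed principal parts at each $P_i$. Carrying out this approximation, and checking that no residue obstruction fails because the extra poles absorb any imbalance, is the delicate technical ingredient on which the rest of the argument pivots.
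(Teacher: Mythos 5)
Your proof is correct and takes essentially the same route as the paper, which gives no argument of its own but defers to \cite[Prop II.2.10]{st}: the residue-theorem inclusion $C_{\Omega}(D,G)\subseteq C_\codeL(D,G)^{\perp}$ upgraded to equality by a Riemann--Roch dimension count, followed by the construction of a Weil differential $\eta$ with $\nu_{P_i}(\eta)=-1$ and $\res{P_i}{\eta}=1$ and the identification $\omega=f\eta$. One reassurance on the step you flag as delicate: it is routine weak approximation, not a genuinely subtle point, because residues are prescribed only at the finitely many places $P_1,\dots,P_n$, so $\eta$ is free to have whatever poles and residues it needs elsewhere and no residue obstruction can arise (concretely: take any $\omega_0\neq 0$, multiply by $z$ with $\nu_{P_i}(z)=-\nu_{P_i}(\omega_0)-1$ to get simple poles, then multiply by $y$ with $y(P_i)$ equal to the inverse of the resulting residues). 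Note also that your $\eta$ automatically gives $\nu_{P_i}\bigl(D-G+(\eta)\bigr)=1-0-1=0$, so the divisor $D-G+(\eta)$ has support disjoint from $\supp(D)$ and the code $C_\codeL(D,D-G+(\eta))$ is well defined---this is worth stating explicitly, since it is exactly why the normalization $\nu_{P_i}(\eta)=-1$ is needed.
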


The reader can check the details of the proof at \cite[Prop II.2.10]{st}. Now we are ready to define various algebraic geometry codes. 

\begin{definition}
A code $C=[n, k, d]$ is called \textbf{weakly algebraic geometry code} (WAG)  of genus $g$ if it can be represented as $C_\codeL (D,G)$ for some curve $\X_g$ of genus $g$. 
If $\deg G <n$ then $C$ is called simply an \textbf{AG code}. The code $C$ is called \textbf{strongly algebraic geometry code} (SAG)   if $2g-2<\deg G < n$.
\end{definition}

It can be shown that very linear code is a WAG code, but not all linear codes are AG codes; see \cite{TVT}.   

\begin{prop}
If $C$ is an $[n,k,d]_q$-AG code then $k = \dim \codeL(G) \geq \deg G + 1 - g$ and $d\geq n-\deg G$, hence $n+1-g\leq k+d\leq n+1.$ If $C$ is a SAG code then $k=\deg G+1-g.$
\end{prop}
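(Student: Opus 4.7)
The plan is to prove the three claims in sequence, using only the evaluation map $\varphi\colon \codeL(G)\to \F_q^n$ and Riemann's inequality for the dimension of $\codeL(G)$.

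First I will establish that $k=\dim \codeL(G)$. Earlier in the text we have the formula $k=\dim\codeL(G)-\dim\codeL(G-D)$. Since $C$ is an AG code we have $\deg G<n=\deg D$, so $\deg(G-D)<0$. Any rational function $f\in \codeL(G-D)$ different from zero satisfies $\deg(f)+\deg(G-D)\geq 0$, but $\deg(f)=0$, which is impossible. Hence $\codeL(G-D)=\{0\}$ and $k=\dim\codeL(G)$. Riemann's inequality from the divisors subsection then gives $k=\dim\codeL(G)\geq \deg G+1-g$. In the SAG case the additional hypothesis $\deg G>2g-2$ makes the index of speciality $i(G)=0$ (this is the Riemann--Roch theorem beyond the canonical degree), so Riemann's inequality becomes an equality $k=\deg G+1-g$.

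Next I will bound the minimum distance. Let $c=\varphi(f)\in C$ be a nonzero codeword of weight $w=\wt(c)=d$. Because $\codeL(G-D)=\{0\}$, $\varphi$ is injective, hence such an $f\in\codeL(G)\setminus\{0\}$ exists. Let $I\subset\{1,\dots,n\}$ be the set of indices where $f(P_i)=0$, so $|I|=n-w$. Then $f$ vanishes at each $P_i$ for $i\in I$, so the divisor $(f)+G-\sum_{i\in I}P_i$ is effective. Taking degrees and using $\deg(f)=0$ yields $\deg G-(n-w)\geq 0$, i.e.\ $w\geq n-\deg G$. Thus $d\geq n-\deg G$.

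Adding the two inequalities gives $k+d\geq (\deg G+1-g)+(n-\deg G)=n+1-g$, while the upper bound $k+d\leq n+1$ is just the Singleton bound applied to $C$ as a linear code. This completes the chain $n+1-g\leq k+d\leq n+1$.

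The argument is essentially routine once one has the formula for $k$ and Riemann--Roch at hand; the only mildly delicate point is the minimum distance bound, where one must be careful to choose a codeword realized by some $f\in\codeL(G)$ (guaranteed by injectivity of $\varphi$) and to translate ``$f$ vanishes on $I$'' into an effective divisor inequality. No genuinely hard step should arise, so the main work is in writing the argument cleanly; in particular, the SAG case is immediate once one invokes Riemann--Roch properly.
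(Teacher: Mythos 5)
Your proof is correct and takes essentially the same approach as the paper: $\deg(G-D)<0$ forces $\codeL(G-D)=\{0\}$, hence $k=\dim\codeL(G)\geq \deg G+1-g$ by Riemann's inequality, with equality via Riemann--Roch when $\deg G>2g-2$, and Singleton gives the upper end. Your minimum-distance step (vanishing at $n-w$ of the $P_i$ yields the effective divisor $(f)+G-\sum_{i\in I}P_i$, so $\deg G-(n-w)\geq 0$) is just a cleaner phrasing of the paper's zero/pole count with $G=G_1-G_2$; both are the standard Goppa designed-distance argument.
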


\begin{proof}
If $C=C_\codeL(D,G)$ is WAG, then $\deg(G-D)<0,$ hence $\dim \codeL(G-D)=0$. It follows that $ \varphi: \codeL(G) \rightarrow C_\codeL(D,G)$ is injective and $C_\codeL(D,G)$ has dimension 
\[k = \dim \codeL(G) \geq \deg G + 1 - g.\]
If $C$ is SAG, then $2g-2 < \deg G < n$ and by the Riemann-Roch theorem 
\[ k=\deg G+1-g.\]
Let $G=G_1-G_2$ with $G_1\geq 0,~G_2\geq 0$.  A non-zero function $f\in \codeL(G)$ has at most $\deg (G_1)$ poles and at least $\deg (G_2)$ zeros on $\supp (G)$. It follows that the number of its zeros outside $\supp~D$ is at most $\deg G_1-\deg G_2=\deg G$. Therefore $\phi(f)$ has at least $n-\deg G>0$ non-zero coordinates, and thus $d\geq n-\deg G.$ The double inequality follows from the Singleton bound. 
\end{proof}

The numbers $k_c=\deg G+1-g$ and $d_c=n-\deg G$ are called respectively \textit{the designed dimension} and \textit{the designed minimum distance} of the AG code $C_\codeL(D,G)$. Notice that $d_c+k_c=n-g+1$, therefore the genus of the curve measures how far the WAG code is from being an MDS code. If $g=0$ then $k+d=n+1$, hence every rational AG code is MDS.

\begin{prop}$C_{\Omega}(D,G)$ is an $[n, k', d']$ code with parameters
\[ k' = i(G) - i(G-D), \quad     d'  \ge  \deg G-(2g-2).\]
If in addition, $\deg G >2g-2$ then $k' \geq n+g-1-\deg G$. If, moreover, $2g-2<\deg G<n$ then $k'=n+g-1-\deg G$.
\end{prop}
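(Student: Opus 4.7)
The plan is to reduce everything to the previous proposition via the duality $C_\Omega(D,G) = C_\codeL(D, D-G+(\eta))$ established earlier, together with a direct analysis of the residue map for the exact dimension formula.

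First I would set $G' := D - G + (\eta)$ and observe that $\deg G' = n - \deg G + (2g - 2)$, since any canonical divisor has degree $2g - 2$. Applying the previous proposition to $C_\codeL(D, G')$ then yields the parameter statements at once. The minimum-distance bound gives $d' \geq n - \deg G' = \deg G - (2g-2)$ (trivially satisfied when $\deg G \leq 2g-2$). Under the hypothesis $\deg G > 2g - 2$ we have $\deg G' < n$, so $C_\codeL(D, G')$ is an AG code and
\[ k' \geq \deg G' + 1 - g = n + g - 1 - \deg G.\]
Under the stronger hypothesis $2g - 2 < \deg G < n$ we also have $\deg G' > 2g - 2$, so $C_\codeL(D, G')$ is SAG and the bound becomes an equality.

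For the exact formula $k' = i(G) - i(G-D)$, I would work directly with the residue map
\[ \psi: \Omega_F(G - D) \longrightarrow \F_q^n, \qquad \omega \mapsto (\res{P_1}{\omega}, \ldots, \res{P_n}{\omega}), \]
whose image is by construction $C_\Omega(D, G)$. The crucial step is identifying $\ker \psi = \Omega_F(G)$: one inclusion is automatic, since any $\omega \in \Omega_F(G)$ satisfies $v_{P_i}(\omega) \geq v_{P_i}(G) = 0$ and is therefore regular at each $P_i$; the reverse inclusion uses that a differential in $\Omega_F(G-D)$ has pole order at most one at each $P_i$, and vanishing residue at such a simple pole forces regularity, so $v_{P_i}(\omega) \geq 0 = v_{P_i}(G)$ for all $i$. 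Rank-nullity then gives $k' = \dim \Omega_F(G-D) - \dim \Omega_F(G)$, and Serre duality together with the Riemann-Roch relation rewrites this difference as $i(G) - i(G-D)$ in the index-of-speciality notation of the excerpt. The main obstacle will be this kernel computation: the local fact that a simple pole with zero residue is actually a regular point is standard in residue calculus, but pairing it carefully with the pole-order bound coming from $\Omega_F(G-D)$ is the one step that does not reduce directly to previous results in the excerpt.
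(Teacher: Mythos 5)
Your proposal is correct. Note first that the paper itself contains no argument for this proposition---it defers entirely to Stichtenoth \cite[Thm II.2.7]{st}---so the natural comparison is with that cited proof. Your treatment of the exact dimension formula coincides with it: the residue map on $\Omega_F(G-D)$ has image $C_{\Omega}(D,G)$ and kernel $\Omega_F(G)$ (a differential with at most a simple pole and vanishing residue is regular), and rank--nullity plus Riemann--Roch gives $k'$. You were right to flag the notational conversion as delicate: the paper defines $i(G)=\deg G-\dim\codeL(G)+1$, which is $g$ minus the usual index of speciality, so that $\dim\Omega_F(A)=g-i(A)$ and the difference comes out as $k'=i(G)-i(G-D)$ exactly as stated; in Stichtenoth's standard convention the same quantity reads $i(G-D)-i(G)$, and a proof that ignored this would end up with the ``wrong'' sign. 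Where you genuinely diverge is in the three bounds: the cited proof gets $d'\geq \deg G-(2g-2)$ directly (a residue vector of weight $m$ yields a nonzero element of $\Omega_F\left(G-P_{i_1}-\cdots-P_{i_m}\right)$, forcing $\deg G-m\leq 2g-2$), and gets the dimension refinements from $\dim\Omega_F(G)=0$ when $\deg G>2g-2$ and $\dim\codeL(G-D)=0$ when $\deg G<n$; you instead import all three from the preceding $C_\codeL$ proposition via the representation $C_{\Omega}(D,G)=C_\codeL(D,D-G+(\eta))$ and the degree count $\deg(D-G+(\eta))=n-\deg G+2g-2$. This is legitimate within the paper's structure, since the duality proposition is stated beforehand and its standard proof (the bijection $f\mapsto f\eta$ between $\codeL(D-G+(\eta))$ and $\Omega_F(G-D)$) does not use the present proposition, so there is no circularity. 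Your route buys economy and makes the bound $\deg G-(2g-2)$ transparent as the designed distance $n-\deg G'$ of the dual representation, at the cost of resting on the existence of the special differential $\eta$ (itself only cited in the paper); the direct route is self-contained given Riemann--Roch, which is presumably why the source organizes the material in that order.
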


Details of the proof can be found at \cite[Thm II.2.7]{st}.

\subsection{Weight distributions of AG codes}
Computing the weight distribution of a linear code is generally a difficult problem. However, the extra structure of AG codes allows he weight distribution problem to be reformulated as one of the distribution of effective divisors over divisor classes, and here the group structure of the divisor classes may be employed. We follow closely Duursma \cite{D6}.

\begin{prop} (\cite{D6})\label{fiber}
Let $C=C_\codeL(D,G)$ be an $[n,k,d]_q$-AG code with weight distribution $(A_0=1,A_d,A_{d+1},\ldots A_n)$. For $i\leq n$, let $a_i=A_{n-i}$ be the number of codewords with precisely $i$ zeros. Then 
\[a_i=(q-1)\#\{H:H\sim G, H\geq 0, \#(\supp(H)\cap \supp(D))=i\}\]
\end{prop}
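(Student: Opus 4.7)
The plan is to identify each nonzero codeword of $C = C_\codeL(D,G)$ with a unique effective divisor in the class of $G$, and to read off the number of zero coordinates from the way that divisor meets $\supp(D)$. The evaluation map $\varphi : \codeL(G) \to \F_q^n$ produces all codewords, and since $C$ is an AG code we have $\deg(G-D) = \deg G - n < 0$, hence $\codeL(G-D) = 0$ and $\varphi$ is injective. Thus nonzero codewords are in bijection with nonzero functions $f \in \codeL(G)$.

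To each nonzero $f \in \codeL(G)$ I would associate the effective divisor $H := (f) + G \geq 0$, which is linearly equivalent to $G$. The key computation is that for any $P_j \in \supp(D)$, the hypothesis $\supp(G) \cap \supp(D) = \emptyset$ gives $v_{P_j}(G) = 0$, so
\[
v_{P_j}(H) = v_{P_j}(f) + v_{P_j}(G) = v_{P_j}(f).
\]
Therefore $P_j \in \supp(H)$ if and only if $f(P_j) = 0$, and so the number of zero coordinates of the codeword $\varphi(f)$ is exactly $\#(\supp(H) \cap \supp(D))$.

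It remains to track the fibers of the map $f \mapsto H$. Two nonzero functions $f_1, f_2 \in \codeL(G)$ produce the same divisor $H$ precisely when $(f_1) = (f_2)$, i.e.\ when $f_1/f_2$ is a nonzero constant; so each effective $H \sim G$ with $H \geq 0$ is the image of exactly $q-1$ nonzero functions in $\codeL(G)$ (or of none, but every such $H$ does arise, since choosing any $f_0$ with $(f_0) = H - G$ lands in $\codeL(G)$). Combining this $(q{-}1)$-to-one correspondence with the previous paragraph gives the formula for $a_i$ for nonzero codewords. The one subtlety to flag is that the case $i = n$ (the zero codeword) is handled separately: no nonzero $f$ vanishes on all of $D$ by injectivity of $\varphi$, so the counting formula naturally enumerates nonzero codewords, and any bookkeeping of the all-zero codeword must be done by hand. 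I expect the main obstacle in writing this cleanly to be phrasing the $(q-1)$-to-one correspondence crisply and verifying surjectivity of $f \mapsto H$ onto the set of effective divisors linearly equivalent to $G$; once that is in place the rest is bookkeeping on supports.
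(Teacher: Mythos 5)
Your proof is correct and takes essentially the same route as the paper's: injectivity of the evaluation map $\varphi$, the correspondence $f \mapsto H=(f)+G$ under which zero coordinates of $\varphi(f)$ match places of $\supp(D)$ lying in $\supp(H)$, and the $(q-1)$-to-one fiber count coming from functions determined up to nonzero scalar. You merely fill in details the paper leaves implicit (the valuation computation, surjectivity onto effective divisors in the class of $G$, and the $i=n$ edge case for the zero codeword), so no further comparison is needed.
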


\begin{proof}
The evaluation map 
\[\varphi: \codeL(G)  \to  C_\codeL(D,G),~f \mapsto  (f(P_1), ..., f(P_n)).\]
is bijective since $C=C_\codeL(D,G)$ is an AG code. As $G$ and $D$ have disjoint support, the codeword $(f(P_1), ..., f(P_n))$ has $i$ zeros iff the support of $H=(f)+G$ has $i$ places from $\supp~D$. The function $f\in \codeL(G)$, hence the codeword $(f(P_1), ..., f(P_n))$, is determined by the divisor $H=(f)+G$ up to a non-zero scalar.
\end{proof}

It follows that \textit{to determine the weight distribution of an AG code $C_\codeL(D,G)$, one must study the effective divisors in the class of $G$ that have a precise number of places from $\supp~D$}.

As in Section 4.1, the divisor class group is finitely generated of rank one, i.e. it is isomorphic to $\Gamma\times \Z$ via the choice of a degree one divisor $E$. Here $\Gamma$ is the finite torsion subgroup of degree zero divisor classes. The divisor $G$ may be identified with $([G]=G-\deg G\cdot E, \deg G\cdot E)\in \Gamma\times \Z$. Let
\[L(T):=\sum_{r\geq 0}\sum_{h\in \Gamma}\#((h+rE)\cap E(\F))X^hT^r\]
be the generating function for the number of effective divisors in the divisor class $h+rE$. It should be considered as an element of $\C[\Gamma][[T]]$, i.e. a power series of $T$ with coefficients in the complex group algebra $\C[\Gamma]$ of the torsion group $\Gamma$. The characteristic functions $\{X^h:h\in \Gamma\}$ form a basis of $\C[\Gamma]$ as a $\C$-vector space.  Another basis of $\C[\Gamma]$ may be obtained using $\hat{\Gamma}$, the characters of $\Gamma$. For $\chi\in \hat{\Gamma}$, define
\[e_{\chi}=\frac{1}{\#(\Gamma)}\sum_{h\in \Gamma}\chi(-h)X^h\]
It is straightforward to show that $X^he_\chi=\chi(h)e_\chi,$ and using basic character theory
\[X^h=\sum_{\chi\in \hat{\Gamma}}\chi(h)e_\chi.\]
It follows that $\{e_\chi:\chi\in \hat{\Gamma}\}$ is a basis of orthogonal idempotents for $\C[\Gamma]$. We get the coordinates of $L(T)$ in these two bases
\[L(T)=\sum_{g\in \Gamma}L(T,h)X^h=\sum_{\chi\in \hat{\Gamma}}L(T,\chi)e_\chi.\]
The coordinate $L(T,h)$ is clear from the definition of $L(T)$. We notice that 
\[\sum_{h\in \Gamma}L(T,h)=Z(T)\]
where $Z(T)$ is the zeta function of the function field $\F$. The other coordinate has the following form 
\[L(T,\chi)=\prod_{P\in \mathcal P(\F)}\frac{1}{1-\chi([P])T^{\deg P}}\in \C[[T]]\]
After the substitution $T=q^{-s}$,  $L(T,\chi)$ is a Dirichlet L-series for the function field $\F$. 

Let $P$ be a rational place. Write it in the form $[P]+E$ with $[P]\in \Gamma$. For a subset $\mathcal P$ of rational places, define
\[\Lambda_{\mathcal P}(T)=\prod_{P\in \mathcal P}(1+X^{[P]}T)\in \C[\Gamma][T]\]
with its coordinate functions
\[\Lambda_{\mathcal P}(T)=\sum_{h\in \Gamma}\Lambda_{\mathcal P}(T,h)X^h=\sum_{\chi\in \hat{\Gamma}}\Lambda_{\mathcal P}(T,\chi)e_\chi.\]
\begin{thm}
The distribution over divisor classes of effective divisors that contain precisely a given number of places from $\mathcal P$ is given by
\[A_{\mathcal P}(U,T)=L(T)\Lambda_{\mathcal P}(U-T)\in\C[\Gamma][U](T)\]
Its coordinate function $A_{\mathcal P}(U,T,h)$ is the generating function for the number of effective divisors in the divisor class $h+(i+j)E$ with precisely $i$ places of $\mathcal P$ in its support.
\end{thm}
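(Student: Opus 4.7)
My plan is to prove the identity by unique factorization of effective divisors into places, yielding an Euler product for $A_{\mathcal P}(U,T)$, and then matching it against the Euler product for $L(T)$. The key observation is that an effective divisor $H$ decomposes uniquely as a formal sum $H=\sum_P n_P P$ over all places $P$ of $\F$, so any generating series weighting $H$ by a multiplicative statistic factors as a product over $P$. Here the statistics are: the class $[H]\in\Gamma$ (multiplicative via $X^h$), the degree (tracked by $T$ with exponent $\deg D-i$, where $i$ is the number of places of $\mathcal P$ in the support), and the cardinality of $\supp(H)\cap\mathcal P$ (tracked by $U$).

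First, I would write $A_{\mathcal P}(U,T)$ explicitly as $\sum_{H\geq 0}X^{[H]}U^{\#(\supp(H)\cap\mathcal P)}T^{\deg H-\#(\supp(H)\cap\mathcal P)}$. Using $X^{[H]}=\prod_P X^{n_P[P]}$ and the additivity of $\deg$ and of $\#(\supp H\cap\mathcal P)$, the sum factors as
\[
A_{\mathcal P}(U,T)=\prod_{P\notin\mathcal P}\Bigl(\sum_{n\geq 0}X^{n[P]}T^{n\deg P}\Bigr)\;\prod_{P\in\mathcal P}\Bigl(1+U\sum_{n\geq 1}X^{n[P]}T^{n-1}\Bigr).
\]
The factor for $P\notin\mathcal P$ equals $1/(1-X^{[P]}T^{\deg P})$, while the factor for $P\in\mathcal P$ (which is rational, so $\deg P=1$) equals $1+U X^{[P]}/(1-X^{[P]}T)$. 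This is the analogue for $A_{\mathcal P}$ of the Euler product for $L(T)$; the only subtle point is the bookkeeping of the exponent $n-1$ in the $\mathcal P$-factor, which comes from the convention that $U$ absorbs one unit of degree per place of $\mathcal P$ in the support.

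Next, I would recognize that $L(T)=\prod_P 1/(1-X^{[P]}T^{\deg P})$, so dividing gives
\[
\frac{A_{\mathcal P}(U,T)}{L(T)}=\prod_{P\in\mathcal P}(1-X^{[P]}T)\Bigl(1+\frac{UX^{[P]}}{1-X^{[P]}T}\Bigr)=\prod_{P\in\mathcal P}\bigl((1-X^{[P]}T)+UX^{[P]}\bigr).
\]
The bracket rearranges to $1+X^{[P]}(U-T)$, so the product is exactly $\Lambda_{\mathcal P}(U-T)$ by the definition preceding the theorem. This establishes the identity $A_{\mathcal P}(U,T)=L(T)\Lambda_{\mathcal P}(U-T)$, and extracting coefficients of $X^h$ gives the interpretation of the coordinate function $A_{\mathcal P}(U,T,h)$ stated in the theorem.

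The main obstacle is really only one of setting the conventions correctly: the shift $U-T$ (rather than, say, $U$ alone or $UT$) encodes a neat cancellation between the pole $1-X^{[P]}T$ of the $\mathcal P$-factor in $L(T)$ and the numerator $1-X^{[P]}T+UX^{[P]}$ obtained from the Euler product for $A_{\mathcal P}$. Once the variables are set up so that $U$ counts distinct places of $\mathcal P$ appearing in the support while $T$ counts the remaining contribution to the degree, the rest of the argument is a direct multiplicative manipulation; everything beyond is formally the same Euler product bookkeeping that underlies the zeta function of $\F$.
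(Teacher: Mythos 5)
Your proof is correct and takes essentially the same route as the paper: both arguments rest on the Euler product $L(T)=\prod_P \left(1-X^{[P]}T^{\deg P}\right)^{-1}$ together with the local-factor identity $\left(1+X^{[P]}(U-T)\right)/\left(1-X^{[P]}T\right)=1+UX^{[P]}/\left(1-X^{[P]}T\right)$ at the rational places $P\in\mathcal P$. The only difference is direction and explicitness: you start from the divisor-counting definition of $A_{\mathcal P}(U,T)$, factor it via unique factorization of effective divisors, and cancel against $L(T)$, while the paper expands $L(T)\Lambda_{\mathcal P}(U-T)$ place by place to read off the same counting interpretation.
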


\begin{proof}
The Euler product decomposition of the distribution $L(T)$ is 
\[L(T)=\prod_{P\in \mathcal P(\F)}\left(\frac{1}{1-X^{[P]}T^{\deg P}}\right)\]
The contribution of a rational place $P\in \mathcal P$ in $A_{\mathcal P}(U,T)$ is 
\[\frac{1+X^{[P]}(U-T)}{1-X^{[P]}T}=\frac{X^{[P]}U}{1-X^{[P]}T}=1+X^{[P]}U+X^{2[P]}UT+X^{3[P]}UT^2+\ldots\]
Hence the variable $U$ keeps track of the precise number of places $P$ that contribute to a term of $A_{\mathcal P}(U,T)$.
\end{proof}

\begin{cor}
The coordinate function $A_{\supp~D}(U,T,[G])$ determines the weight distribution of the AG code $C_\codeL(D,G)$.
\end{cor}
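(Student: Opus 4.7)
The plan is to combine Proposition \ref{fiber} with the immediately preceding theorem, specialized to $\mathcal{P}=\supp D$ at the class coordinate $h=[G]$, and then to match the bookkeeping of the auxiliary variables $U,T$ against the condition $H\sim G$.

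First, I would use Proposition \ref{fiber} to reduce the task: the weight distribution $(A_0,A_d,\ldots,A_n)$ of $C_\codeL(D,G)$ is determined, up to the global factor $(q-1)$, by the counts
\[ N_i \;:=\; \#\{H\in E(\F):H\sim G,\;\#(\supp H\cap\supp D)=i\},\qquad 0\le i\le n, \]
through the formula $A_{n-i}=(q-1)N_i$. So it suffices to extract each $N_i$ from the single coordinate $A_{\supp D}(U,T,[G])$.

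Next, I would apply the preceding theorem with $\mathcal P=\supp D$. By its statement, the coefficient $[U^iT^j]\,A_{\supp D}(U,T,[G])$ equals the number of effective divisors whose class is $[G]+(i+j)E$ and whose support meets $\supp D$ in exactly $i$ places, with $U$ recording the number of such places and $T$ recording the remaining degree. In the rank-one decomposition $C(\F)=\Gamma\times\Z$ determined by the fixed degree-one divisor $E$, a divisor $H$ is linearly equivalent to $G$ precisely when its class is $[G]+\deg G\cdot E$; on the exponent grid this is the single diagonal $i+j=\deg G$. Reading off that diagonal yields
\[ N_i \;=\; [U^iT^{\deg G-i}]\,A_{\supp D}(U,T,[G]), \]
and combining with the previous display gives $A_{n-i}=(q-1)[U^iT^{\deg G-i}]\,A_{\supp D}(U,T,[G])$ for every $i$, which is exactly what the corollary claims.

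The only genuinely delicate point is the identification of the algebraic condition $H\sim G$ with the exponent condition $i+j=\deg G$ in the chosen rank-one splitting of the divisor class group; but this is immediate from how a class is represented through the reference divisor $E$. I do not anticipate any real obstacle beyond this routine bookkeeping: the corollary is a direct specialization of the preceding theorem once Proposition \ref{fiber} has recast the weight-distribution problem in the language of effective divisors.
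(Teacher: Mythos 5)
Your proof is correct and follows exactly the route the paper intends: the corollary is stated as an immediate consequence of the preceding theorem (with $\mathcal{P}=\supp D$, $h=[G]$) combined with Proposition \ref{fiber}, and your identification of the linear-equivalence condition $H\sim G$ with the diagonal $i+j=\deg G$ in the rank-one splitting $C(\F)=\Gamma\times\Z$ is precisely the bookkeeping that makes this specialization work. The extraction $A_{n-i}=(q-1)\,[U^iT^{\deg G-i}]\,A_{\supp D}(U,T,[G])$ is exactly the content the paper leaves implicit.
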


One computes the coordinate functions $A_{\supp~D}(U,T,\chi)$ and then applies an inverse Fourier transform to recover the functions $A_{\supp~D}(U,T,g)$. If the zeta function of the function field $\F$ is known, then estimates of the weight distribution of an AG code may be obtained via their average.

\begin{thm}
If the zeta function of the function field $\F$ is $Z(T)$ then the average weight distribution 
\[\frac{1}{\#(\Gamma)}\sum_{h}A_{\supp D}(U,T,h)=\frac{1}{\#(\Gamma)}Z(T)(1+U-T)^n.\]
\end{thm}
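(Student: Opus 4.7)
The plan is to interpret the sum $\sum_h A_{\supp D}(U,T,h)$ as the image of $A_{\supp D}(U,T)\in \C[\Gamma][U][[T]]$ under the augmentation (equivalently, the trivial character) of the group algebra, and then exploit the product structure $A_{\supp D}(U,T)=L(T)\Lambda_{\supp D}(U-T)$ coming from the preceding theorem. Since the $\frac{1}{\#(\Gamma)}$ appears on both sides, the content is really to establish the identity
\[
\sum_{h\in \Gamma} A_{\supp D}(U,T,h)= Z(T)(1+U-T)^n.
\]

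First, I would introduce the augmentation map $\epsilon:\C[\Gamma]\to \C$, $\epsilon(X^h)=1$ for all $h\in \Gamma$, and extend it coefficient-wise to a ring homomorphism $\C[\Gamma][U][[T]]\to \C[U][[T]]$. For any element $f=\sum_h f_h X^h$, we have $\epsilon(f)=\sum_h f_h$; equivalently, $\epsilon$ is just evaluation at the trivial character $\chi_0$, so in the idempotent basis it picks out the $e_{\chi_0}$-coordinate. In particular, writing
\[
A_{\supp D}(U,T)=\sum_{h\in\Gamma} A_{\supp D}(U,T,h)\,X^h,
\]
the augmentation gives exactly the left-hand side of the identity we want.

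Next, I would apply $\epsilon$ to both factors of $A_{\supp D}(U,T)=L(T)\Lambda_{\supp D}(U-T)$. The first factor becomes
\[
\epsilon(L(T)) = \sum_{h\in\Gamma} L(T,h) = Z(T),
\]
which was recorded explicitly in the discussion preceding the theorem. For the second factor, using the Euler-type product expression for $\Lambda_{\supp D}$,
\[
\epsilon\bigl(\Lambda_{\supp D}(U-T)\bigr)
=\prod_{P\in\supp D}\bigl(1+\epsilon(X^{[P]})(U-T)\bigr)
=\prod_{P\in\supp D}(1+U-T)=(1+U-T)^n,
\]
because $\#\supp D=n$. Since $\epsilon$ is a ring homomorphism, multiplying these two images yields $Z(T)(1+U-T)^n$, and dividing both sides by $\#(\Gamma)$ recovers the statement.

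The only conceptual subtlety is the bookkeeping: one must be careful that $A_{\supp D}(U,T)$ is being treated as an element of $\C[\Gamma][U][[T]]$ with coefficients in a group algebra, and that summing the $X^h$-coordinates over $h\in\Gamma$ really does correspond to the ring homomorphism that collapses $\Gamma$. Once this is set up, the proof is one line of applying $\epsilon$ to the product decomposition; no zeta identities beyond $\sum_h L(T,h)=Z(T)$ are needed, and the combinatorial input is just $\#\supp D=n$.
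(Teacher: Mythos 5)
Your proof is correct, and there is nothing to compare it against: the paper states this theorem without any proof (it is imported from Duursma's framework), so your argument fills a gap the paper leaves open. What you do — extend the augmentation $\epsilon(X^h)=1$ (equivalently, evaluation at the trivial character) coefficient-wise to a ring homomorphism, apply it to the product decomposition $A_{\supp D}(U,T)=L(T)\Lambda_{\supp D}(U-T)$ from the preceding theorem, and invoke the identity $\sum_{h}L(T,h)=Z(T)$ already recorded in the text together with $\#\supp D=n$ — is precisely the argument the paper's setup intends, and each ingredient you cite (the homomorphism property of $\epsilon$, the coordinate bookkeeping in $\C[\Gamma]$, the count of factors in $\Lambda_{\supp D}$) is justified correctly.
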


MacWilliams' identity for the dual of an AG code may also be intrinsically expressed via the generating function $A(U,T)$. Define an involution on $\C[\Gamma]$ via $\overline{X^h}=X^{-h}$. Let $W$ denote the canonical divisor class on $\X$. 

\begin{prop}
(MacWilliams Identity) The distribution $A(U,T)$ satisfies a functional equation
\[A(U,T)=\overline{A(1/(U-T)+1/qT,1/qT)}X^{[W+D]}(U-T)^n(qT^2)^{g-1}\]
\end{prop}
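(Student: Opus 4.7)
The plan is to use the previous theorem to write $A(U,T)=L(T)\,\Lambda_{\supp D}(U-T)$, and then transform the two factors separately under the substitution $(U,T)\mapsto\bigl(\tfrac{1}{U-T}+\tfrac{1}{qT},\tfrac{1}{qT}\bigr)$ followed by the involution $X^h\mapsto X^{-h}$. Since this substitution sends $U-T$ to $\tfrac{1}{U-T}$, one immediately obtains $A\bigl(\tfrac{1}{U-T}+\tfrac{1}{qT},\tfrac{1}{qT}\bigr)=L\bigl(\tfrac{1}{qT}\bigr)\,\Lambda_{\supp D}\bigl(\tfrac{1}{U-T}\bigr)$, so the work separates cleanly into a purely combinatorial part (transforming $\Lambda$) and an arithmetic part (transforming $L$).

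For the $\Lambda$-factor I would compute directly from its definition as a finite product over $\supp D$. Writing $n=\#\supp D$ and $[D]=\sum_{P\in\supp D}[P]$, a one-line manipulation gives $\overline{\Lambda_{\supp D}\bigl(\tfrac{1}{U-T}\bigr)}=\prod_{P\in\supp D}\bigl(1+X^{-[P]}/(U-T)\bigr)=(U-T)^{-n}\,X^{-[D]}\,\Lambda_{\supp D}(U-T)$. Multiplying the transformed $A$ by the prefactor $X^{[W+D]}(U-T)^{n}(qT^{2})^{g-1}$ appearing in the claim, the $X^{-[D]}$ and $(U-T)^{-n}$ cancel against $X^{[D]}$ and $(U-T)^n$, and what remains to show reduces to the class-refined functional equation $L(T)=\overline{L\bigl(\tfrac{1}{qT}\bigr)}\,X^{[W]}\,(qT^{2})^{g-1}$ for the effective-divisor series itself.

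This last identity is the analytic heart of the argument, and I would prove it from Riemann--Roch. Comparing the coefficient of $X^{h}T^{r}$ on both sides amounts to relating, for each divisor class of degree $r$ with torsion part $h$ represented by a divisor $G$, the number $(q^{\ell(G)}-1)/(q-1)$ of effective representatives of $G$ to the analogous number for the complementary divisor $W-G$, whose degree is $2g-2-r$ and whose class in $\Gamma$ is $[W]-h$. The Riemann--Roch identity $\ell(G)-\ell(W-G)=r+1-g$ translates, after summing on $r$, into exactly the factor $(qT^{2})^{g-1}$, while the involution $[G]\mapsto[W]-[G]$ on the torsion part produces the global twist by $X^{[W]}$, and the complex conjugation flips the sign in the exponent of $X$ to match $\overline{L(1/qT)}$.

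The main obstacle is controlling the degenerate ranges $r<0$ and $r>2g-2$, where Riemann--Roch does not produce the same formula and coefficientwise matching must be replaced by summing geometric tails. The cleanest way to manage this is to lift the classical splitting $Z(T)=P(T)/((1-T)(1-qT))$ to a class-refined identity $L(T)=P_{\Gamma}(T)/((1-T)(1-qT))$ with a polynomial $P_{\Gamma}(T)\in\C[\Gamma][T]$ of degree $2g$, and then to deduce the required functional equation from the self-reciprocal behavior of $P_{\Gamma}$ under $T\mapsto 1/qT$, twisted by the canonical class $[W]$. Once that polynomial-level identity is in hand, the assembly of the four displayed steps above yields the asserted MacWilliams identity for $A(U,T)$.
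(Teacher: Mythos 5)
The first thing to note is that the paper offers no proof of this proposition at all: it is stated bare, imported from Duursma \cite{D6}, so there is no in-paper argument to compare against. Judged on its own merits, your proposal is correct, and it is in fact the standard route (essentially that of the original source). The reduction checks out: the substitution sends $U-T$ to $1/(U-T)$, so the factorization $A(U,T)=L(T)\Lambda_{\supp D}(U-T)$ from the preceding theorem splits the claim into the two pieces you describe; your identity $\overline{\Lambda_{\supp D}(1/(U-T))}=X^{-[D]}(U-T)^{-n}\Lambda_{\supp D}(U-T)$ is right, since each factor satisfies $1+X^{-[P]}(U-T)^{-1}=X^{-[P]}(U-T)^{-1}\bigl(1+X^{[P]}(U-T)\bigr)$ and $\sum_{P\in\supp D}[P]=[D]$ in $\Gamma$; multiplying by the prefactor $X^{[W+D]}(U-T)^n(qT^2)^{g-1}$ then reduces the proposition exactly to the class-refined functional equation $L(T)=\overline{L(1/qT)}\,X^{[W]}(qT^2)^{g-1}$, as you say.

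That key identity is true and your Riemann--Roch argument for it closes. The cleanest check is in the idempotent basis: the identity is equivalent to $L(T,\chi)=\chi([W])(qT^2)^{g-1}L(1/qT,\bar{\chi})$ for every $\chi\in\hat{\Gamma}$. Writing $b_C=(q^{\ell(C)}-1)/(q-1)$ for the number of effective divisors in a class $C$ of degree $r$, Riemann--Roch gives $b_C=q^{r+1-g}b_{W-C}+\frac{q^{r+1-g}-1}{q-1}$. Summing against a nontrivial $\chi$ kills the inhomogeneous term and simultaneously shows $L(T,\chi)$ is a polynomial of degree at most $2g-2$, so the ``degenerate ranges'' you worry about cause no trouble there; only the trivial-character component has a genuine tail, and there the inhomogeneous term is exactly what builds the polar part $1/((1-T)(1-qT))$ --- this is the classical F.~K.~Schmidt functional equation for $Z(T)$. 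Your device of passing to $P_\Gamma(T)=(1-T)(1-qT)L(T)$ is a legitimate way to package this. One small caution for the final write-up: the phrase ``deduce the required functional equation from the self-reciprocal behavior of $P_\Gamma$'' risks sounding circular, since that twisted self-reciprocity $P_\Gamma(T)=q^gT^{2g}\,\overline{P_\Gamma(1/qT)}\,X^{[W]}$ \emph{is} the content to be proven; make explicit that it is the Riemann--Roch coefficient relations together with the explicitly computed polar part that establish it, after which the proposition follows by the multiplication you performed in the first two paragraphs.
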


If $G=h+aE$ and $G'=h'+a'E$ are divisors with $G+G'=W+D$, then weight distributions of $C_\codeL(D,G)$ and $C_\codeL(D,G')$ are given by the coefficients $A_{a-j,j,h}$ and $A_{a'-j,j,h'}$ of $U^{a-j}T^j X^h$ and $U^{a'-j}T^j X^{h'}$ in $A(U,T)$. They are related via the above proposition as in Eq.~\eqref{eq-8}. This can be used for AG codes since the dual of  $C_\codeL(D,G)$ is of the form $C_\codeL(D,G'=W+D-G)$ and $G+G'=W+D$. We get

\begin{prop} \label{dual weights}
The weight distributions of the dual codes $C_\codeL(D,G)$ and $C_\codeL(D,G')$ are determined by the combined set of coefficients $A_{a-j,j,h}$ and $A_{a'-j,j,h'}$ for $j=0,1,\ldots g-1$. The remaining coefficients can be computed via the MacWilliams' identity. 
\end{prop}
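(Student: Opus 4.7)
The proof combines the MacWilliams identity of the preceding proposition with Duursma's zeta polynomial machinery. My plan is to first re-express the anti-diagonal coefficients in terms of the weight distributions of the two codes, then exploit the functional equation on each anti-diagonal, and finally close the argument by a dimension count.

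First, by Proposition~\ref{fiber}, for each $j \in \{0,\ldots,a\}$ the coefficient $A_{a-j,j,h}$ equals, up to the factor $q-1$, the number of codewords of $C_\codeL(D,G)$ of weight $(n-a)+j$. Dually, $A_{a'-j,j,h'}$ with $a' = n+2g-2-a$ and $h' = [W+D]-h$ encodes the weight distribution of $C_\codeL(D,G')$. The claim therefore reduces to showing that the joint weight distribution of the pair $(C,C^\perp)$ is pinned down by the first $g$ entries on each of these two anti-diagonals.

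Second, to put MacWilliams to work I would extract the coefficient of $U^{a-j}T^j$ in the $X^h$-component of both sides of the preceding functional equation. Writing $\tilde U = 1/(U-T) + 1/(qT)$, $\tilde T = 1/(qT)$, and absorbing the prefactor $(U-T)^n(qT^2)^{g-1}$, the $T$-degree balance
\[
2g-2 - i' - j' + (n-a) = 0
\]
forces $i' + j' = a'$; hence only anti-diagonal terms of $A(\tilde U,\tilde T,h')$ contribute. The resulting identity
\[
A_{a-j,j,h} \;=\; q^{g-1-a'}\sum_{j'=0}^{a'} \tilde K_{\,j+n-a}(a'-j')\, A_{a'-j',j',h'},\qquad j=0,1,\ldots,a,
\]
with $\tilde K_k$ a signed Krawtchouk polynomial of degree $k$, is precisely MacWilliams~\eqref{eq-8} for the pair $(C,C^\perp)$ re-indexed in the $A(U,T)$ coefficients.

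Third, the dimension count is supplied by Duursma's zeta polynomial. By the theorem following Definition~\ref{M_n basis}, the joint weight enumerator pair is encoded by a single polynomial $P(T)$ of degree $\gamma(C)+\gamma(C^\perp) \le 2g$ satisfying $P(1)=1$ and $P^\perp(T) = q^g T^{g+g^\perp} P(1/qT)$. Under the $M_{n,d+i}$-basis expansion of $A_C(x,y)$, the entries $A_{a-j,j,h}$ for $j=0,\ldots,g-1$ determine the first $g$ coefficients of $P(T)$; dually, the entries $A_{a'-j,j,h'}$ for $j=0,\ldots,g-1$ determine the first $g$ coefficients of $P^\perp(T)$, which via the functional equation translate into the top $g$ coefficients of $P(T)$. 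Combined with $P(1)=1$, this fixes all $2g+1$ coefficients of $P(T)$, and the remaining $A_{a-j,j,h}$ and $A_{a'-j,j,h'}$ are then recovered from the anti-diagonal MacWilliams relations of the previous paragraph.

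The main obstacle I anticipate is the algebraic bookkeeping in the second step: carrying out the rational substitution cleanly enough that the anti-diagonal reduction $i'+j'=a'$ emerges transparently, and identifying the coefficient $\tilde K_{j+n-a}(a'-j')$ with the classical Krawtchouk form of MacWilliams. A secondary subtlety is the degenerate regime where $\gamma(C)+\gamma(C^\perp)<2g$ (i.e.\ the actual minimum distance on at least one side strictly exceeds the designed distance): there some of the $2g$ input coefficients must vanish automatically or become linearly dependent, and the effective number of free parameters drops to match the strict inequality in the degree bound for $P(T)$.
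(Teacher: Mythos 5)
Your proposal is correct, but it closes the argument by a different mechanism than the paper. The paper gives no standalone proof: the proposition is stated as an immediate consequence of the two preceding facts, namely that (via Proposition~\ref{fiber}) the anti-diagonal coefficients $A_{a-j,j,h}$, $A_{a'-j,j,h'}$ \emph{are} the two weight distributions, and that the functional equation for $A(U,T)$ reproduces the MacWilliams system \eqref{eq-8}, which is then solved exactly as in Case~2 of Theorem~\ref{MacWilliams solutions}: since $d\geq n-a$ and $d^{\perp}\geq n-a'=a-2g+2$, the system is triangular with nonzero binomial pivots, so the equations indexed by decreasing $l$ determine $A_{n-a+g}, A_{n-a+g+1},\ldots$ one at a time from the first $g$ entries on each side. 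Your steps one and two coincide with this implicit argument (and your degree-balance derivation of the anti-diagonal constraint $i'+j'=a'$, as well as the prefactor $q^{g-1-a'}=q^{-\dim C^{\perp}}$, check out). Your third step is the genuine departure: instead of eliminating in the linear system, you count coefficients of Duursma's zeta polynomial, using that the primal data fixes $a_0,\ldots,a_{\gamma(C)-1}$, the dual data fixes the top $\gamma(C^{\perp})$ coefficients through $P^{\perp}(T)=q^{\gamma(C)}T^{\gamma(C)+\gamma(C^{\perp})}P(1/qT)$, and $P(1)=1$ supplies the single missing middle coefficient. This trade is real: the paper's elimination is elementary, self-contained, and uniform in all cases, while your route makes the parameter count conceptually transparent ($g+g+1$ normalization $=2g+1$, with $\gamma(C),\gamma(C^{\perp})\leq g$ explaining why exactly $g$ entries per side suffice) and exhibits the symmetry between $C$ and $C^{\perp}$. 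The degenerate regime you flag is not actually a gap: writing $\delta=d-(n-a)$ and $\delta^{\perp}=d^{\perp}-(a-2g+2)$, one has $\deg P=2g-\delta-\delta^{\perp}$, the known nonzero entries fix the bottom $g-\delta$ and top $g-\delta^{\perp}$ coefficients, and $P(1)=1$ still recovers the unique remaining one; the only point requiring a sentence in a polished write-up is that $d$ and $d^{\perp}$ (hence where the $M_{n,d+i}$-expansions start) are themselves read off from the given entries, with the all-zero case forcing the MDS situation where everything is already known.
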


 \subsubsection{Rational AG codes} Let $C=C_\codeL(D,G)$ be an $[n,k,d]_q$-AG code of genus $g=0$. Since there are $q+1$ rational places on a genus zero curve over $\F_q$, we get $n\leq q+1$. The dimension $k=0$ iff $\deg G<0$, and $k=n$ iff $\deg G>n-2$. For $0\leq \deg G\leq n-2$ we have $k=1+\deg G$ and $d=n-\deg G$. Therefore $k+d=n+1$ hence $C$ is an MDS code. It follows that the weight enumerator of any rational AG code is known explicitly. Rational AG codes are described explicitly in  \cite[Sec 2.3]{st}. They are known as Generalized Reed Solomon codes.

\subsubsection{Elliptic AG codes} 

Let $C=C_\codeL(D,G)$ be an $[n,k,d]_q$-AG code of genus $g=1$. It follows from Weil-Serre estimations for the number of rational points that the maximal length $n$ of elliptic codes is $q+1\leq n\leq q+1+[2\sqrt q]$. Since $n+1-g\leq k+d\leq n+1$, either $d=n-k$, or $d=n-k+1$. 
\noindent 

(a) $[n,k,n-k+1]_q$-elliptic codes. These codes are MDS, and as we have seen before, their weight enumerators are known explicitly. 

(b) $[n,k,n-k]_q$-elliptic codes. The dual of an elliptic code is also an elliptic code, and the dual of an MDS code is also an MDS code. It follows that if $C$ is an elliptic $[n,k,n-k]_q$-code, then $C^{\perp}$ is an elliptic $[n,n-k,k]_q$-code. 
    
\begin{prop}
Let $C$ be a $[n,k,n-k]_q$-elliptic code with weight enumerator 
\[ \displaystyle{A_C(x,y)=x^n+\sum_{i=n-k}^n A_ix^{n-i}y^i}.\]
Let 
\[ \displaystyle{A_{C^\perp}  (x,y)=x^n+\sum_{i=k}^n A^{\perp}_ix^{n-i}y^i},\]
be the weight enumerator of $C^{\perp}$. 
\begin{enumerate}
\item $A_C(x,y)$ is completely determined by $A_{n-k}$ as follows 
\[ A_{n-k+l}={n \choose k-l}\sum_{i=0}^{l-1}(-1)^i{n-k+l \choose i}(q^{l-i}-1)+(-1)^l{k \choose k-l}A_{n-k},\]
for all $0\leq l\leq k$.

\item $A^{\perp}_k=A_{n-k}$, i.e. $C$ and $C^{\perp}$ have the same number of minimum weight codewords.

\item If $n=2k$ then $A_C(x,y)=A_{C^{\perp}}(x,y)$, i.e. $C$ and $C^{\perp}$ are formally self-dual.
\end{enumerate}
\end{prop}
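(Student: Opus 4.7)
The plan is to address (1), (2), (3) in order; part (2) will feed into (3), and (1) is the computational heart of the proof. Observe first that $d=n-k$ and $d^{\perp}=k$ give $d+d^{\perp}=n<n+2$, so by Theorem~\ref{MacWilliams solutions} we are squarely in Case 2 (neither $C$ nor $C^{\perp}$ is MDS, which is also consistent with $\gamma(C)=\gamma(C^{\perp})=1$). That theorem already tells us the weight distribution is determined by $A_{d},A_{d+1},\ldots,A_{n-d^{\perp}}$, and here the index range collapses to the single value $A_{n-k}$, so the existence of some expression of each $A_{n-k+l}$ in terms of $A_{n-k}$ alone is automatic; what remains is to produce the explicit formula.

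For part (1), I will start from Eq.~\eqref{eq-8} and use the vanishing $A_0=1,\; A_i=0$ for $1\le i\le n-k-1$ together with $A_0^{\perp}=1,\; A_i^{\perp}=0$ for $1\le i\le k-1$ to reduce Eq.~\eqref{eq-8} to
\[
\sum_{j=0}^{k-l'}\binom{k-j}{l'}A_{n-k+j}=\binom{n}{l'}\bigl(q^{k-l'}-1\bigr),\qquad l'=0,1,\dots,k-1.
\]
Re-indexing with $l'=k-l$ and isolating the last term of the sum gives the back-recursion
\[
A_{n-k+l}=\binom{n}{k-l}(q^{l}-1)-\sum_{j=0}^{l-1}\binom{k-j}{k-l}A_{n-k+j},\qquad 1\le l\le k.
\]
The closed form in the statement will then be established by induction on $l$: substituting the inductive hypothesis for the $A_{n-k+j}$ with $j<l$ produces a double sum in $i$ and $j$, which I plan to collapse by swapping the order of summation and applying the Chu--Vandermonde identity $\sum_{j}\binom{k-j}{k-l}\binom{n-k+j}{i-?}=\binom{n}{\cdot}$ in the appropriate form. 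The manipulation of these binomial sums is the main (though entirely routine) obstacle; the factor $(-1)^l\binom{k}{k-l}A_{n-k}$ will emerge from the residual $A_{n-k}$ contribution after all $q^{l-i}-1$ pieces have been consolidated.

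For part (2), specialize Eq.~\eqref{eq-8} to $l=k$. The left side collapses (by the vanishing of $A_i$ for $1\le i\le n-k-1$) to $\binom{n}{k}A_0+\binom{k}{k}A_{n-k}=\binom{n}{k}+A_{n-k}$, while the right side, with $q^{k-l}=1$ and $A_i^{\perp}=0$ for $1\le i\le k-1$, collapses to $\binom{n}{n-k}+A_k^{\perp}$; equality of the two and $\binom{n}{k}=\binom{n}{n-k}$ yield $A_k^{\perp}=A_{n-k}$ immediately. For part (3), note that when $n=2k$ the code $C^{\perp}$ is itself a $[2k,k,k]_q$-elliptic code, so the closed formula of part (1) applies verbatim to $C^{\perp}$, producing $A_{k+l}^{\perp}$ in terms of $A_k^{\perp}$ by exactly the expression that produces $A_{k+l}$ in terms of $A_{n-k}=A_k$. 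Combined with part (2), $A_k^{\perp}=A_k$, and so the two weight enumerators agree coefficient by coefficient, giving $A_C(x,y)=A_{C^{\perp}}(x,y)$ and formal self-duality.
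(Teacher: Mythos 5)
Your proposal is correct and takes essentially the same route as the paper: both specialize the MacWilliams relation Eq.~\eqref{eq-8} using the vanishing of $A_1,\dots,A_{n-k-1}$ and $A^{\perp}_1,\dots,A^{\perp}_{k-1}$, obtain part (2) from the $l=k$ equation, derive parts (1) from the resulting triangular back-recursion (whose closed form is established by induction on $l$, a step the paper also leaves to the reader), and deduce (3) from (1) and (2). Note in passing that your right-hand side ${n \choose l}(q^{k-l}-1)$ is the correct one; the paper's displayed recursion carries a typographical extra factor $q^{k-l}$.
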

 
\begin{proof}
Recall from Prop~\ref{dual weights} or Eq.~\eqref{eq-8} the MacWilliams relation 
\begin{equation}
\sum_{i=0}^{n-l}{n-i \choose l}A_i=q^{k-l}\sum_{i=0}^l{n-i \choose n-l}A_i^{\perp},~0\leq l\leq n.
\end{equation}
But $A_0=A^{\perp}_0=1$ and $A_1=\dots =A_{n-k-1}=A^{\perp}_1= \dots =A^{\perp}_{k-1}=0$, which for $l=k$ yield (2). Otherwise, we get get 
\[ \sum_{i=n-k}^{n-l}{n-i \choose l}A_i=q^{k-l}{n \choose l}(q^{k-l}-1),~ l=0,1, \dots ,k-1.\]
If $A_{n-k}$ is known, then the equation with $l=k-1$ yields $A_{n-k+1}$, the equation with $l=k-2$ yields $A_{n-k+2}$ and so on. All numbers $A_j$ can be found. The formula for $A_{n-k+l}$ in (1) is found by using induction. Statement (3) follows easily from (1) and (2).
\end{proof}

Thus, for the computation of the weight distribution of an $[n,k,n-k]_q$-elliptic code it is sufficient to compute the number of minimum weight codewords $A_{n-k}$. We present this number in a few special cases when the elliptic code is of maximal length $n=|\X(\F_q)|$, the general case is much more complicated and will not be addressed here. 

\begin{prop}
Let $\X (\F_q)=\{P_1,P_2,...,P_n\}$, and $D=P_1+P_2+...+P_n$. Let $G$ be a divisor such that $\X (\F_q)\cap \supp~G=\emptyset,~0<\deg G<n,$ and $C_\codeL(D,G)$ is not MDS. If $k=\deg G$ and $n$ are co-prime, then $$A_{n-k}=\frac{q-1}{n}{n \choose k}.$$
\end{prop}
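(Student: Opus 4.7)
The plan is to use Proposition~\ref{fiber} to reduce the count of minimum-weight codewords to a counting problem on the Picard group of $\X$, and then to exploit the group structure of $\X(\F_q)$ for an elliptic curve to show that this count is uniform across divisor classes of degree $k$.

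First I would apply Proposition~\ref{fiber} with $i = k$. Since $\deg G = k$, every effective divisor $H \sim G$ has degree $k$, and the condition $\#(\supp(H) \cap \supp(D)) = k$ forces $H$ to be the sum of $k$ distinct points of $\X(\F_q) = \{P_1, \ldots, P_n\}$ (each with multiplicity one, with no contribution from outside $\supp D$). Hence
\[
A_{n-k} \;=\; (q-1)\cdot\#\left\{\, S\subset \X(\F_q)\ :\ |S|=k,\ \textstyle\sum_{P\in S}P \sim G \,\right\}.
\]

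Next, since $\X$ is elliptic, I would fix a base point $O\in \X(\F_q)$ and use the Abel--Jacobi identification of $\X(\F_q)$ with the degree-zero Picard group $\Gamma$, a finite abelian group of order $n$. Under this identification, the class of $\sum_{P\in S}P$ in $\mathrm{Pic}^k$ corresponds to the group-theoretic sum $\sigma(S) := \sum_{Q\in S} Q$ in $\X(\F_q)$, and the problem reduces to counting the fiber $\sigma^{-1}([G])$ of the map $\sigma:\binom{\X(\F_q)}{k}\to \X(\F_q)$. To prove that every fiber has the same size $\binom{n}{k}/n$, I would introduce the translation action $T_P(S):=\{Q+P : Q\in S\}$ and note that $\sigma(T_P(S)) = \sigma(S) + kP$. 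The hypothesis $\gcd(k,n)=1$ enters twice: first, multiplication by $k$ is a bijection of $\X(\F_q)$, so $\sigma$ restricted to any orbit surjects onto $\X(\F_q)$; second, if $T_P(S)=S$ with $P\neq O$, then $\langle P\rangle$ acts freely on $S$ (translation has no fixed points in a group), forcing $\mathrm{ord}(P)\mid k$, and combined with $\mathrm{ord}(P)\mid n$ this gives $P=O$. Hence the action is free, every orbit has size exactly $n$, and $\sigma$ restricts to a bijection from each orbit onto $\X(\F_q)$. Each fiber of $\sigma$ therefore contains precisely one element from each of the $\binom{n}{k}/n$ orbits, yielding $A_{n-k} = \frac{q-1}{n}\binom{n}{k}$.

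The main obstacle is less combinatorial than structural: the argument depends essentially on the classical facts that for an elliptic curve $\Gamma \cong \X(\F_q)$ as abelian groups of order $n$, and that the Abel--Jacobi map translates sums of rational points into sums in this group. Both must be invoked at the right moment. In higher genus, the divisor class group is much larger than the set of rational points and the translation action on $k$-subsets no longer surjects onto $\mathrm{Pic}^k$, so this uniformity argument is genuinely special to the elliptic case.
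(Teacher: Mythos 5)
Your proof is correct. Note first that the paper does not actually contain a proof of this proposition---it defers to \cite{shokrollahi}---so there is no internal argument to compare against; judged on its own, your argument is complete. The reduction via Proposition~\ref{fiber} is applied correctly: an effective divisor $H\sim G$ of degree $k$ whose support meets $\supp D=\X(\F_q)$ in exactly $k$ places is forced to be a sum of $k$ distinct rational points, each with multiplicity one. The passage to the group $(\X(\F_q),\oplus)\cong\Gamma$ via $P\mapsto [P-O]$ correctly converts the condition $\sum_{P\in S}P\sim G$ into $\sigma(S)=g_0$, where $g_0$ is the unique rational point with $[g_0-O]=[G-k\,O]$ in $\Gamma$. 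Your two uses of $\gcd(k,n)=1$ are exactly where they are needed: freeness of the translation action (if $T_P(S)=S$ then the $\langle P\rangle$-orbits partition $S$ into blocks of size $\mathrm{ord}(P)$, so $\mathrm{ord}(P)\mid k$, while $\mathrm{ord}(P)\mid n$ by Lagrange, forcing $P=O$), and bijectivity of $\sigma$ on each $n$-element orbit (because $P\mapsto kP$ is a bijection of a group of order $n$). Hence each of the $\binom{n}{k}/n$ orbits meets each fiber of $\sigma$ exactly once, giving $A_{n-k}=(q-1)\binom{n}{k}/n$ as claimed; the integrality of $\binom{n}{k}/n$ comes for free from the orbit count. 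One byproduct worth recording: since $\binom{n}{k}\geq n$ for $0<k<n$, every fiber is nonempty, so your argument shows $A_{n-k}>0$ unconditionally; in other words, a full-length elliptic code with $\gcd(\deg G,n)=1$ can never be MDS, and the proposition's ``not MDS'' hypothesis is automatically satisfied rather than genuinely used.
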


For the second result, recall that there is a bijection from the Jacobian of $\X$ onto the set of rational points $\X (\F_q)$.  This gives rise to an algebraic operation $\oplus$ on $\X(\F_q)$ such that $(\X (\F_q),\oplus)$ is an abelian group. Denote the identity of this group by $P$. 
Let $\X (\F_q)=\{P,P_1,P_2,...,P_n\}$, and $D=P_1+P_2+...+P_n$.

\begin{prop}
Let $0< k< n$ and $G:=kP_1$. Assume that $k!$ and $n+1=|\X(\F_q)|$ are coprime, and that $C_\codeL(D,G)$ is not MDS. Then 
\[ A_{n-k}=\frac{q-1}{n+1}\left[{n \choose k}+(-1)^kn\right] \]
\end{prop}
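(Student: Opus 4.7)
The plan is to apply Proposition \ref{fiber} to recast $A_{n-k}$ as a count of effective divisors in a single linear equivalence class, translate that count into a subset-sum problem on the finite abelian group $(\X(\F_q),\oplus)$ of order $n+1$, and evaluate it by Fourier analysis (character sums) on this group.

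First, by Proposition \ref{fiber}, $A_{n-k}=(q-1)N$, where $N$ is the number of effective divisors $H\sim G$ of degree $k$ with $\#(\supp H\cap\supp D)=k$. Since $\deg H=k$ and $H$ must contain $k$ distinct points of $\supp D$ in its support, each with multiplicity at least $1$, these points must exhaust $H$ with multiplicity exactly $1$ and no other points can lie in $\supp H$. Thus $H=Q_1+\cdots+Q_k$ for some $k$-subset $\{Q_1,\ldots,Q_k\}\subseteq\supp D$. Under the bijection between degree-zero divisor classes and $(\X(\F_q),\oplus)$ with origin $P$, the equivalence $H\sim G=kP_1$ becomes the group identity $Q_1\oplus\cdots\oplus Q_k=kP_1$. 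Hence $N$ equals the number of $k$-subsets of $\supp D$ whose group-law sum is $kP_1$.

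Next, I would evaluate $N$ by expanding the generating function $\prod_{Q\in\supp D}(1+tX^Q)\in\C[\X(\F_q)][t]$ in the character basis of $\Gamma:=\X(\F_q)$ and using orthogonality:
\[ N=\frac{1}{n+1}\sum_{\chi\in\widehat{\Gamma}}\chi(-kP_1)\,[t^k]\prod_{Q\in\supp D}(1+t\chi(Q)). \]
The trivial character contributes $[t^k](1+t)^n=\binom{n}{k}$. For a nontrivial $\chi$ of order $d\mid n+1$, the image of $\chi$ on $\Gamma$ hits every $d$th root of unity with multiplicity $(n+1)/d$, so the cyclotomic identity $\prod_{j=0}^{d-1}(1+t\zeta^j)=1-(-t)^d$ (for $\zeta$ primitive of order $d$) gives
\[ \prod_{Q\in\supp D}(1+t\chi(Q))=\frac{(1-(-t)^d)^{(n+1)/d}}{1+t\chi(P_1)}, \]
where the denominator removes the factor for the point omitted from $\supp D$.

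The hypothesis $\gcd(k!,n+1)=1$ is now decisive: every prime dividing $n+1$ exceeds $k$, hence every $d>1$ dividing $n+1$ satisfies $d>k$. The numerator is therefore $1+O(t^{k+1})$, and $[t^k]$ of the whole expression equals $(-\chi(P_1))^k$. Multiplying by $\chi(-kP_1)=\chi(P_1)^{-k}$ gives $(-1)^k$, independent of $\chi$, so the sum over the $n$ nontrivial characters contributes $(-1)^k n$. Combining, $N=\frac{1}{n+1}\bigl[\binom{n}{k}+(-1)^k n\bigr]$, and multiplication by $q-1$ finishes the proof. The main technical step is this character-sum evaluation, combining the cyclotomic factorization with the Taylor-truncation argument driven by the coprimality hypothesis; the ``not MDS'' assumption is what guarantees $n-k$ is the true minimum distance, so that $A_{n-k}$ really counts minimum-weight codewords.
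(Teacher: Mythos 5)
Your overall strategy is the right one, and it is essentially the argument behind the result (the survey itself gives no proof, deferring to \cite{shokrollahi}): use Proposition \ref{fiber} to turn $A_{n-k}$ into a count of $k$-subsets of $\supp D$ with prescribed sum in $(\X(\F_q),\oplus)$, then evaluate that count by characters of the group, the cyclotomic identity $\prod_{j=0}^{d-1}(1+t\zeta^j)=1-(-t)^d$, and the observation that $\gcd(k!,n+1)=1$ forces every divisor $d>1$ of $n+1$ to exceed $k$. All of those steps are sound. The genuine error is in the bookkeeping of which point is missing from $\supp D$: since $\X(\F_q)=\{P,P_1,\ldots,P_n\}$ and $D=P_1+\cdots+P_n$, the omitted point is the identity $P$, not $P_1$, so for a nontrivial character $\chi$ of order $d$ one has
\[ \prod_{Q\in\supp D}\left(1+t\chi(Q)\right)=\frac{\left(1-(-t)^d\right)^{(n+1)/d}}{1+t\chi(P)}=\frac{\left(1-(-t)^d\right)^{(n+1)/d}}{1+t}. \]
Your spurious $\chi(P_1)$ in the denominator is precisely what cancels the factor $\chi(-kP_1)$ and makes every nontrivial character contribute $(-1)^k$; with the correct denominator the contribution is $(-1)^k\chi(-kP_1)$, and summing over the $n$ nontrivial characters gives $(-1)^k\left[(n+1)\delta-1\right]$, where $\delta=1$ if $[k]P_1$ is the group identity and $\delta=0$ otherwise --- not $(-1)^k n$.

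This is fatal for the statement as printed, and your compensating error is what hides it. Under the proposition's own hypothesis $\gcd(k!,n+1)=1$ we get $\gcd(k,n+1)=1$; since the order of $P_1$ divides $n+1$ and $P_1\neq P$, necessarily $[k]P_1\neq P$, i.e.\ $\delta=0$. A correct execution of your method therefore yields
\[ A_{n-k}=\frac{q-1}{n+1}\left[\binom{n}{k}-(-1)^k\right], \]
which differs from the claimed formula by $(-1)^k(q-1)$. The printed formula $\frac{q-1}{n+1}\left[\binom{n}{k}+(-1)^kn\right]$ is exactly what the computation produces when the group-sum target is the identity, i.e.\ for the one-point code $G=kP$ --- which is also the only reading under which $\supp G\cap\supp D=\emptyset$ holds (as Proposition \ref{fiber} requires; for $G=kP_1$ one must first pass to an equivalent divisor) and under which the ``not MDS'' hypothesis does real work (it excludes $k\in\{1,n-1\}$ with $k$ odd, where the count vanishes). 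A concrete check: $y^2=x^3+x+1$ over $\F_7$ has $5$ rational points forming a cyclic group of order $5$; with $k=2$, exactly one pair $\{Q_i,Q_j\}\subset\supp D$ satisfies $Q_i\oplus Q_j=[2]P_1$, so $A_{n-k}=q-1=6$, whereas the printed formula gives $12$ (which is the correct value for $G=2P$, the two pairs summing to the identity). So either the proposition is read with $G=kP$, in which case your argument becomes a complete and correct proof once the omitted-point error is fixed (the target is then $0$, each nontrivial character contributes $(-1)^k$, and their sum is $(-1)^kn$); or, as literally stated with $G=kP_1$, the formula itself is wrong and no proof of it can succeed.
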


The proofs of these last two results can be found in \cite{shokrollahi}.

\subsubsection{Higher genus AG codes} Let $C=C_\codeL(D,G)$ be an $[n,k,d]_q$-AG code of genus $g\geq 2$. Denote $m=:\deg G$. Since  $d\geq n-m$, we can re-write the weight enumerator of $C$ as follows: 
\[A_C(x,y)=x^n+\sum_{i=0}^m A_{n-i}x^iy^{n-i}=x^n+\sum_{l=0}^m B_l(x-y)^l,\]
where 
\[B_l=\sum_{i=n-m}^{n-l}{n-i \choose m}A_i\geq 0.\]
Using MacWilliams identity (Prop~\ref{dual weights}), we get 

\begin{thm}(Theorem \ref{MacWilliams solutions}, \cite{D6},\cite{TVT})
Let $C$ be an AG code of genus $g$. Then for $0\leq l\leq m-2g+1$ we have $$B_l={n \choose l}(q^{m-l-g+1}-1),$$ and for $m-2g+2\leq l\leq m$ we have 
\[\max \left\{0,{n \choose l}(q^{m-l-g+1}-1)\right\}\leq B_l\leq {n \choose l}(q^{\floor{(m-l)/2}+1}-1).\]
\end{thm}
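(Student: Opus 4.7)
The plan is to reinterpret $B_l$ via a double-counting argument in terms of dimensions of Riemann--Roch spaces attached to subdivisors of $D$, then invoke the Riemann--Roch theorem to obtain the equality range and Clifford's theorem to obtain the upper bound in the inequality range.

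First I would rewrite $B_l=\sum_{i}\binom{n-i}{l}A_i$ by interchanging summation. Since each codeword $c$ of weight $i$ has $n-i$ zero coordinates and therefore contributes $\binom{n-i}{l}$ choices of $l$-subset contained in its zero set,
\[
B_l+\binom{n}{l}=\sum_{c\in C}\binom{\#\mathrm{Zero}(c)}{l}=\sum_{S\subset\{1,\dots,n\},\,|S|=l}\#\{c\in C:c|_S=0\},
\]
where the $\binom{n}{l}$ on the left accounts for the contribution of the zero codeword. Because $C=C_{\mathcal L}(D,G)$ is an AG code ($m:=\deg G<n$), we have $\mathcal L(G-D)=0$, so the evaluation map $\varphi\colon\mathcal L(G)\to\mathbb F_q^n$ is injective. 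Consequently $\{c\in C:c|_S=0\}$ is in bijection with $\mathcal L(G-D_S)$, where $D_S:=\sum_{i\in S}P_i$. This yields the key identity
\[
B_l=\sum_{|S|=l}\bigl(q^{\dim\mathcal L(G-D_S)}-1\bigr).
\]

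Second, for $0\le l\le m-2g+1$ the divisor $G-D_S$ has degree $m-l\ge 2g-1$, hence it is nonspecial and Riemann--Roch yields $\dim\mathcal L(G-D_S)=m-l+1-g$ for every $l$-subset $S$. Summing over the $\binom{n}{l}$ subsets produces the exact value $B_l=\binom{n}{l}(q^{m-l-g+1}-1)$. For the range $m-2g+2\le l\le m$ the degree $m-l$ lies in $\{0,1,\dots,2g-2\}$; Riemann's inequality $\dim\mathcal L(G-D_S)\ge\max\{0,m-l+1-g\}$ yields the asserted lower bound summand-by-summand, while Clifford's theorem gives $\dim\mathcal L(G-D_S)\le\lfloor(m-l)/2\rfloor+1$ whenever $\mathcal L(G-D_S)\neq 0$ (the term being zero otherwise), which yields the upper bound after summing.

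The main obstacle is justifying the combinatorial reduction cleanly, in particular the bijection between codewords vanishing on $S$ and elements of $\mathcal L(G-D_S)$, which hinges on the injectivity of $\varphi$ and hence on the AG hypothesis $\deg G<n$. Once this identity is established, the argument is a straightforward assembly of two classical facts from curve theory --- Riemann--Roch for nonspecial divisors and Clifford's theorem for special ones --- combined with summation over $l$-subsets.
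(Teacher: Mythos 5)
Your proof is correct, but it takes a genuinely different route from the paper's. The paper obtains the equality range from the MacWilliams identity: since the dual of $C_{\mathcal L}(D,G)$ is again an AG code with $d^{\perp}\geq m-2g+2$, the MacWilliams relations for $0\leq l\leq m-2g+1$ reduce to $B_l+\binom{n}{l}=q^{k-l}\binom{n}{l}$, and $k=m+1-g$ (Riemann--Roch applied once, to $G$) then gives the stated formula; the bounds in the remaining range are not proved but cited from \cite{D6} and \cite{TVT}. You instead prove everything from the single identity $B_l=\sum_{|S|=l}\bigl(q^{\dim\mathcal L(G-D_S)}-1\bigr)$, obtained by double counting and the injectivity of the evaluation map (which indeed needs exactly the AG hypothesis $\deg G<n$, so that $\mathcal L(G-D)=0$), and then apply curve theory termwise: Riemann--Roch exactly when $\deg(G-D_S)\geq 2g-1$, Riemann's inequality for the lower bound, and Clifford for the upper bound. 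Your approach buys self-containedness and uniformity; in particular, the Clifford-type upper bound cannot follow from MacWilliams relations alone (those are linear constraints among the $A_i$), so geometric input is unavoidable there and you supply it explicitly. It also implicitly corrects two typos in the paper's setup: the definition of $B_l$ should read $B_l=\sum_i\binom{n-i}{l}A_i$ (not $\binom{n-i}{m}$), and the expansion should be $A_C(x,y)=x^n+\sum_l B_l(x-y)^l y^{n-l}$; your double-counting identity confirms this is the intended meaning. One small point of care: Clifford's theorem is often stated only for special divisors, whereas some $G-D_S$ with sections may be nonspecial; but for $0\leq\deg\leq 2g-2$ a nonspecial divisor with $\ell>0$ satisfies $\ell=\deg+1-g\leq\lfloor \deg/2\rfloor+1$ by Riemann--Roch, so the bound holds in all cases (this unconditional form is the one in Stichtenoth), and your argument goes through.
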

  
\noindent Thus, there are $2g-1$ unknown parameters $B_l,~m-2g+2\leq l\leq m$ in the weight enumerator of an AG code of genus $g$.

\begin{prob}
Compute the parameters $B_l,~m-2g+2\leq l\leq m$ in the case of hyper-elliptic or super-elliptic curves.
\end{prob}


\bibliographystyle{plain}

\bibliography{mybib}{}


\end{document}